\newcommand{\ie}{\emph{i.e.,}\xspace}
\newcommand{\eg}{\emph{e.g.,}\xspace}
\newcommand{\eat}[1]{}
\DeclarePairedDelimiter\ceil{\lceil}{\rceil}
\def\EndOfProof{\nolinebreak\ \hfill\rule{1.5mm}{2.7mm}}
\begin{document}
\title[]{Towards Plug-and-Play Visual Graph Query Interfaces: Data-driven Canned Pattern Selection for Large Networks}
\subtitle{[Technical Report]}
\author{Zifeng Yuan $^{\S,\ddag}$\hspace{3ex} Huey Eng Chua$^\ddag$\hspace{3ex} Sourav S Bhowmick$^\ddag$ \hspace{3ex} Zekun Ye $^{\S,\ddag}$\hspace{1ex} Wook-Shin Han$^\spadesuit$\hspace{1ex} Byron Choi$^\dag$\\}
\affiliation{%
\institution{
	$^\ddag$School of Computer Science and Engineering, Nanyang Technological University, Singapore\\
	$^\S$School of Computer Science, Fudan University, China\\
	$^\spadesuit$POSTECH, South Korea\\
	$^\dag$Department of Computer Science, Hong Kong Baptist University, Hong Kong SAR
}
\country{}
}

\email{hechua|assourav@ntu.edu.sg, wshan@dblab.postech.ac.kr, bchoi@comp.hkbu.edu.hk, zfyuan16|zkye16@fudan.edu.cn}


\begin{abstract}
\textit{Canned patterns} (\ie small subgraph patterns) in visual graph query interfaces (a.k.a \textsc{gui}) facilitate efficient query formulation by enabling \textit{pattern-at-a-time} construction mode. However, existing \textsc{gui}s for querying large networks either do not expose any canned patterns or if they do then they are typically selected manually based on domain knowledge. Unfortunately, manual generation of canned patterns is not only labor intensive but may also lack diversity for supporting efficient visual formulation of a wide range of subgraph queries. In this paper, we present a novel generic and extensible framework called \textsc{Tattoo} that takes a data-driven approach to \textit{automatically} selecting canned patterns for a \textsc{gui} from large networks\eat{, thereby taking a concrete step towards the vision of data-driven construction of visual query interfaces for network data}. Specifically, it first \textit{decomposes} the underlying network into \textit{truss-infested} and \textit{truss-oblivious} regions. Then \textit{candidate} canned patterns capturing different real-world query topologies are generated from these regions. Canned patterns based on a user-specified \textit{plug} are then \textit{selected} for the \textsc{gui} from these candidates by maximizing \textit{coverage} and \textit{diversity}, and by minimizing the \textit{cognitive load} of the pattern set. Experimental studies with real-world datasets demonstrate the benefits of \textsc{Tattoo}. Importantly, this work takes a concrete step towards realizing \textit{plug-and-play} visual graph query interfaces for large networks.
\end{abstract}

%
%



\maketitle

\eat{{\fontsize{8pt}{8pt} \selectfont
\textbf{ACM Reference Format:}\\
Kai Huang, Huey Eng Chua, Sourav S Bhowmick, Byron Choi, Shuigeng Zhou. 2021. MIDAS: Towards Efficient and Effective Maintenance of Canned Patterns in Visual Graph Query Interfaces. In Proceedings of the 2021 International Conference on Management of Data (SIGMOD '21), June 20--June 25, 2021, Virtual Event, China. ACM, New York, NY, USA, 13 pages. https://doi.org/10.1145/3448016.3457251
}}

\vspace{-1.5ex}
\section{Introduction}\label{sec:intro}
A recent survey~\cite{SM+17} revealed that graph query languages and usability are considered as some of the top challenges for graph processing. A common starting point for addressing these challenges is the deployment of a visual query interface (a.k.a \textsc{gui}) that can enable an end user to draw a graph query interactively by utilizing \emph{direct-manipulation}~\cite{SP} and visualize the result matches effectively~\cite{PH+17,bloom}. A useful component of such a \textsc{gui} is a panel containing a set of \textit{canned patterns} (\ie small subgraphs) which is beneficial to visual querying in at least three possible ways~\cite{bhowmick2016,catapult,midas}. First, it can potentially decrease the time taken to visually construct a query by facilitating  \textit{pattern-at-a-time} query mode (\ie construct multiple nodes and edges by performing a \emph{single} click-and-drag action) in lieu of \textit{edge-at-a-time} mode. Second, it can facilitate ``bottom-up'' search when a user does not have upfront knowledge of what to search for. Third, canned patterns (patterns for brevity) may alleviate user frustration of repeated edge construction  especially for larger queries.

\begin{figure}[t]
	\centering
	\includegraphics[width=0.8\linewidth]{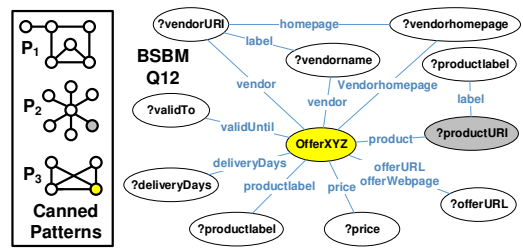}
	\vspace{-1ex}\caption{$Q_{12}$ in BSBM and canned patterns.}\label{fig:query}
	\vspace{-1ex}\end{figure}

\vspace{-1ex}\begin{example}\label{eg:kidneyDisease}
	Consider the real-world subgraph query in Figure~\ref{fig:query} from \textsc{bsbm}~\cite{bsbm} (Query $Q_{12}$). Suppose Wei, a non-programmer, wishes to formulate it using a \textsc{gui} containing a set of canned patterns (a subset of patterns is shown). Specifically, he may drag and drop $p_2$ and $p_3$ on the \textit{Query Canvas}, merge the yellow vertex of $p_3$ with the center vertex of $p_2$, add a vertex and connect it with the grey vertex of $p_2$. Finally, Wei can assign appropriate vertex labels. Observe that it requires five steps to construct the topology. On the other hand, if Wei takes an edge-at-a-time approach to construct the query, it would require 23 steps. Clearly, canned patterns enable more efficient (\ie fewer number of steps or lesser time) formulation of the query.\eat{ Note that canned patterns are \textit{editable} on the \textsc{gui}.}
	
	It is worth noting that Wei may not necessarily have the complete query structure ``in his head'' during query formulation. He may find $p_3$ interesting while browsing the pattern set, which may initiate his bottom-up search leading to the query. Clearly, without the existence of a pattern set, such bottom-up search would be infeasible in practice.
	\EndOfProof\end{example}

\textit{Data-driven} selection of relevant canned patterns for a \textsc{gui} (\eg $p_1$, $p_2$, $p_3$ in Fig.~\ref{fig:query}) is important to facilitate efficient query formulation~\cite{catapult,bhowmick2016}.\eat{ Yet until recently, very little attention has been given to this problem.} In particular, data-driven selection paves the way for \textit{plug-and-play} visual graph query interfaces, which are like a plug-and-play device that can be plugged into any kind of socket (\ie graph data) and used. A plug-and-play \textsc{gui} is dynamically built from a high-level specification of canned pattern properties known as the\textit{ plug} (detailed in Section~\ref{sec:background}). Specifically, given a network $G$ and a plug $b$, the \textsc{gui} is automatically constructed by populating its various components (\eg node/edge attributes, canned patterns) from $G$ without the need for manual \textsc{gui} coding. This enhances portability and maintainability of \textsc{gui}s across different data sources~\cite{bhowmick2016}.\eat{ An example of such plug-and-play \textsc{gui} is the recent \textsc{aurora} system~\cite{kai2020}.}

\eat{Manual selection is not only labour-intensive but the selected patterns may not be \textit{diverse} enough to expedite formulation of a wide range of subgraph queries~\cite{bhowmick2016} or bottom-up exploration of underlying data. Yet until recently, very little attention has been given to this problem. \textsc{Catapult}~\cite{catapult,zhang2015davinci,kai2020} was the first effort that systematically selected canned patterns in a \textit{data-driven} manner. However, \textit{these efforts focused on the collection of small- or medium-sized data graphs instead of large networks}. In particular, as we shall see in Section~\ref{sec:expt}, the clustering-based approach adopted by \textsc{Catapult} is prohibitively expensive for large networks. Furthermore, this approach does not exploit characteristics of real-world subgraph queries for selecting canned patterns.}

In this paper, we present a novel framework called \textsc{Tattoo} (da\underline{T}a-driven c\underline{A}nned pa\underline{T}tern selec\underline{T}i\underline{O}n from netw\underline{O}rks) that takes a data-driven approach to the \textit{canned pattern selection} (\textsc{cps}) problem for \emph{large networks}. Given a network $G$, a user-specified plug specification $b$ which is the number of canned patterns to display and their minimum and maximum permissible sizes\eat{(\ie minimum and maximum permissible sizes of canned patterns and number of patterns to display}, \textsc{Tattoo} automatically selects canned patterns from $G$ that satisfy $b$.

The \textsc{cps} problem is technically challenging. First, it is a NP-hard problem~\cite{catapult}. Second, the availability of query logs can facilitate the selection of relevant patterns as they provide rich information of past queries. In practice, however, such information is often publicly unavailable (\eg none of the networks in \textsc{snap}~\cite{snap} reveal query logs) due to privacy and legal reasons. Hence, we cannot realistically assume the availability of query logs to select patterns. Furthermore, users may demand a \textsc{gui} \textit{prior} to querying a network. Hence, there may not exist any query log prior to the creation of a visual query interface.  Third, it is paramount to find \textit{unlabeled} patterns (\eg Example~\ref{eg:kidneyDisease}) that are potentially useful for query formulation (detailed in Sec.~\ref{sec:problem}). However, the selection of such patterns is challenging as there is an exponential number of them in a large network. Fourth, these selected patterns should not only be \textit{topologically diverse} so that they are useful for a wide variety of queries but they should also impose low \textit{cognitive load}  (\ie mental load to visually interpret a pattern's edge relationships to determine if it is useful for a query) on users. In particular, large graphs overload the human perception and cognitive systems, resulting in poor performance of tasks such as identifying edge relationships~\cite{huang2009,YA+18}.

At this point, a keen reader may wonder why building blocks of real-world networks (\eg paths of length $k$, triangle patterns)~\cite{wang2003,milo2002} cannot be simply utilized as canned patterns since they have high coverage and low cognitive load. However, it may take a larger number of steps to formulate a variety of queries using these patterns due to their small size. For instance, reconsider Example~\ref{eg:kidneyDisease}. Suppose the pattern set consists of an edge, a path of length 2 (\ie 2-path), a triangle, and a rectangle. In this case, $Q_{12}$ may be formulated by dragging and dropping the rectangle once, the 2-path three times, construction of a single node and two edges, along with three node mergers. That is, it takes 10 steps altogether, which is more than using the patterns in Figure~\ref{fig:query}. Furthermore, these patterns do not expose ``interesting'' substructures to facilitate bottom-up search as they occur in almost all large real-world networks.\eat{ For the same reason, \textit{graphlets}~\cite{hocevar2014} cannot be utilized as state-of-the-art graphlet discovery techniques generally limit the size to 5 vertices\eat{\footnote{\scriptsize There are 2 3-node graphlets, 6 4-node graphlets and 21 5-node graphlets.}}~\cite{hocevar2014,ahmed2015}\eat{ as the number of graphlets increases exponentially with the number of vertices}. \eat{ In fact, if we wish to have a canned pattern with 6 vertices (\eg $P_1$ in Example~\ref{eg:kidneyDisease}), then we have to select relevant ones from 112 6-vertex graphlets.} Similarly, as shown in Section~\ref{sec:expt}, \textit{random} selection of subgraphs from $G$ leads to poorer quality canned patterns. Another possible alternative is to utilize frequent subgraphs~\cite{dhiman2016} as they are likely to achieve high coverage. However, they
	may not have low cognitive load or high diversity.}

\begin{figure}[!t]
	\centering
	\includegraphics[width= 0.9\linewidth]{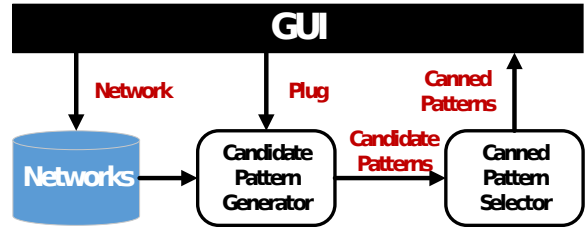}
	\vspace{-3ex}\caption{Overview of \textsc{TATTOO}.}\label{fig:arch}
	\vspace{-2ex} \end{figure}

\textsc{Tattoo} addresses the aforementioned challenges as follows. It exploits a recent analysis of real-world query logs~\cite{Bonifati2017} to \textit{classify} topologies of canned patterns into \textit{categories} that are consistent with the topologies of real-world queries (detailed in Section~\ref{sec:type}). This enables us to reach a middle ground where \textsc{Tattoo} does not need to be restricted by the availability of query logs but yet exploit topological characteristics of real-world queries to guide the selection process. Next, it realizes a novel and efficient \textit{candidate canned pattern generation} technique based on the classified topologies to identify potentially useful patterns. Lastly, canned patterns are \textit{selected} from these candidates for display on the \textsc{gui} based on a novel \textit{pattern set score} that is sensitive to coverage, diversity, and cognitive load of patterns. Specifically, we leverage recent progress in the algorithm community to propose \eat{two selection algorithms that guarantee $\frac{1}{e}$-approximation and $\frac{1}{2}$-approximation~\cite{buchbinder2014,buchbinder2015}, respectively. } a selection algorithm that guarantees $\frac{1}{e}$-approximation~\cite{buchbinder2014}. Figure~\ref{fig:arch} depicts an overview of the \textsc{Tattoo} framework. Experiments with several real-world large networks and users reveal that \textsc{Tattoo} can select canned patterns within few minutes. Importantly, these patterns can reduce the number of steps taken to formulate a subgraph query and query formulation time by up to 9.7X and 18X, respectively, compared to several baseline strategies.\eat{ Interestingly, our study also reveal that across 10 different datasets representing different domains, the number of common patterns is only 32\% when the number of canned patterns on a \textsc{gui} is set to 50. That is, canned patterns overlap only partially across sources and domains, emphasizing that ``one-patternset-fits-all'' approach is ineffective to support efficient visual query formulation.}

In summary, this paper makes the following contributions: (1) We describe \textsc{Tattoo}, an end-to-end canned pattern selection framework for any plug-and-play visual graph query interface for large networks independent of domains and data sources. A video of a plug-and-play interface that incorporates \textsc{Tattoo} can be viewed at \url{https://youtu.be/sL0yHV1eEPw}. (2) We formally introduce the \textit{\textsc{cps} problem for large networks} (Sec.~\ref{sec:problem}) and present a novel categorization of potentially useful canned patterns in Section~\ref{sec:type}. (3) We present an efficient solution to select canned patterns for a \textsc{gui} (Sec.~\ref{sec:candidatePatternGenerator} - \ref{sec:cannedPatternSelector}). Specifically, we present a novel \textit{candidate pattern generation} framework that is grounded on topologies of real-world subgraph queries. Furthermore, for the first time in graph querying literature, we utilize the recent technique in~\cite{buchbinder2014} from the algorithm community to select canned patterns with good theoretical quality guarantees. (4) Using real-world networks, we show the superiority of our proposed framework compared to several baselines (Sec.~\ref{sec:expt}).

\eat{Formal algorithms and selected proofs of theorems and lemmas are provided in~\cite{tech}.}
Proofs of theorems and lemmas are provided in Appendix~\ref{app:proof}.

\vspace{-1ex}
\section{Related Work}\label{sec:relatedWork}
Most germane to our work is our prior efforts on data-driven construction of visual graph query interfaces in~\cite{zhang2015davinci,catapult,kai2020}. The work in~\cite{midas} focuses on the  maintenance of canned patterns for evolving data graphs. Our work differs from these efforts in the following ways. First, we focus on selecting \textit{unlabelled} canned patterns from \emph{large} networks in contrast to labelled patterns from a collection of small- or medium-sized data graphs in~\cite{zhang2015davinci,catapult,kai2020,midas}. Specifically, existing efforts such as \textsc{Catapult}~\cite{catapult} first partitions a collection of data graphs into a set of clusters and summarizes each cluster to a \textit{cluster summary graph} (\textsc{csg}). Then, it selects the canned patterns with the aforementioned characteristics from these \textsc{csg}s using a weighted random walk approach. This clustering-based approach is prohibitively expensive for large networks as detailed in Sec.~\ref{sec:expt}. Second, these approaches do not exploit characteristics of real-world subgraph queries for selecting canned patterns. In contrast, we utilize topological characteristics of real-world queries to guide our solution design.  Third, we present a novel real-world query topology-aware candidate pattern generation technique and a selection technique that provides quality guarantee. No theoretical guarantee is provided in~~\cite{zhang2015davinci,catapult,kai2020} for selecting canned patterns. Lastly, as detailed in Sec.~\ref{sec:cannedPatternSelector}, the computation of \textit{pattern score} to assess the quality of canned patterns is different as the computation of cognitive load and diversity is different from~\cite{catapult} due to the nature of large networks. Furthermore, in this work we provide a theoretical analysis of the pattern score.

Motif discovery techniques~\cite{gurukar2015,milo2002} do not consider diversity and cognitive load. Sizes of these motifs are generally bounded in the range of [3-7] in real applications \cite{gurukar2015,milo2002}\eat{, as opposed to canned patterns which can be larger}. For the same reason, it is difficult to use graphlets \cite{przulj2004,hocevar2014,ahmed2015} as patterns.\eat{ In particular, selection of canned patterns from graphlets generally require graphlet counting which is expensive. Although there are recent approaches \cite{hocevar2014,ahmed2015} developed to improve the efficiency of graphlet counting in large networks, they focus only on graphlets up to size 5 as counting of larger graphlets remains computationally challenging.} Also, frequent subgraphs~\cite{dhiman2016} may not  constitute good canned patterns~\cite{bhowmick2016} and are prohibitively expensive to compute for large networks (detailed in Sec.~\ref{sec:expt}).

\vspace{-1ex}
\section{Background}\label{sec:background}
We first introduce several graph terminologies that we shall be using subsequently. Next, we formally define the notion of \textit{plugs}. Finally, we briefly describe the desirable characteristics of canned patterns as introduced in~\cite{catapult}.

\vspace{-1ex}
\subsection{Terminology}
We denote a graph or network as $G=(V, E)$, where $V$ is a set of nodes/vertices and $E \subseteq V\times V$ is a set of edges. Vertices and edges can have labels as attributes. The \textit{size} of $G$ is defined as $|G|=|E|$. The \textit{degree} of a vertex $v \in V$ is denoted as $deg(v)$.\eat{ We denote the \textit{neighbors} of vertex $v$ as $nb(v)$.} In this paper, we assume that $G$ is an undirected, unweighted graph with labeled vertices.

\eat{Given a graph $G=(V,E)$, a \textit{triangle} is a cycle of length 3 and is denoted as $\triangle_{uvw}$ where $u,v,w\in V$ are the 3 vertices on the cycle. We denote the set of triangles in $G$ as $\triangle_G$ and the \textit{support} of an edge $e=(u,v)\in E$ is defined as $sup(e)=|\{\triangle_{uvw} | \triangle_{uvw}\in\triangle_G\}|$\cite{cohen2008}.}

A \textit{triangle} is a cycle of length 3 in $G$. The \textit{support} of an edge $e=(u,v)\in E$ (denoted by $sup(e)$) is the number of triangles in $G$ containing $u$ and $v$ \cite{voegele2017}. $G_S=(V_S,E_S)$ is a \textit{subgraph} of $G$ (denoted by $G_S\subseteq G$) if $V_S\subseteq V$ and $E_S\subseteq E$. Consider another graph $G^{\prime}=(V^{\prime},E^{\prime})$ where $|V|=|V^{\prime}|$. $G$ and $G^{\prime}$ are \textit{isomorphic} if there exists a bijection $f:V\rightarrow V^{\prime}$ such that $(u,v)\in E$ iff $(f(u),f(v))\in E^{\prime}$. Further, there exists a \textit{subgraph isomorphism} from $G$ to a graph $Q$ if $G$ contains a subgraph $G_S$ that is isomorphic to $Q$. We refer to $G_S$ as the \textit{embedding} of $Q$ in $G$.

Given $G$, the \textit{$k$-truss} of $G$ is the largest subgraph $G^\prime=(V^\prime,E^\prime)$ of $G$ in which every edge $e\in E^\prime$ is contained in at least $k-2$ triangles within the subgraph. A 2-truss is simply $G$ itself. We define the \textit{trussness} of an edge $e$ as $t(e)=\max\{k | e\in E_{T_k}\}$ where $T_k=(V_{T_k},E_{T_k})$ is the $k$-truss in $G$. Further, $k_{max}$ denotes the maximum trussness.

\vspace{-1ex}
\subsection{Plugs}\label{sec:plug}
Recall that data-driven selection of canned patterns facilitates the construction of a plug-and-play visual graph query interface. A \textit{plug} is a high-level specification of the patterns in a \textsc{gui}. Given the specification, \textsc{Tattoo} dynamically generates the canned patterns satisfying it from the underlying network. Formally, it is defined as follows.

\begin{definition}\label{def:plug} \textbf{[Plug]}
	{\em Given a network $G$ and a \textsc{gui} $\mathbb{I}$, a \textbf{plug} $b=(\eta_{min}, \eta_{max},\gamma\eat{,\mathbb{ N}})$ where $\eta_{min} > 2$ (resp. $\eta_{max}$) is the minimum (resp. maximum) size of a pattern, $\gamma > 0$ is the number of patterns to be displayed on $\mathbb{I}$\eat{, and $\mathbb{ N} = \lceil\frac{\gamma}{\eta_{max}-\eta_{min}+1}\rceil$ is the maximum number of allowable patterns for each $k$-sized pattern where $k\in[\eta_{min},\eta_{max}]$}.\/}
\end{definition}
Essentially a plug\footnote{\scriptsize Additional constraint on the distribution of the patterns which is application specific can be included in the plug.} is a collection of attribute-value pairs that specifies the high-level content of a canned pattern panel in a \textsc{gui}. For example, $b=(3, 15, 30\eat{, 3})$ is a plug. Accordingly, the minimum and maximum sizes of patterns in $\mathbb{I}$ are 3 and 15, respectively, and the total number of patterns to be displayed is 30.\eat{ The maximum number of patterns for each valid size is 3.} Observe that there can be multiple plugs for $G$ as well. Similarly, the same plug can be used for different $G$. Hence, different \textsc{gui}s can be constructed by different plug specifications.

A plug should possess the following properties. (a) \textit{Data independence} - A plug should not depend upon a specific network (\ie socket). The specification of plug enables this by not admitting any network-specific information. Observe that this property is important for plug-and-play interfaces as a plug can be used on different network data across different application domains. (b) \textit{Able to select canned patterns with the required specifications} - The resulting canned pattern selection mechanism should select patterns exactly as specified by the plug. 

\vspace{0ex}
\subsection{Characteristics of Canned Patterns}\label{sec:desirableCharacteristics}
Since it is impractical to display a large number of patterns in a visual graph query interface $\mathbb{I}$, the number of patterns should be small and satisfy certain desirable characteristics as introduced in~\cite{catapult}.

\textit{\underline{High coverage}}. A pattern $p \in \mathcal{P}$ \textit{covers} $G$ if $G$ contains a subgraph $s$ that is isomorphic to $p$. Since $p$ may have many embeddings in $G$, the pattern set $\mathcal{P}$ should ideally cover as large portion of $G$ as possible. Then a large number of subgraph queries on $G$ can be constructed by utilizing $\mathcal{P}$.

\textit{\underline{High diversity}}. High coverage of patterns is insufficient to facilitate efficient visual query formulation~\cite{catapult}. In order to make efficient use of the limited display space on $\mathbb{I}$, $\mathcal{P}$ should be \textit{structurally diverse} to serve a variety of queries.\eat{ This ensures that the display space of $\mathbb{I}$ is not wasted by populating with topologically similar and highly redundant patterns.} This also facilitates bottom-up search where a user gets a bird's-eye view of the diverse substructures in $G$.

\textit{\underline{Low cognitive load}}. \textit{Cognitive load} refers to the memory demand or mental effort required to perform a given task \cite{huang2009}. A topologically complex pattern may demand substantial cognitive effort from an end user to  decide if it can aid in her query formulation~\cite{catapult}. Hence, it is desirable for the canned patterns in $\mathcal{P}$ to impose low cognitive load on an end user to make browsing and selecting relevant patterns cognitively efficient during visual query formulation.

\vspace{-1ex}
\section{The CPS Problem}\label{sec:problem}
Given a data graph or network $G=(V,E)$, a visual graph query interface $\mathbb{I}$ and a user-specified plug $b$, the goal of the \textit{canned pattern selection} (\textsc{cps}) problem is to select a set of \textit{unlabelled} patterns $\mathcal{P}$ for display on $\mathbb{I}$, which satisfies the specifications in $b$ and \textit{optimizes} \textit{coverage}, \textit{diversity} and \textit{cognitive load} of $\mathcal{P}$.

Observe that our \textsc{cps} problem differs from~\cite{catapult} in two key ways. First, we focus on a single large network instead of a large collection of small- or medium-sized data graphs. Second, we select \emph{unlabelled} patterns instead of labelled ones. In large networks, a subgraph query may not always contain labels on its vertices or edges. Specifically, unlabelled query graphs are formulated in the \textit{subgraph enumeration} problem~\cite{AFU13} whereas query graphs are labelled in the subgraph matching problem~\cite{SL20}. Hence, by selecting unlabelled patterns \textsc{Tattoo} facilitates visual formulation of both these categories of queries. In particular, one may simply drag-and-drop specific vertex/edge labels from the \textit{Attribute} panel of a \textsc{gui} to add labels to the vertices/edges of a pattern (\eg Example 1).

\eat{We now formally define the \textsc{cps} problem addressed in this paper. We begin by introducing \textit{coverage}, \textit{diversity}, and \textit{cognitive load} of canned patterns. Let $S(p)=\{s_1,\cdots,s_n\}$ be a bag of subgraphs in $G$ isomorphic to $p$ (\ie embeddings of $p$) where vertex labels in $G=(V,E)$ and $p=(V_p,E_p)$ are assumed to be the same and $s_i=(V_i,E_i)$. We say an edge $e\in E_i$ is \textit{covered} by $p$. The \textit{coverage} of $p$ is given as $cov(p)=|\bigcup_{i\in|S(p)|}E_i|/|E|$. Similarly, $cov(\mathcal{P})=|E^{\dag}|/|E|$ where every $e\in E^{\dag}$ is covered by at least one $p\in\mathcal{P}$. Since $|E|$ is constant for a given $G$, \textit{coverage} can be rewritten as $cov(p)=|\bigcup_{i\in|S(p)|}E_i|$ and $cov(\mathcal{P})=|E^{\dag}|$.\eat{The \textit{diversity} $div(p,\mathcal{P}\setminus p)$ of a pattern $p$ with respect to $\mathcal{P}$ can be measured by the minimum number of steps needed to reconstruct $p$ using $\mathcal{P} \setminus p$ (see Section~\ref{sec:cannedPatternSelector}).}
	The \textit{diversity} of $p$ w.r.t to $\mathcal{P}$ is the inverse of \textit{similarity} of $p$.\eat{ Hence, the goal of maximizing diversity of $\mathcal{P}$ is equivalent to minimizing similarity of $\mathcal{P}$.} In particular, the \textit{similarity} of a set of canned patterns $\mathcal{P}$ is denoted as $sim_{(p_i,p_j)\in\mathcal{P}\times\mathcal{P}}(p_i,p_j)$ and is computed using an existing network similarity computation technique (detailed in Section~\ref{sec:cannedPatternSelector}). Finally, we measure \textit{cognitive load} of $p$ (denoted by $cog(p)$) based on the size, density and edge crossings in $p$ (detailed in Section~\ref{sec:cannedPatternSelector}). This follows from the intuition that the cognitive load increases with the density and complexity of a pattern as a user tends to spend more time identifying relationship between different vertices in denser graphs with more edge crossings~\cite{huang2009,huang2010,YA+18}.}

We now formally define the \textsc{cps} problem addressed in this paper. We begin by introducing \textit{coverage}, \textit{diversity}, and \textit{cognitive load} of canned patterns. Let $S(p)=\{s_1,\cdots,s_n\}$ be a bag of subgraphs in $G$ isomorphic to $p$ (\ie embeddings of $p$) where vertex labels in $G=(V,E)$ and $p=(V_p,E_p)$ are assumed to be the same and $s_i=(V_i,E_i)$. We say an edge $e\in E_i$ is \textit{covered} by $p$. The \textit{coverage} of $p$ is given as $cov(p)=|\bigcup_{i\in|S(p)|}E_i|/|E|$. Similarly, $cov(\mathcal{P})=|E^{\dag}|/|E|$ (\ie $f_{cov}(\mathcal{P})$) where every $e\in E^{\dag}$ is covered by at least one $p\in\mathcal{P}$. Since $|E|$ is constant for a given $G$, \textit{coverage} can be rewritten as $cov(p)=|\bigcup_{i\in|S(p)|}E_i|$ and $cov(\mathcal{P})=|E^{\dag}|$.\eat{The \textit{diversity} $div(p,\mathcal{P}\setminus p)$ of a pattern $p$ with respect to $\mathcal{P}$ can be measured by the minimum number of steps needed to reconstruct $p$ using $\mathcal{P} \setminus p$ (see Section~\ref{sec:cannedPatternSelector}).}
The \textit{diversity} of $p$ w.r.t to $\mathcal{P}$ is the inverse of \textit{similarity} of $p$.\eat{ Hence, the goal of maximizing diversity of $\mathcal{P}$ is equivalent to minimizing similarity of $\mathcal{P}$.} In particular, the \textit{similarity} of a set of canned patterns $\mathcal{P}$ is denoted as $f_{sim}(\mathcal{P})=\sum_{(p_i,p_j)\in\mathcal{P}\times\mathcal{P}} sim(p_i,p_j)$ where $sim(p_i,p_j)$ is the similarity between patterns $p_i$ and $p_j$ (detailed in Sec.~\ref{sec:cannedPatternSelector}). Finally, we measure \textit{cognitive load} of $p$ (denoted by $cog(p)$) based on the size, density, and edge crossings in $p$ (detailed in Sec.~\ref{sec:cannedPatternSelector}) as a user tends to spend more time identifying relationships between vertices in denser graphs with more edge crossings~\cite{huang2009,huang2010,YA+18}. The cognitive load of $\mathcal{P}$ (\ie $f_{cog}(\mathcal{P})$) is given as $\sum_{p\in\mathcal{P}}cog(p)$.

\eat{\begin{definition}\label{def:cannedPattern} \textbf{[\textsf{CPS} Problem]}
		{\em Given a network $G$, a \textsc{gui} $\mathbb{I}$, and a plug $b=(\eta_{min}, \eta_{max},\gamma, \mathbb{N})$, the goal of \textbf{canned pattern selection (\textsc{CPS}) problem} is to find a set of unlabelled canned patterns $\mathcal{P}$ from $G$ that satisfies the followings: $\max f_{cov}(\mathcal{P})=cov({\mathcal{P}})$, $\min f_{sim}(\mathcal{P})=\sum_{(p_i,p_j)\in\mathcal{P}\times\mathcal{P}} sim(p_i,p_j)$, and $\min f_{cog}(\mathcal{P})=\sum_{p\in\mathcal{P}}cog(p)$ \textrm{ s.t. }$p \subseteq G$
			where $|\mathcal{P}|=\gamma$.\/}
\end{definition}}

\eat{\begin{definition}\label{def:cannedPattern} \textbf{[\textsf{CPS} Problem]}
		{\em Given a data graph $G$, a \textsc{gui} $\mathbb{I}$, and a plug $b=(\eta_{min}, \eta_{max},\gamma, \mathbb{N})$, the goal of \textbf{canned pattern selection (\textsc{CPS}) problem} is to find a set of unlabelled canned patterns $\mathcal{P}$ from $G$ that satisfies $\max_{f_{cov}}(\mathcal{P})$ and $\min_{i\in\{sim,cog\}}f_i(\mathcal{P})$ where $f_{cov}$ is the coverage of $\mathcal{P}$, $f_{sim}(\mathcal{P})=\sum_{(p_i,p_j)\in\mathcal{P}\times\mathcal{P}} sim(p_i,p_j)$ and $f_{cog}(\mathcal{P})=\sum_{p\in\mathcal{P}}cog(p)$ \textrm{ s.t. }$p\subseteq G$, $|\mathcal{P}|=\gamma$.\/}
\end{definition}}

\vspace{-1ex}\begin{definition}\label{def:cannedPattern} \textbf{[\textsf{CPS} Problem]}
	{\em Given a network $G$, a \textsc{gui} $\mathbb{I}$, and a plug $b=(\eta_{min}, \eta_{max},\gamma\eat{, \mathbb{N}})$, the goal of \textbf{canned pattern selection (\textsc{CPS}) problem} is to find a set of unlabelled canned patterns $\mathcal{P}$ from $G$ that satisfies
		\begin{equation}
			\begin{aligned}
				&\max f_{cov}(\mathcal{P}), -f_{sim}(\mathcal{P}), -f_{cog}(\mathcal{P})\\
				&\textrm{subject to }|\mathcal{P}|=\gamma, \mathcal{P}\in\mathcal{U}
			\end{aligned}
			\vspace{-1ex}\end{equation}
		where $\mathcal{P}$ is the solution; $\mathcal{U}$ is the feasible set of canned pattern sets in $G$; $f_{cov}(\mathcal{P})$, $f_{sim}(\mathcal{P})$ and $f_{cog}(\mathcal{P})$ are the coverage, similarity, and cognitive load of $\mathcal{P}$, respectively.\/}
	\vspace{-1ex}\end{definition}

\textbf{Remark.}
Observe that  \textsc{cps} is a multi-objective optimization problem as our goal is to maximize coverage and diversity (\ie minimize similarity) of canned patterns while minimizing their cognitive load.  Hence, we address it by converting \textsc{cps} into a single-objective optimization problem using a \textit{pattern score} (detailed in Section~\ref{sec:cannedPatternSelector}). Also, observe that we aim to find patterns of size greater than 2 (\ie $\eta_{min} > 2$). Small-size patterns that are basic building blocks of networks~\cite{wang2003,milo2002}  (\eg edge, 2-path, triangle) are provided by \textit{default} for \emph{all} datasets (\ie \textit{default} \textit{patterns}). \eat{\textsc{cps} is a multi-objective optimization problem (\textsc{moop}). In \textsc{moop}, each objective is either minimized or maximized~\cite{oliveira2010}. Also, infinite number of Pareto optimal solutions may exist and it is hard to determine a single suitable solution. The weighted sum method is commonly used to address this \cite{marler2010}. However, since the relative trade-offs between the objectives are not known \textit{a priori}, we shall define a multiplicative score function~\cite{Tofallis2014} (Section~\ref{sec:cannedPatternSelector}, Eq.~\ref{eq:cannedPatternScore}) that is then optimized.}

The \textsc{cps} problem is shown to be NP-hard in~\cite{catapult} by reducing it from the classical maximum coverage problem.

\begin{theorem}\label{thm:cannedPattern}
		\textit{The \textsc{cps} problem is NP-hard.}
\end{theorem}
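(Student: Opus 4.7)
The plan is to prove NP-hardness via a polynomial-time reduction from the classical \emph{Maximum Coverage Problem} (MCP), as the remark preceding the theorem suggests. Given an arbitrary MCP instance $(U,\mathcal{S}=\{S_1,\ldots,S_m\},k)$, I would construct a \textsc{cps} instance whose optimum recovers an optimum of the MCP instance.

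First, I would build a data graph $G$ in which each element $u_j \in U$ corresponds to a distinct edge $e_j$, and then engineer $m$ candidate subgraph patterns $p_1,\ldots,p_m$ of common size $\eta$ such that the edges of $G$ covered by the embeddings of $p_i$ are exactly $\{e_j : u_j \in S_i\}$. A convenient realisation is to attach a small tag-gadget to every $e_j$ and design $p_i$ to match only at gadgets whose elements lie in $S_i$. Set the plug parameters $\eta_{min}=\eta_{max}=\eta$ and $\gamma=k$.

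Second, I would construct the gadgets so that $cog(p_i)$ takes the same value for every $i$ and $sim(p_i,p_j)$ is a common constant for every pair. Under this design, $f_{cog}(\mathcal{P})$ and $f_{sim}(\mathcal{P})$ both become functions of $\gamma$ alone and are therefore neutralised in the optimisation. Any reasonable scalarisation of the multi-objective \textsc{cps} score---whether the weighted sum or the multiplicative pattern score detailed in Section~\ref{sec:cannedPatternSelector}---then reduces to maximising $f_{cov}(\mathcal{P})$. By construction $f_{cov}(\mathcal{P})$ equals the size of $\bigcup_{p_i\in\mathcal{P}} S_i$, so a \textsc{cps} solution of value $V$ corresponds exactly to an MCP solution covering $V$ elements and vice versa. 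Since the construction is polynomial in $|U|+|\mathcal{S}|$, NP-hardness of \textsc{cps} follows from that of MCP.

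The hard part will be the gadget engineering in the first step: producing $m$ pairwise non-isomorphic patterns of equal size that nevertheless share an identical cognitive-load profile and identical pairwise similarity, while still realising arbitrary subsets of the edge universe through their embeddings. The design I have in mind attaches pattern-specific ``fingerprint'' decorations built on a common skeleton to a shared backbone, so that the $p_i$'s differ only at local connection points---perturbations that are small enough to leave size, density, and edge-crossing counts invariant but distinctive enough to make the patterns recognisably different subgraphs of $G$.
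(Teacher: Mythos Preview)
Your reduction and the paper's both start from the Maximum Coverage Problem, so the high-level strategy is the same. The difference is in how the two auxiliary objectives $f_{sim}$ and $f_{cog}$ are handled. You neutralise them \emph{constructively}: engineer all candidate patterns to share identical cognitive-load profiles and pairwise similarities so that, under any scalarisation, only $f_{cov}$ varies. The paper instead neutralises them \emph{by reformulation}: it recasts the multi-objective \textsc{cps} as a constrained single-objective problem with $\max f_{cov}$ as the objective and $\min f_{sim},\min f_{cog}$ as side constraints, then argues that this constrained version is at least as hard as plain maximum coverage because additional constraints cannot make an NP-hard optimisation easier. This lets the paper skip the gadget engineering entirely---it never builds specific patterns with equalised size, density, or crossing numbers---at the cost of a considerably looser, sketchier argument (its ``reduction'' essentially identifies the set system in MCP with the family of all subgraphs of $G$, without exhibiting how an \emph{arbitrary} MCP instance is encoded). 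Your route is more work but yields a genuinely constructive polynomial reduction; the paper's route is shorter but relies on the reader accepting the reformulation step and the informal correspondence.
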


\eat{\begin{proof} (Sketch).
		The \textsc{cps} is a multi-objective optimization problem which can be reformulated as a constrained single-objective optimization problem where the objective function is $\max f_{cov}$ and the constraints are $\min (f_{sim},f_{cog})$. This reformulated problem (\ie $\max f_{cov}$) can be reduced from the maximum coverage problem, which is a classical NP-hard optimization problem \cite{karp1972}. In particular, given a number $k$ and a collection of sets $S$, the maximum coverage problem aims to find a set $S^{\prime}\subset S$ such that $|S^{\prime}|\leq k$ and the number of covered elements is maximized. In \textsc{cps}, the collection of sets $S$ is the set that consists of all possible subgraphs of the graph dataset $D$. The subset $S^{\prime}$ is the canned pattern set and $k$ is the size of the canned pattern set. The number of covered elements corresponds to the number of covered subgraphs in $D$. Note that the reformulated optimization problem is at least as hard as the maximum coverage problem since optimizing the objective may result in solutions that are sub-optimal with regards to additional imposed constraints.
\end{proof}}

\eat{\begin{proof}
		\vspace{1ex}\noindent\textbf{Proof of Theorem~\ref{thm:cannedPattern} (Sketch).} The \textsc{cps} is a multi-objective optimization problem which can be reformulated as a constrained single-objective optimization problem where the objective function is $\max f_{cov}$ and the constraints are $\min (f_{sim},f_{cog})$. This reformulated problem (\ie $\max f_{cov}$) can be reduced from the maximum coverage problem, which is a classical NP-hard problem \cite{karp1972}. In particular, given a number $k$ and a collection of sets $S$, the maximum coverage problem aims to find a set $S^{\prime}\subset S$ such that $|S^{\prime}|\leq k$ and the number of covered elements is maximized. In \textsc{cps}, the collection of sets $S$ is the set that consists of all possible subgraphs of the graph dataset $D$. The subset $S^{\prime}$ is the canned pattern set and $k$ is the size of the canned pattern set. The number of covered elements corresponds to the number of covered subgraphs in $D$. Note that the reformulated optimization problem is at least as hard as the maximum coverage problem since optimizing the objective may result in solutions that are sub-optimal with regards to additional imposed constraints.
\end{proof}}

\vspace{-1ex}
\section{Categories of Canned Patterns}\label{sec:type}
In theory, numerous different patterns can be selected from a given network. Which of these are ``useful'' for subgraph query formulation in practice? In this section, we provide an answer to this question.

\vspace{-1ex}
\subsection{Topologies of Real-world Queries} \label{sec:topreal}
Although basic building blocks of networks~\cite{wang2003,milo2002} are presented as default patterns in our \textsc{gui}, as remarked earlier, they are insufficient as they do not expose to a user more domain-specific and larger patterns in the underlying data. Such larger substructures not only facilitate more efficient construction of subgraph queries but also guide users for bottom-up search by exposing substructures that are network-specific.\eat{ Note that patterns such as wedge, triangle, and rectangle do not effectively trigger bottom-up search as they appear in almost all real-world networks.} However, which topologies of these substructures should be considered for canned patterns?

\begin{figure}[!t]
	\centering
	\includegraphics[width=0.7\linewidth]{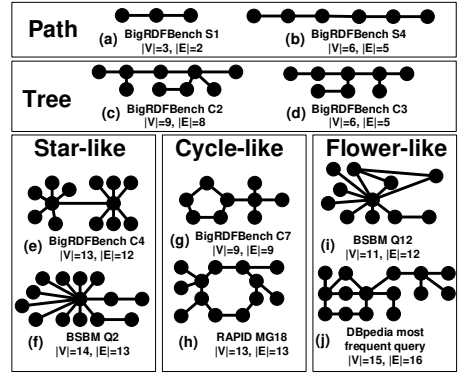}
	\vspace{-2ex}\caption{Examples of real-world query topologies.}\label{fig:topology}
	\vspace{-2ex}\end{figure}

Ideally, real-world subgraph query logs can provide guidance to resolve this challenge. However, as remarked in Section~\ref{sec:intro}, such data may be unavailable. Hence, we leverage results from a recent  study~\cite{Bonifati2017} that analysed a large volume of real-world \textsc{sparql} query logs. It revealed that topologies of many real-world subgraph queries map to chains, trees, stars, cycles, petals, and flowers\footnote{\scriptsize A \textit{petal} is a graph consisting of a source node $s$, target node $t$ and a set of at least 2 node-disjoint paths from $s$ to $t$. A \textit{flower} is a graph consisting of a node $x$ with three types of attachments: chains (stamens), trees that are not chains (the stems), and petals. A \textit{flower set} is a graph in which every connected component is a flower.}~\cite{Bonifati2017}. Figure~\ref{fig:topology} depicts examples of these topologies in real-world subgraph queries extracted from BigRDFBench~\cite{saleem}, BSBM~\cite{bsbm}, Rapid~\cite{rapid}, and DBPedia~\cite{dbpedia}. Consequently, canned patterns in any \textsc{gui} should facilitate efficient construction of these topologies.

\vspace{-1ex}
\subsection{Topologies of Canned Patterns} \label{sec:pattop}
We consider the following types of topological structures of canned patterns in order to facilitate construction of the above query substructures.

\textbf{Path and cycle patterns}. A subgraph query may contain paths of different lengths (\ie chain) and/or cycles.  Figure~\ref{fig:topology} depicts some examples. Hence, our canned patterns should expose representative \textit{$k$-paths} and \textit{$k$-cycles} in the underlying data. Given a graph $G=(V,E)$, a \textit{$k$-path}, denoted as $P_{k}=(V_k,E_k)$, is a walk of length $k$ containing a sequence of vertices $v_1,v_2,\cdots,v_k,v_{k+1}$ where $E_k\subseteq E$, $V_k\subseteq V$ such that all vertices in $V_k$ are distinct. A \textit{$k$-cycle} is simply a closed $(k-1)$-path where $k\geq 3$.

\textbf{Star and asterism patterns}. Intuitively, a \textit{star} is a connected subgraph containing a vertex $r$ where the remaining vertices are connected only to $r$ (\ie neighbors of $r$). A \textit{$k$-star} is a single-level, rooted tree $S_k=(V,E)$ where $V=\{r\}\bigcup L$, $r$ is the root vertex and $L$ is the set of leaves such that $\forall e=\{u,v\}\in E$, $u=r$, $v\in L$ and $|V|=k+1$. We refer to the root as the \textit{center vertex}. Note that $k\geq\epsilon$ where $\epsilon$ is the minimum value of $k$ for which the single-level rooted tree is considered a star.

Real-world queries may contain multiple $k$-stars that are \textit{combined} together. For instance, the query topology in Figure~\ref{fig:topology}(e) is a combination of 6-star and 7-star by merging on a pair of edges. Hence, our canned pattern topology also involves stars that form an \textit{asterism} pattern by \textit{merging} them on a pair of edges. Formally, given $n$ stars $S=\{S_{k_1},\cdots,S_{k_n}\}$ and $n-1$ merged edges $E_m=\{e_{m_1},\cdots,e_{m_{n-1}}\}$ where $S_{k_i}=(V_i,E_i)$ and $e_{m_i}\in E_i$, let $R=\{r_1,\cdots,r_n\}$ be the center vertices such that $r_i\in V_i$. The \textit{asterism} pattern of $S$ is defined as $A_S=(V,E)$ where $e_i=(r_i,v_i)$, $e_{i+1}=(r_{i+1},v_{i+1})$, $E=\bigcup_{1\leq i<n}(\{(r_i,r_{i+1})\}\bigcup(E_i\setminus\{e_i\})\bigcup(E_{i+1}\setminus\{e_{i+1}\}$)), $V=  \bigcup_{1\leq i<n}((V_i\setminus\{v_i\})\bigcup(V_{i+1}\setminus\{v_{i+1}\}))$, $k_i\geq\epsilon$ and $|E|\leq\eta_{max}$.

\begin{figure}[!t]
	\centering
	\includegraphics[width=1\linewidth, height=4cm]{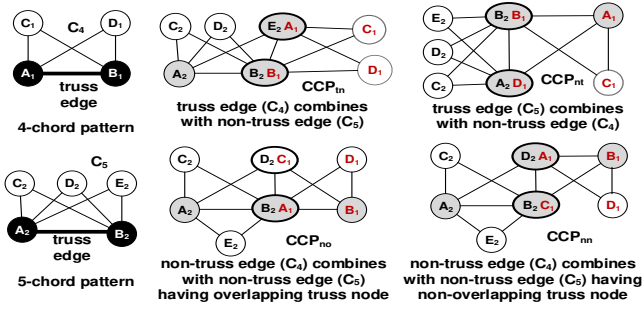}
	\vspace{-4ex}\caption{$k$-chord and composite chord patterns. Grey nodes are truss nodes and oval-shaped nodes are combined nodes.}\label{fig:trussPattern}
	\vspace{-3ex}
\end{figure}

\textbf{$k$-chord and composite chord patterns}. Observe that tree-structured queries can be constructed by combining chains and stars (\eg Figure~\ref{fig:topology}(c)-(d)). However, they are insufficient to construct more complex petal and flower queries efficiently.\eat{ Clearly, since a large network may have numerous distinct petal and flower substructures, it is impractical to expose all of them on a \textsc{gui}.  Hence, we need ``higher-order'' patterns that can facilitate construction of a broad range of these substructures.} In particular, petal and flower queries may often contain \textit{triangle-like} structures. For example, the query in Figure~\ref{fig:topology}(i) contains two triangles. Hence, at first glance it may seem that we can simply select different $k$-trusses ($k>2$) of sizes within the plug specification $b$ as canned patterns. However, a subgraph query may not necessarily always contain $k$-trusses. For instance, the query in Figure~\ref{fig:topology}(j) contains multiple ``triangle-like'' structures as some common edges of triangles are missing. Consequently, the use of only $k$-truss as a canned pattern may make query formulation inefficient as it demands deletion of multiple edges in order to construct a triangle-like query topology. This increases the number of steps required to formulate a query, thereby increase the formulation time. Hence, it is desirable to have ``$k$-truss-like'' substructures as patterns\eat{ to facilitate formulation of petal and flower queries}.\eat{ As we shall see in Section~\ref{sec:expt}, canned patterns based on such ``$k$-truss-like'' substructures provide higher coverage and diversity and lower cognitive load compared to $k$-trusses.}

\begin{table}[!t]\caption{No. of steps for constructing queries.}
	\scriptsize
	\centering
	\vspace{0ex}\begin{tabular}{| p{0.2cm} | p{0.8cm} | p{2.2cm} | p{3.8cm} |}
		\hline
		\textbf{ID} & \textbf{Edge-at-a-time} & \textbf{Default patterns} & \textbf{Canned patterns} \\
		\hline
		(c) & 17 & 6 [2 2-path + 1 square - 1 & 5 [4-path + 2 2-path + 2 merge]\\
		& & edge + 1 edge + 1 merge] & 5 [4-star + 1 2-path + 1 node + 2 edge]\\
		\hline
		(e) & 25 & 11 [5 2-path + 1 node + 2& 1 [A$_{6,7}$]\\
		& &  edge + 3 merge] & 3 [5-star + 6-star + 1 edge]\\
		\hline
		(g) & 18 & 8 [4 2-path + 1 edge + 3& 3 [5-cycle + 4-star + 1 merge]\\
		& &  merge] & 4 [6-path + 2-path + 1 edge + 1 merge]\\
		\hline
		(i) & 23 & 10 [square + 3 2-path + 1 & 5 [4-CP + 6-star + 1 node + 1 edge + 1 merge]\\
		& &  node + 2 edge + 3 merge]& 5 [CCP$_{no}$(4,4) - 2 edge + 5-star + 1 merge]  \\
		\hline
	\end{tabular}\label{tab:step}
	\vspace{-2ex}
\end{table}

To this end, we extract two types of $k$-truss-based structures as canned patterns, namely, \textit{k-chord patterns} ($k$-\textsc{cp})\eat{\footnote{\scriptsize We refer to it as a chord as the truss edge is similar to the horizontal lower chord of a truss used in architecture and structural engineering.}} and \textit{composite chord patterns} (\textsc{ccp}). Intuitively, a $k$-\textsc{cp} is a connected graph containing a \textit{truss edge} $e$ (\ie edge belonging to a $k$-truss) and $k$-2 triangles of $e$. Formally, given a $k$-truss $G_k=(V_k,E_k)$ for $k>2$, the \textit{$k$-chord pattern} ($k$-\textsc{cp}) $C_k=(V_{ck},E_{ck})$ associated with every edge $e=(u,v)\in E_k$ where $u,v\in V_k$ is defined as $V_{ck}=\{u,v\}\bigcup V_{ck}^{\prime}$ and $E_{ck}=\{(u,v)\}\bigcup E_{ck}^{\prime}$ where $V_{ck}^{\prime}=\{w_i: 0\leq i\leq k-2\}$ and $E_{ck}^{\prime}=\{(u,w_i),(w_i,v):0\leq i\leq k-2\}$. $k$-\textsc{cp} can be considered as a building block of $k$-trusses since it is found with respect to each edge in a given $k$-truss. Examples of $k$-\textsc{cp}s (4-\textsc{cp} and 5-\textsc{cp}) are illustrated in Figure~\ref{fig:trussPattern}. We refer to the edge in a $k$-chord pattern that is involved in ($k$-2) triangles as a \textit{truss edge} and the remaining edges as \textit{non-truss edges}. For example, in Figure~\ref{fig:trussPattern}, edges $(A_1,B_1)$ and $(A_2,B_2)$ are truss edges whereas $(A_1,C_1)$ and $(B_2,D_2)$ are non-truss edges. Correspondingly, vertices of a truss edge (\eg $A_1$, $B_1$, $A_2$, $B_2$) are referred to as \textit{truss vertices}. Observe that we can formulate a simple petal query in two steps by selecting the 4-\textsc{cp} pattern and deleting the truss edge.

To select larger canned patterns with greater structural diversity, we \textit{combine} $k$-\textsc{cp}s to yield additional \textit{composite chord patterns} (\textsc{ccp}) that occur in the underlying network. Observe that combining a set of $k$-\textsc{cp}s in different ways results in different patterns as demonstrated in Figure~\ref{fig:trussPattern}. However, this is an overkill as they are not only expensive to compute but also may generate patterns with higher density (higher cognitive load) or are larger than $\eta_{max}$. Hence, we focus on the \textsc{ccp} generated by merging a \emph{single} edge of two $k$-\textsc{cp}s as it not only reduces the complexity of \textsc{ccp} generation, but also produces \textsc{ccp}s with lower density\eat{ due to smaller density}.\eat{ To illustrate, consider the generation of a \textsc{ccp} from $C_4$ and $C_5$. In the case of a single-edge merge, the density of \textsc{ccp} is $2\frac{5+7-1}{(4+5-2)(4+5-2-1)}=0.52$ whereas it is $2\frac{5+7-2}{(4+5-3)(4+5-3-1)}=0.67$ when two-edge merge is used.}\eat{ Furthermore, based on Lemma~\ref{lem:ccpMinSize} the size of a \textsc{ccp} is expected to at least 7.}

\eat{We illustrate with an example on how \textsc{ccp}s may facilitate efficient formulation of flower queries. Consider the flower topology of a query in Figure~\ref{fig:flower} (middle). Edge-at-a-time construction requires 15 steps. In contrast, using pattern-at-time approach, with a \textsc{ccp} and deleting 3 edges (highlighted in red) from it, takes a total of 4 steps. The right figure shows its formulation using four $2$-paths, which consumes 9 steps (4 steps for the $2$-paths and 5 steps to connect the common nodes). Alternatively, it can also be formulated using a rectangle default pattern and two $2$-paths (6 steps).}

\textbf{Unique small graph patterns}. Lastly, we find small connected subgraphs that do not fall under above categories but occur multiple times in the underlying network.

Table~\ref{tab:step} reports the number of steps taken by various modes of query construction of selected query topologies in Fig.~\ref{fig:topology}. Observe that query construction using canned patterns often takes fewer number of steps compared to construction using only default patterns, emphasizing the need for patterns beyond the default ones. One can also formulate a specific query following multiple alternatives, \ie using multiple sets of patterns (canned and default). This gives users the flexibility to formulate a query using these patterns in many ways, all of which often take fewer steps compared to the edge-at-a-time or default pattern-based modes.

\eat{\setlength{\textfloatsep}{2pt}
	\begin{algorithm}[t]
		\algsetup{
			linenosize=\scriptsize
		}
		\begin{algorithmic}[1]
			\scriptsize
			\REQUIRE Data graph $G$, plug $b = (\eta_{min}, \eta_{max}, \gamma, \mathbb{N})$;
			\ENSURE Canned pattern set $\mathcal{P}$;
			
			\STATE $G_T,G_O\leftarrow \textsc{GraphDecomposition}(G)$\label{line:graphDecomposition} \;
			
			\STATE $P_{cp},freq(P_{cp})\leftarrow \textsc{GenChordPatterns}(G_T,t(e))$\label{line:findTrussPatternsFindSimplePatterns} \;
			
			\STATE $P_{ccp},freq(P_{ccp})\leftarrow \textsc{GenCombChordPatterns}(G_T,t(e))$\label{line:findTrussPatternsFindCombinedPatterns} \;
			
			\STATE $P_s,freq(P_s)\leftarrow \textsc{GenStarPatterns}(G_O)$\label{line:findAuxPatternsFindStarLikePatterns} \eat{/*Alg.~\ref{alg:findStarPatterns}*/}\;
			
			\STATE $G_R\leftarrow \textsc{RemoveStarPatternEdges}(G_O,P_s)$\label{line:removeStarPatternEdges}\;
			
			\STATE $P_r,freq(P_r)\leftarrow \textsc{GenSmallPatterns}(G_R,b)$\label{line:findAuxPatternsFindSmallPatterns} \eat{/*Detailed in \cite{tech}*/}\;
			
			\STATE $P_{cand}\leftarrow P_{cp}\bigcup P_{ccp}\bigcup P_s\bigcup P_r$\;\label{line:combinePatterns}
			
			\STATE $freq(P_{cand})\leftarrow freq(P_{cp})\bigcup freq(P_{ccp})\bigcup freq(P_s)\bigcup freq(P_r)$\;\label{line:combineCoverage}
			
			\STATE $\mathcal{P}\leftarrow \textsc{SelectCannedPatterns}(P_{cand},freq(P_{cand}),b)$\label{line:findPatterns} \;
			
		\end{algorithmic}
		\caption{\textsc{Tattoo} framework.} \label{alg:avatar}
\end{algorithm}}

\eat{\begin{algorithm}[t]
		\algsetup{
			linenosize=\small
		}
		\begin{algorithmic}[1]
			\small
			\REQUIRE Data graph $G=(V,E)$, maximum pattern size $\eta_{max}$;
			\ENSURE \textsc{tir} graph $G_T=(V_T,E_T)$, \textsc{tsr} graph $G_O=(V_O,E_O)$, trussness of all edges $T(e)=\{t(e_i)|e_i\in E\}$;
			
			\STATE $k\leftarrow 2$\;
			
			\STATE $G_T\leftarrow\phi$\;
			
			\STATE $G_O\leftarrow\phi$\;
			
			\STATE compute $\textsc{sup}(e)$ for each edge\;\label{line:graphDecompositionEdgeSupport}
			
			\STATE sort all edges in ascending order of support\;
			
			\WHILE{$\exists e$ such that $\textsc{sup}(e)\leq(k-2)$}\label{line:graphDecompositionExtractRegionStart}
			
			\STATE $e=(u,v)\leftarrow \textsc{GetEdgeWithLowestSupport}(E)$\;
			
			\FOR{each $w\in nb(u)\setminus\{v\}$}
			
			\IF{$(v,w)\in E$}
			
			\STATE $\textsc{sup}((u,w))\leftarrow sup((u,w))-1\;$\label{line:graphDecompositionUpdateEdgeSupportStart}
			
			\STATE $\textsc{sup}((v,w))\leftarrow sup((v,w))-1\;$\label{line:graphDecompositionUpdateEdgeSupportEnd}
			
			\STATE reorder $(u,w)$ and $(v,w)$ according to new support\;
			
			\ENDIF
			
			\ENDFOR
			
			\STATE $t(e)\leftarrow \textsc{min}(k,\eta_{max})$\;\label{line:graphDecompositionTrussness}
			
			\IF{$k=2$}
			
			\STATE update $G_O$ with $u$, $v$ and $(u,v)$\;\label{line:graphDecompositionGAux}
			
			\ELSE
			
			\STATE update $G_T$ with $u$, $v$ and $(u,v)$\;\label{line:graphDecompositionGTruss}
			
			\ENDIF
			
			\STATE remove $(u,v)$ from $G$\;
			
			\ENDWHILE
			
			\IF{not all edges in $G$ are removed}
			
			\STATE $k\leftarrow(k+1)$\;
			
			\STATE go to line 6\;
			
			\ENDIF\label{line:graphDecompositionExtractRegionEnd}
			
		\end{algorithmic}
		\caption{$\textsc{GraphDecomposition}$.} \label{alg:graphDecomposition}
\end{algorithm}}

\eat{\begin{figure*}[!t]
		\centering
		\includegraphics[width=7in]{graphDecompositionSteps.eps}
		\vspace{-2ex}\caption{Graph decomposition. Numbers annotating the edges correspond to the support of the edges. Edges whose trussness are updated are highlighted in red. Figure is best viewed in colour.}\label{fig:graphDecompositionStep}
\end{figure*}}

\eat{\textbf{Remark.} Note that the efficiency of Algorithm~\ref{alg:avatar} can be further improved by using multi-threading. In particular, the following are executed using 6 threads in our implementation: edge support computation (Algorithm~\ref{alg:graphDecomposition}) and computation of candidate pattern scores (Algorithm~\ref{alg:patternMining}).}
\vspace{-2ex}
\section{Candidate Patterns Generation}\label{sec:candidatePatternGenerator}
In the preceding section, we classified the topologies of canned patterns broadly into ``$k$-truss-like'' and ``non-$k$-truss-like'' structures. In this section, we describe how candidate canned patterns conforming to these topological categories are extracted from the underlying network $G$. \eat{To this end, we first \textit{decompose} $G$ into \textit{truss-infested}  and \textit{truss-oblivious regions} and then generate ``$k$-truss-like'' and ``non-$k$-truss-like'' candidate patterns from these regions, respectively. We discuss these two steps in turn.}

\begin{algorithm}[t]
	\algsetup{
		linenosize=\scriptsize
	}
	\begin{algorithmic}[1]
		\scriptsize
		\REQUIRE Data graph $G$, plug $b = (\eta_{min}, \eta_{max}, \gamma\eat{, \mathbb{N}})$;
		\ENSURE Canned pattern set $\mathcal{P}$;
		
		\STATE $G_T,G_O\leftarrow \textsc{GraphDecomposition}(G)$\label{line:graphDecomposition} \;
		
		\STATE $P_{cp},freq(P_{cp})\leftarrow \textsc{GenChordPatterns}(G_T,t(e))$\label{line:findTrussPatternsFindSimplePatterns} /*Alg.~\ref{alg:findSimpleTrussPatterns}*/\;
		
		\STATE $P_{ccp},freq(P_{ccp})\leftarrow \textsc{GenCombChordPatterns}(G_T,t(e))$\label{line:findTrussPatternsFindCombinedPatterns} /*Alg.~\ref{alg:findCombinedTrussPatterns}*/ \;
		
		\STATE $P_s,freq(P_s)\leftarrow \textsc{GenStarPatterns}(G_O)$\label{line:findAuxPatternsFindStarLikePatterns} /*Alg.~\ref{alg:findStarPatterns}*/\;
		
		\STATE $G_R\leftarrow \textsc{RemoveStarPatternEdges}(G_O,P_s)$\label{line:removeStarPatternEdges}\;
		
		\STATE $P_r,freq(P_r)\leftarrow \textsc{GenSmallPatterns}(G_R,b)$\label{line:findAuxPatternsFindSmallPatterns} /*Alg.~\ref{alg:findSmallPatterns}*/\;
		
		\STATE $P_{cand}\leftarrow P_{cp}\bigcup P_{ccp}\bigcup P_s\bigcup P_r$\;\label{line:combinePatterns}
		
		\STATE $freq(P_{cand})\leftarrow freq(P_{cp})\bigcup freq(P_{ccp})\bigcup freq(P_s)\bigcup freq(P_r)$\;\label{line:combineCoverage}
		
		\STATE $\mathcal{P}\leftarrow \textsc{SelectCannedPatterns}(P_{cand},freq(P_{cand}),b)$\label{line:findPatterns} \;
		
	\end{algorithmic}
	\caption{The \textsc{Tattoo} algorithm.} \label{alg:avatar}
\end{algorithm}

We begin by providing an overview of the \textsc{Tattoo} algorithm. Algorithm~\ref{alg:avatar} outlines the procedure. It first \textit{decomposes} $G$ into \textit{truss-infested}  and \textit{truss-oblivious regions} (Line 1) and then \textit{generates} ``$k$-truss-like'' and ``non-$k$-truss-like'' candidate patterns from these regions, respectively (Lines 2-8). Finally, it \textit{selects} the canned pattern set from these candidate patterns based on the plug specification  (Line 9). We discuss the decomposition of $G$ and candidate pattern generation in turn. In the next section, we shall elaborate on the selection of canned patterns from the candidate patterns.

\begin{table}[!t]\caption{TIR and TOR graphs in real networks.}
	\scriptsize
	\centering
	\vspace{0ex}\begin{tabular}{| l | l | l | l | l | l |}
		\hline
		\textbf{Data} & \textbf{Name} & \textbf{$|V|$} & \textbf{$|E|$} & \textbf{\% ($G_T$)} & \textbf{\% ($G_O$)}\\
		\hline
		$BK$ & loc-Brightkite & 58K & 214K & 67.3 & 32.7\\
		\hline
		$GO$ & loc-Gowalla & 197K & 950K & 78.2 & 21.8\\
		\hline
		$DB$ & com-DBLP & 317K & 1.05M & 93 & 7\\
		\hline
		$AM$ & com-Amazon & 335K & 926K & 77.2 & 22.8\\
		\hline
		$RP$ & RoadNet-PA & 1.09M & 1.54M & 12.7 & 87.3\\
		\hline
		$YT$ & com-Youtube & 1.13M & 2.99M & 46.8 & 53.2\\
		\hline
		$RT$ & RoadNet-TX & 1.38M & 1.92M & 12.5 & 87.5\\
		\hline
		$SK$ & as-Skitter & 1.7M & 11M & 79.1 & 20.9\\
		\hline
		$RC$ & RoadNet-CA & 1.97M & 2.77M & 12.6 & 87.4\\
		\hline
		$LJ$ & com-LiveJournal & 4M & 34.7M & 83.2 & 16.8\\
		\hline
	\end{tabular}\label{tab:graphDecomposition}
	\vspace{-2ex}
\end{table}

\vspace{-1ex}
\subsection{Truss-based Graph Decomposition}\label{sec:graphDecomposition}
\eat{Since real-world query logs contain \textit{triangle-like} and \textit{non-triangle-like} substructures (detailed in Section~\ref{sec:type}), in the \textit{truss-based graph decomposition} phase (Line~\ref{line:graphDecomposition}), the input data graph is decomposed into a dense \textit{truss-infested region} ($G_T$) and a sparse \textit{truss-oblivious region} ($G_O$) by leveraging the notion of $k$-truss. The \textit{candidate pattern generation} phase (Lines~\ref{line:findTrussPatternsFindSimplePatterns}-\ref{line:findAuxPatternsFindSmallPatterns}) efficiently generates candidate patterns from $G_T$ and $G_O$ by finding instances of different \textit{categories} of query topology that appear in real-world query log data~\cite{Bonifati2017}. Lastly, the \textit{canned pattern selection} phase (Line~\ref{line:findPatterns}) selects canned patterns for the \textsc{gui} from these candidates based on Definition~\ref{def:cannedPattern}.\eat{In particular, two approaches are proposed that guarantee $\frac{1}{e}$-approximation and $\frac{1}{2}$-approximation, respectively.}\eat{ In the sequel, we shall justify these steps\eat{ and elaborate on them in turn}.} In particular, an approach that guarantees $\frac{1}{e}$-approximation is proposed.}

In order to extract ``non-$k$-truss-like'' and ``$k$-truss-like'' structures as candidate patterns, we first decompose a network $G$ into \textit{sparse} (containing non-trusses) and \textit{dense} (containing trusses) regions. The latter region is referred to as \textit{truss-infested region} (\textsc{tir} graph) and the former \textit{truss-oblivious region} (\textsc{tor} graph), and are denoted by $G_T$ and $G_O$, respectively. Table~\ref{tab:graphDecomposition} reports the sizes of $G_T$ and $G_O$ in several real-world networks measured as the percentage of the total number of edges. We observe $G_T$ basically consists of relatively large connected subgraphs that comprise multiple $k$-trusses. On the other hand, $G_O$ mainly consists of chains (\ie paths), stars, cycles, and small connected components. Furthermore, although some networks have small $G_O$ (\eg \textsf{com-DBLP}), there are networks where $G_O$ is large (\eg \textsf{RoadNet-CA}), encompassing up to $87.5\%$ of the total number of edges. Consequently, by decomposing a network into $G_T$ and $G_O$, we can improve efficiency\eat{~\cite{tech}} by limiting the search for $k$-truss-like patterns in $G_T$ instead of the entire network and extract non-truss-like patterns from $G_O$. Additionally, generating candidate patterns of aforementioned topological categories from \emph{both} \textsc{tir} and \textsc{tor} graphs enables us to select a \textit{holistic} collection of patterns having higher coverage and diversity. Cognitive load of the pattern set is often reduced when patterns from both regions are considered due to the sparse structure of \textsc{tor}\eat{ in lieu of selecting only from \textsc{tir}}.\eat{ We shall demonstrate this in Section~\ref{sec:expt}.}

\textsc{Tattoo} utilizes the state-of-the-art truss decomposition approach in \cite{wang2012}\eat{\footnote{\scriptsize The choice of truss decomposition technique is orthogonal to our framework. Any superior technique can be used.}} to decompose $G$ into $G_T$ and $G_O$. Briefly, this approach identifies $k$-trusses ($k\in[2-k_{max}]$) in $G$ iteratively by removing edges with support less than $k-2$ from $G$.\eat{ These extracted edges form the $k$-truss where each $k$-truss is stored as an individual graph.} Hence, our graph decomposition algorithm adapts it to assign 2-truss as $G_O$ and the remaining $k$-trusses as $G_T$. \eat{Note that the choice of truss decomposition technique is orthogonal to our framework.}\eat{ Any superior technique can be used.}

\eat{\begin{figure*}[!t]
		\centering
		\includegraphics[width=\linewidth, height=5cm]{Gaux.eps}
		\vspace{-3ex}\caption{Visualization of portions of \textsc{tor} graphs of \textit{HepPh}, \textit{Amazon} and \textit{Skitter} datasets.}\label{fig:Gaux}
		\vspace{-2ex}\end{figure*}}

\eat{Algorithm~\ref{alg:graphDecomposition} outlines the graph decomposition procedure.
	The variable $t(e)$ helps to keep track of the edge trussness in $G_T$. Since the goal is to select canned patterns with maximum size $\eta_{max}$, the upper bound of edge trussness is set to this value. The algorithm first identifies the support of each edges (Line~\ref{line:graphDecompositionEdgeSupport}). Then, regions of the data graph are iteratively extracted by removing the edges with the lowest support, starting from the sparsest (\ie $sup(e)=0$) to the densest (Lines~\ref{line:graphDecompositionExtractRegionStart} to~\ref{line:graphDecompositionExtractRegionEnd}). In particular, \textsc{Tattoo} considers all edges with $sup(e)=0$ as sparse regions and these edges form the \textsc{tor} graph $G_O$. The remaining edges form the \textsc{tir} graph $G_T$.

	In summary, Algorithm~\ref{alg:graphDecomposition} makes the following two simple modifications on \textsc{TrussDecomposition} in \cite{wang2012}: (1) Instead of storing each $k$-truss as a separate graph, Algorithm~\ref{alg:graphDecomposition} stores 2-truss as $G_O$ and the remaining $k$-trusses are combined as a single graph $G_T$, (2) Algorithm~\ref{alg:graphDecomposition} assigns a trussness value $t(e)$ to every edge in $G_T$ and $G_O$.}

We keep track of the edge trussness (denoted as $t(e)$) in $G_T$. Since the goal is to select canned patterns with maximum size $\eta_{max}$, the upper bound of edge trussness is set to this value. The algorithm first identifies the support of each edge. Then, regions of the data graph are iteratively extracted by removing edges with the lowest support, starting from the sparsest (\ie $sup(e)=0$) to the densest. In particular, \textsc{Tattoo} considers all edges with $sup(e)=0$ as sparse regions and these edges form the \textsc{tor} graph $G_O$. The remaining edges form the \textsc{tir} graph $G_T$.

In summary, the above approach makes the following two simple modifications to the truss decomposition technique in~\cite{wang2012}: (1) instead of storing each $k$-truss as a separate graph, it stores 2-truss as $G_O$ and the remaining $k$-trusses are combined as a single graph $G_T$; (2) it assigns a trussness value $t(e)$ to every edge in $G_T$ and $G_O$. The worst-case time and space complexities of this algorithm are $O(|E|^{1.5})$ and $O(|V|+|E|)$, respectively~\cite{wang2012}.

\eat{\begin{example}
		Figure~\ref{fig:graphDecompositionStep} illustrates the first few steps of truss decomposition of the data graph $G$. Observe that when $k=2$, edges with $\textsc{sup}(e)=0$ (\textit{e.g.}, edge (3,7)) are extracted and stored in $G_O$. The extracted edges are assigned trussness of 2. In the updated data graph, edges with $sup(e)=0$ are now removed and the lowest edge support becomes 1. In the next round (\ie $k=3$), edges with support of 1 are removed. However, instead of storing them in $G_O$, these edges form part of $G_T$. Note that when an edge is removed (\textit{e.g.}, edge (15,16)), the support of neighbouring edges (\textit{e.g.}, edge (2,15) and edge (2,16)) may be affected if the removed edge forms a triangle with these neighbouring edges. In order to maintain the consistency of the support of edges in $G$, the edge support of these neighbouring edges are decremented by 1 (Lines~\ref{line:graphDecompositionUpdateEdgeSupportStart} to~\ref{line:graphDecompositionUpdateEdgeSupportEnd}). Graph decomposition continues until no more edges remain in the data graph.
		\EndOfProof
\end{example}}

\eat{\begin{lemma}\label{lem:graphDecompositionComplexity}
		The worst-case time and space complexities of the truss-based graph decomposition phase are $O(|E|^{1.5})$ time and $O(|V|+|E|)$, respectively. \end{lemma}}

\eat{\textbf{Remark.}\eat{ A natural question to ask regarding the graph decomposition strategy is why we do not simply consider $G_T$ and ignore the \textsc{tor} (\ie $G_O$) as it is expected to be small in size. That is, since $G_O$ is small we can focus on selecting canned patterns from $G_T$ only. However, our investigation with several real-world networks shows that $G_O$ is not necessarily small and its size typically varies based on application domains. Table~\ref{tab:graphDecomposition} reports this result. Observe that although some networks have small $G_O$ (\eg \textsf{com-DBLP}), there are networks where $G_O$ is large (\eg \textsf{RoadNet-CA}), encompassing up to $87.7\%$ of the total number of edges. Hence, \textsc{tor} may contain patterns that are useful for query formulation.}
	The goal of our graph decomposition is different from graph partitioning strategies used for distributed graph processing~\cite{AO18}. The latter focuses on generating partitions of nearly equal sizes to support parallel processing which is different from \textsc{Tattoo}'s goal of canned pattern selection. In particular, our technique aims to retain the structural complexity (\textit{e.g.}, subgraph density) of the graphs after decomposition such that the topology of the generated patterns are still applicable to the original graph.}

\vspace{-1ex}
\subsection{Patterns from a TIR Graph}
Next, we generate $k$-\textsc{cp}s and \textsc{ccp}s as candidate patterns from a \textsc{tir} graph. For each pattern we also compute its frequency as it will be used subsequently to measure its coverage. We discuss them in turn. \eat{The formal algorithms are given in~\cite{tech}.}

\textbf{Generation of $k$-chord patterns.} Algorithm~\ref{alg:findSimpleTrussPatterns} describes generation of the $k$-\textsc{cp}s. In particular, we can find $k$-\textsc{cp}s with respect to each edge in a given $k$-truss. For instance, every edge in a 4-truss and a 5-truss is part of at least 2 and 3 triangles, respectively. Observe that the $2$-chord pattern of an edge $e$ is simply the edge itself. Hence, \textsc{Tattoo} generates $k$-\textsc{cp}s for $k\geq 3$. The \textit{frequency} of a $k$-\textsc{cp} is measured by the frequency of the pattern occurring in $G_T$, which is essentially the number of edges having trussness greater than or equals to $k$ (Lines~\ref{line:simpleTrussPatternComputeTrussnessPerEdgeStart} to~\ref{line:simpleTrussPatternComputeTrussnessPerEdgeEnd}). Formally, given a \textsc{tir} graph $G_T=(V_T,E_T)$ and a $k$-chord pattern $C_k=(V_{ck},E_{ck})$, the \textit{frequency} of $C_k$ is defined as $freq(C_k)=|\{e\in E|t(e)\geq k\}|$. Then, the set of $k$-\textsc{cp}s of a $G_T$ is simply the set of patterns $C_k$ whose frequency is greater than 0.\eat{ Note that the edge trussness of all edges in $G_T$ is already computed during the truss-based graph decomposition phase.} We first generate $k$-chord patterns in $G_T$ and then compute their frequencies using edge trussness.

\begin{figure}[!t]
	\centering
	\includegraphics[width=0.9\linewidth, height=1.3cm]{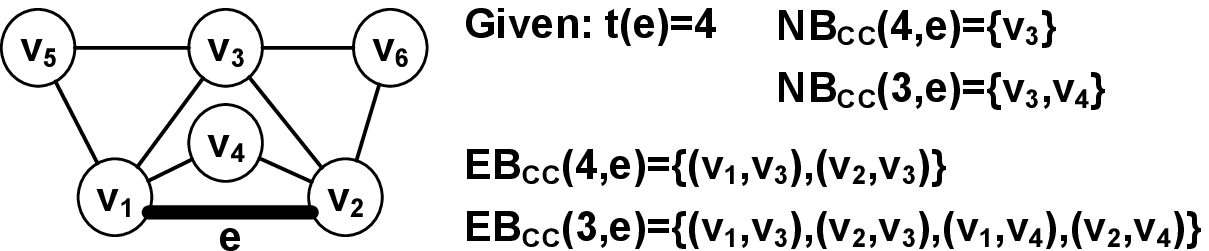}
	\vspace{-2ex}\caption{$k$-CCP node and edge neighbourhoods.}\label{fig:neighbourhoodExample}
	\vspace{-1ex}\end{figure}

\eat{\vspace{-1ex}\begin{lemma}\label{lem:coverageOfSimpleTrussAntiMonotone}
		The frequency of a $k$-chord pattern is anti-monotonic.
\end{lemma}}

\vspace{-1ex}\begin{lemma}\label{lem:complexityFindSimpleTrussPatterns}
	\textit{The worst-case time and space complexities of $k$-\textsc{cp} generation are $O(k_{max}|E_T|^{1.5})$ and $O(|V_T|+|E_T|)$, respectively.}
	\vspace{-1ex}\end{lemma}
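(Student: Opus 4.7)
\textbf{Proof proposal for Lemma~\ref{lem:complexityFindSimpleTrussPatterns}.} The plan is to bound separately the cost of the two sub-steps of Algorithm~\ref{alg:findSimpleTrussPatterns} that are highlighted in the prose preceding the lemma, namely (i) generating the structural template of each $k$-chord pattern in $G_T$ and (ii) computing the frequency $freq(C_k)=|\{e\in E_T : t(e)\geq k\}|$. Since the trussness $t(e)$ of every $e\in E_T$ is already materialised during the truss-based graph decomposition of Section~\ref{sec:graphDecomposition}, sub-step (ii) reduces to a single linear pass over $E_T$ for each value of $k$, contributing $O(k_{max}\,|E_T|)$ time in total.

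For sub-step (i), I would argue that generating the $k$-chord template with respect to the edges of $G_T$ requires, in the worst case, access to the set of triangles incident on truss edges, so that the $k-2$ triangular ``chords'' prescribed in the definition of $C_k$ can be instantiated. Using the classical triangle-listing bound (see, e.g., the analysis underpinning the truss decomposition in \cite{wang2012}), enumerating the triangles touching all edges of $G_T$ costs $O(|E_T|^{1.5})$. Performing this once per value of $k\in[3,k_{max}]$ thus costs $O(k_{max}\,|E_T|^{1.5})$ time, which dominates the $O(k_{max}\,|E_T|)$ cost of sub-step (ii) and yields the stated time bound.

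For space, I would observe that beyond the storage of $G_T$ itself, which is $O(|V_T|+|E_T|)$ under standard adjacency representations, the algorithm only needs (a) the per-edge trussness array of size $O(|E_T|)$, (b) working scratch for the triangle enumeration of size $O(|V_T|+|E_T|)$, and (c) the $k$-\textsc{cp} templates and their frequencies, of total size $O(k_{max}^2)$ since template $C_k$ has $k-1$ vertices and $2k-3$ edges. Assuming the standard regime $k_{max}=O(\sqrt{|E_T|})$, the first two terms dominate, giving the claimed $O(|V_T|+|E_T|)$ bound.

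The main obstacle I anticipate is justifying why the triangle-enumeration cost must be paid once per $k$ rather than amortised across all values of $k$: a careful implementation could in principle enumerate all triangles a single time and then merely filter by trussness, reducing the time to $O(|E_T|^{1.5}+k_{max}\,|E_T|)$. Thus the proof needs to be faithful to what Algorithm~\ref{alg:findSimpleTrussPatterns} actually does; I would handle this by appealing to the pseudocode showing an outer loop over $k$ around the triangle-access step, so that the $k_{max}$ factor is an honest upper bound on the work performed rather than a loose analysis artefact. Everything else is routine bookkeeping.
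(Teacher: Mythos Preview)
Your analysis misreads Algorithm~\ref{alg:findSimpleTrussPatterns}. Lines~1--12 do \emph{not} enumerate triangles in $G_T$: they merely build the abstract template graph $C_k$ on fresh symbolic vertices $u,v,w_1,\dots,w_{k-2}$, at cost $O(k)$ per $k$ and $O(k_{max}^2)$ overall. There is therefore no $|E_T|^{1.5}$ term arising from your ``sub-step~(i)'', and no per-$k$ triangle pass anywhere in the pseudocode to justify the $k_{max}$ multiplier the way you argue; the outer loop over $k$ that you appeal to wraps only the syntactic template construction, not any triangle access.

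The paper's own argument is precisely the one you flag in your ``main obstacle'' paragraph as the tighter alternative. It attributes the $|E_T|^{1.5}$ cost to obtaining the per-edge trussness values $t(e)$ (charged at $O(\sqrt{|E_T|})$ per edge, citing~\cite{wang2012}), and then charges Lines~13--20 an additional $O(k_{max})$ per edge for the inner while-loop that decrements $k$ from $t(e)$ down to $3$, giving $O(|E_T|^{1.5}+k_{max}|E_T|)$ in total; this is then subsumed by the looser $O(k_{max}|E_T|^{1.5})$ stated in the lemma. For space the paper argues essentially as you do, but bounds the pattern storage by $O(k_{max})$ and dismisses it against $|V_T|+|E_T|$ via a ``large graph in practice'' remark rather than invoking $k_{max}=O(\sqrt{|E_T|})$.
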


\begin{algorithm}[t]
	\algsetup{
		linenosize=\scriptsize
	}
	\begin{algorithmic}[1]
		\scriptsize
		\REQUIRE \textsc{tir} graph $G_T=(V_T, E_T)$, trussness of all edges $T(e)$;
		\ENSURE Set of $k$-chord patterns $P_{cp}=\{C_k|3\leq k\leq k_{max}\}$ and frequency $freq(P_{cp})$;
		
		\FOR{$k=3$\textrm{ to }$k_{max}/* \textrm{\textsf{generate $k$-chord patterns}}*/\;$}\label{line:simpleTrussPatternGetPatternStart}
		
		\STATE $C_k=(V_{ck},E_{ck})\leftarrow\phi$\;
		
		\STATE $V_{ck}\leftarrow\{u,v\}$\;
		
		\STATE $E_{ck}\leftarrow\{(u,v)\}$\;
		
		\STATE $i\leftarrow k$\;
		
		\WHILE{$i\geq 3$}
		
		\STATE $V_{ck}\leftarrow\{w_{i-2}\}$\;
		
		\STATE $E_{ck}\leftarrow\{(u,w_{i-2}),(w_{i-2},v)\}$\;
		
		\STATE $i\leftarrow i-1$\;
		
		\ENDWHILE
		
		\STATE $freq(C_k)\leftarrow 0$\;
		
		\ENDFOR\label{line:simpleTrussPatternGetPatternEnd}
		
		\FOR{\textrm{each }$e\in E_T/* \textrm{\textsf{compute frequencies using edge trussness}}*/\;$}\label{line:simpleTrussPatternComputeTrussnessPerEdgeStart}
		
		\STATE $k\leftarrow t(e)$\;
		
		\WHILE{$k\geq 3$}
		
		\STATE $cov(C_k)\leftarrow freq(C_k)+1$\;\label{line:simpleTrussPatternCoverage}
		
		\STATE $P_{cp}\leftarrow P_{cp}\bigcup C_k\;$\label{line:simpleTrussPatternPattern}
		
		\STATE $k\leftarrow k-1$\;
		
		\ENDWHILE
		
		\ENDFOR\label{line:simpleTrussPatternComputeTrussnessPerEdgeEnd}
		
	\end{algorithmic}
	\caption{$\textsc{GenChordPatterns}$.} \label{alg:findSimpleTrussPatterns}
\end{algorithm}

\textbf{Generation of composite chord patterns.} Next, we generate the \textsc{ccp}s. Specifically, we generate the following categories of \textsc{ccp}s based on different ways of merging truss and non-truss edges.

\vspace{-1ex}\begin{definition} \label{def:ccp}{\em Let $C_{k_1}=(V_{ck_1},E_{ck_1})$ and $C_{k_2}=(V_{ck_2},E_{ck_2})$ be two $k$-chord patterns where $s,t\in V_{ck_1}$ and $u,v\in V_{ck_2}$ are truss vertices. Then, we can generate the following categories of \textbf{composite chord patterns} of $C_{k_1}$ and $C_{k_2}$ by merging $C_{k_1}$ and $C_{k_2}$ as follows:\/}
	\begin{enumerate} \itemsep = -0.5ex {\em
			\item $CCP_{tn}(k_1, k_2)$: merge the truss edge of $C_{k_1}$ with a non-truss edge of $C_{k_2}$.
			\item $CCP_{nt}(k_1, k_2)$: merge the truss edge of $C_{k_2}$ with a non-truss edge of $C_{k_1}$.
			\item $CCP_{no}(k_1, k_2)$: merge a non-truss edge of $C_{k_1}$ with a non-truss edge of $C_{k_2}$ such that there is an overlapping truss vertex.
			\item $CCP_{nn}(k_1, k_2)$: merge a non-truss edge of $C_{k_1}$ with a non-truss edge of $C_{k_2}$ such that there is no overlapping truss vertex.\/}
	\end{enumerate}
\end{definition}
Figure~\ref{fig:trussPattern} depicts examples of these four categories of \textsc{ccp}s. When the context is clear, we shall simply refer to a \textsc{ccp} as $CCP_i$. A keen reader may observe that it is possible to create another \textsc{ccp} by merging the truss edge of $C_{k_1}$ with the truss edge of $C_{k_2}$. However, this \textsc{ccp} is in fact a $k$-\textsc{cp} where $k=k_1+k_2-2$. For instance, when $C_4$ and $C_5$ in Figure~\ref{fig:trussPattern} are merged on their truss edges, the resultant pattern is a $7$-\textsc{cp}. Also, combining two $3$-\textsc{cp}s always yields a $4$-\textsc{cp} (Lemma~\ref{lem:combinedSimple3TrussPattern}). Since $k$-\textsc{cp}s have already been handled earlier, these combinations are ignored.

\eat{\vspace{-1ex}\begin{lemma}\label{lem:ccpMinSize}
		\textit{The minimum size of a \textsc{ccp} is 7.}
\end{lemma}}

\begin{lemma}\label{lem:combinedSimple3TrussPattern}
	\textit{Two $3$-\textsc{cp}s always yield a \textsc{ccp} that is $4$-\textsc{cp}.}
\end{lemma}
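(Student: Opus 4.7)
The plan is to prove the lemma by a direct case analysis over the four categories of composite chord patterns from Definition~\ref{def:ccp}, instantiated with $k_1 = k_2 = 3$. The starting observation is that a $3$-\textsc{cp} is simply a triangle: it has a designated truss edge $(u,v)$ together with one additional vertex $w$ adjacent to both $u$ and $v$ (since a truss edge of a $k$-\textsc{cp} lies in exactly $k-2$ triangles, and here $k-2=1$). Hence in each triangle, one edge is labelled ``truss'' and the other two are ``non-truss'', with each non-truss edge having one truss endpoint and one non-truss endpoint ($w$).

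Next, I would handle $CCP_{tn}$ and $CCP_{nt}$ symmetrically. Merging the truss edge of one triangle with a non-truss edge of the other identifies exactly two pairs of vertices, producing a graph on $3+3-2 = 4$ vertices with $3+3-1 = 5$ edges. The merged edge is contained in both original triangles, so it lies in exactly two triangles of the combined graph. Since a 4-\textsc{cp} is characterised as 4 vertices, 5 edges, with a designated edge in $4-2=2$ triangles, the result is isomorphic to a 4-\textsc{cp} and the merged edge plays the role of its truss edge.

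For $CCP_{no}$ and $CCP_{nn}$, I would track the vertex identifications carefully. In $CCP_{no}$ the two non-truss edges are merged so that their truss endpoints coincide, yielding 4 vertices, 5 edges, and again two triangles sharing the merged non-truss-pair edge; that shared edge is exactly the truss edge of the resulting 4-\textsc{cp}. In $CCP_{nn}$ the truss endpoint of each non-truss edge is instead identified with the non-truss endpoint of the other, so no truss endpoints coincide; writing out the edge set one sees that the result is still 4 vertices, 5 edges, two triangles sharing the merged edge. In each case, the merged edge becomes the truss edge of a 4-\textsc{cp}.

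The final step is to verify that no spurious third triangle is created, which would force the resulting pattern to have trussness larger than~$4$ and thus not be a $4$-\textsc{cp}. This is the only mildly subtle point — and the main (minor) obstacle — and it will be handled case by case simply by enumerating the edge set: in every case the graph has exactly~5 edges and the only closed triples are the two triangles sharing the merged edge. Combining the four cases yields that merging any two $3$-\textsc{cp}s under Definition~\ref{def:ccp} produces a $4$-\textsc{cp}, as claimed.
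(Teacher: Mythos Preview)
Your proposal is correct and reaches the same conclusion as the paper, via the same core observation that a $3$-\textsc{cp} is simply a triangle and that any single-edge merge of two triangles yields a graph in which the merged edge lies in exactly two triangles, i.e., a $4$-\textsc{cp}. The only difference is one of economy: the paper's proof short-circuits your four-case analysis by noting that in a triangle every edge is structurally identical (each has exactly one vertex adjacent to both endpoints), so the truss/non-truss labelling is immaterial and all four merge types of Definition~\ref{def:ccp} collapse to a single case. Your explicit enumeration is perfectly valid and has the minor advantage of making the ``no spurious third triangle'' check concrete, but the symmetry shortcut is what lets the paper dispense with the case split entirely.
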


\eat{\begin{proof} (Sketch.) The 3-truss pattern $C_3=(V_{c3},E_{c3})$ is a triangle. Then, $\forall e=(u,v)\in E_{c3}$, there is a vertex $w$ that is adjacent to both $u$ and $v$.\eat{ That is, all edges in $C_3$ have similar structure.} Hence, all different types of single edge merger between two $C_3$ produces a pattern with a merged edge $e_m=(x,y)$ and vertices $x$ and $y$ have two common adjacent vertices $w_1$ and $w_2$. This is essentially $C_4$ where its truss edge corresponds to the merged edge of the two $C_3$.
\end{proof}}

We now elaborate on how the \textsc{ccp}s and their \textit{frequencies} are computed in \textsc{Tattoo} efficiently. We shall introduce two terminologies related to \textit{node} and \textit{edge neighbourhoods} of a \textsc{ccp} to facilitate exposition. Given an edge $e=(u,v)$ in a $k$-truss, the \textit{$k^{\prime}$-\textsc{ccp} node neighbourhood} (denoted as $NB_{cc}(k^{\prime}, e)$) of $e$ is a set of vertices $W$ adjacent to $u$ and $v$ such that $\forall w\in W$, $t((u,w))\geq k^{\prime}$ and $t((w,v))\geq k^{\prime}$ where $k^{\prime}\leq k$. The \textit{$k^{\prime}$-\textsc{ccp} edge neighbourhood} (denoted as $EB_{cc}(k^{\prime}, e)$) of $e$ is the set of edges $S$ adjacent to $e$ such that $\forall (u,x_1),(x_2,v)\in S$, $x_1,x_2\in NB_{cc}(k^{\prime}, e)$ where $k^{\prime}\leq k$. Figure~\ref{fig:neighbourhoodExample} illustrates examples of $k^{\prime}$-\textsc{ccp} node and edge neighborhoods. For instance, $NB_{cc}(4, e)$ consists of $v_3$ since $t(v_1,v_3)\geq 4$ and $t(v_2,v_3)\geq 4$.

\begin{lemma}\label{lem:cnn}
	\textit{Given a truss edge $e$, there is at least a $k$-chord pattern $C_k$ on $e$ if $|NB_{cc}(k, e)|\geq (k-2)$.}
\end{lemma}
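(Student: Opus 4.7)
The plan is to prove the lemma constructively: assuming $|NB_{cc}(k,e)| \geq k-2$, I will exhibit a $k$-chord pattern on $e = (u,v)$. First, I will recall the structural definition of $C_k$ from Section~\ref{sec:pattop}: associated with truss edge $e$, the pattern $C_k$ consists of $e$ together with $k-2$ auxiliary vertices $w_1,\dots,w_{k-2}$ and the $2(k-2)$ edges $(u,w_i),(w_i,v)$, inducing $k-2$ triangles on $e$.

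Next, I will invoke the definition of $NB_{cc}(k,e)$: every $w \in NB_{cc}(k,e)$ is adjacent to both $u$ and $v$ via edges of trussness at least $k$, which in particular guarantees that $(u,w),(w,v)\in E_T$. Since $|NB_{cc}(k,e)| \geq k-2$, I can pick any $k-2$ distinct vertices $w_1,\dots,w_{k-2}\in NB_{cc}(k,e)$ (distinctness is automatic since $NB_{cc}$ is a set) and set $V_{ck} = \{u,v,w_1,\dots,w_{k-2}\}$ and $E_{ck} = \{(u,v)\}\cup\{(u,w_i),(w_i,v):1\leq i\leq k-2\}$. By construction, $(V_{ck},E_{ck})$ matches exactly the vertex and edge sets prescribed for a $k$-chord pattern on $e$, so it is witness to the existence claim.

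I do not anticipate any real obstacle, as the lemma is essentially a direct unfolding of definitions: $NB_{cc}(k,e)$ already encodes precisely those common neighbours of $u$ and $v$ eligible to play the role of the $w_i$'s, and the counting condition $|NB_{cc}(k,e)| \geq k-2$ is exactly what is needed to supply enough such vertices. The only minor point worth double-checking is that the trussness-$\geq k$ condition baked into $NB_{cc}$ is strictly stronger than the bare edge-existence requirement implicit in the definition of $C_k$, so the selected edges indeed belong to $G_T$ and the constructed subgraph is a valid $k$-chord pattern rooted at $e$.
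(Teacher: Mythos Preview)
Your proposal is correct and matches the paper's own argument: both observe that each vertex in $NB_{cc}(k,e)$ contributes a triangle on $e$, so having at least $k-2$ such vertices yields the $k-2$ triangles required for a $k$-chord pattern. Your version is slightly more explicit in constructing the witness subgraph and verifying the edges lie in $G_T$, but the underlying idea is identical.
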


\eat{\begin{proof} (Sketch). Observe that $k$-chord pattern on an edge $e=(u,v)$ implies that $k$-2 triangles in the graph contain $e$. Since $\textsc{nb}_{cc}(k,e)$ is the set of nodes $W$ adjacent to $u$ and $v$ such that $\forall w\in W$, $t((u,w))\geq k$ and $t((w,v))\geq k$, $|\textsc{nb}_{cc}(k,e)|$ is equivalent to the number of triangles around $e$. Hence, when $|\textsc{nb}_{cc}(k,e)|\geq (k-2)$, a $k$-chord pattern must exist on $e$.
\end{proof}}

\eat{\begin{algorithm}[t]
		
		\algsetup{
			linenosize=\small
		}
		\begin{algorithmic}[1]
			\small
			\REQUIRE Truss graph $G_T$, edge $e=(u,v)$ and its trussness $k$ where $3\leq k\leq k_{max}$;
			\ENSURE $k$-CT node neighbourhood of edge $e$ $CNN_k(e)$;
			
			\STATE $CNN_k(e)\leftarrow\phi\;$
			
			\STATE $W\leftarrow GetAdjacentNodes(G_T,u)\bigcap GetAdjacentNodes(G_T,v)\;$
			
			\FOR{$w\in W$}
			
			\IF{$t((u,w))\geq k$ and $t((w,v))\geq k$}
			
			\STATE $CNN_k(e)\leftarrow CNN_k(e)\bigcup\{w\}\;$
			
			\ENDIF
			
			\ENDFOR
		\end{algorithmic}
		\caption{$GetCNN$.} \label{alg:getCNN}
	\end{algorithm}
	
	\eat{\begin{algorithm}[t]
			\algsetup{
				linenosize=\small
			}
			\begin{algorithmic}[1]
				\small
				\REQUIRE Truss graph $G_T$, edge $e=(u,v)$ and its $k$-CT node neighbour $CNN_k(e)$;
				\ENSURE $k$-CT edge neighbourhood of edge $e$ $CEN_k(e)$;
				
				\STATE $CEN_k(e)\leftarrow\phi\;$
				
				\STATE $S\leftarrow GetAdjacentEdges(G_T,e)\;$
				
				\FOR{$s=(x,y)\in S$}
				
				\IF{($x\in\{u,v\}$ and $y\in CNN_k(e)$) or ($y\in\{u,v\}$ and $x\in CNN_k(e)$)}
				
				\STATE $CEN_k(e)\leftarrow CEN_k(e)\bigcup\{s\}\;$
				
				\ENDIF
				
				\ENDFOR
			\end{algorithmic}
			\caption{$GetCEN$.} \label{alg:getCEN}
	\end{algorithm}}
}

\underline{\textit{Frequencies of $CCP_{tn}(k_1, k_2)$ and $CCP_{nt}(k_1, k_2)$}}. Consider two different $k$-\textsc{cp}s. $CCP_{tn}$ and $CCP_{nt}$ involve merger of a truss edge belonging to one $k$-\textsc{cp} with a non-truss edge belonging to another $k$-\textsc{cp}. Given two $k$-\textsc{cp}s $C_{k_1}$ and $C_{k_2}$, let edges $e_1$ and $e_2$ be the truss edges of $C_{k_1}$ and $C_{k_2}$, respectively. Intuitively, a pattern is a $CCP_{tn}(k_1, k_2)$ if it contains an embedding of $C_{k_1}$ and of $C_{k_2}$ whereby there is an edge $e_m$ in the pattern that belongs to the two embeddings such that $e_m$ is a truss edge of $C_{k_1}$'s embedding and is a non-truss edge of $C_{k_2}$'s embedding, respectively. In other words, $C_{k_1}$ and $C_{k_2}$ can form a \textsc{ccp} ($CCP_{tn}(k_1, k_2)$) by merging a truss edge $e_1$ from $C_{k_1}$ with a non-truss edge from $C_{k_2}$ if the following conditions are satisfied: (a) \textit{Condition 1:} There is a $C_{k_1}$ pattern on $e_1$ containing $e_2$. (b) \textit{Condition 2:} There is a $C_{k_2}$ pattern on $e_2$ where $e_2\neq e_1$.

Note that due to Lemma~\ref{lem:cnn}, \textit{Condition 1} holds if \linebreak $|NB_{cc}(k_2, e_2)\setminus\{u,v\}|\geq (k_2-2)$ where $e_1=(u,v)$. Further, if $|NB_{cc}(k_1, e_1) \bigcup NB_{cc}(k_2, e_2)\setminus\{u,v\}|\geq (k_1-2)+(k_2-2))$, then the pattern $CCP_{tn}(k_1, k_2)$ must exist. Hence, \textsc{Tattoo} checks the conditions iteratively on decreasing $k_2$ and skips checks for $k_2^{\prime}<k_2$ if the conditions are satisfied for $k_2$\eat{ (due to Lemma~\ref{lem:coverageOfSimpleTrussAntiMonotone})}. The \textit{frequency} of $CCP_{tn}(k_1, k_2)$ is simply the number of such $e_1$ edges. For $CCP_{nt}(k_1, k_2)$, the approach is the same by swapping $C_{k_1}$ with $C_{k_2}$.

\eat{\vspace{-1ex}\begin{definition}\label{def:coverageOfCTTN} {\em
			Given a \textsc{tir} graph $G_T=(V_T,E_T)$, let the \textsc{ccp} $CCP_{tn}(k_1, k_2)$ merge the truss edge of $C_{k_1}$ with a non-truss edge of $C_{k_2}$ where $e_1$ and $e_2$ are the truss edges of $C_{k_1}$ and $C_{k_2}$, respectively. Then, the \textbf{frequency} of $CCP_{tn}(k_1, k_2)$ is defined as $freq(CCP_{tn})=|\{e_1\in E_T\}|$ where $e_1=(u,v)$ and $e_2$ satisfies the following conditions: (1) $t(e_1)\geq k_1$ and $t(e_2)\geq k_2$, (2) $e_2\in EB_{cc}(k_1, e_1)$, (3) $|NB_{cc}(k_2, e_2)\setminus\{u,v\}|\geq (k_2-2)$, and (4) $|NB_{cc}(k_1, e_1)\bigcup NB_{cc}(k_2, e_2)\setminus\{u,v\}|\geq (k_1-2)+(k_2-2)$.\/}
\end{definition}}

\eat{\vspace{-1ex}\begin{example}
		Consider the graph in Figure~\ref{fig:nonTrussEdgeMerge}(a). Observe that every edge is in a $4$-truss. Let us compute the frequency of $CCP_{tn}(4,5)$, which is formed by combining $C_4$ and $C_5$ where $k_1=4$ and $k_2=5$. Each edge $e$ is considered in turn as the truss edge (\ie, $e_1$) of $C_4$. Suppose $e_1=(A,B)$ and $e_2 = (B, D)$ are truss edges. Then, based on Definition~\ref{def:coverageOfCTTN}, $EB_{cc}(4, (A, B))=\{(A,C),(B,C),(A,D),(B,D)\}$ contains $(B, D)$ (Condition 2) and only $t((B,D))\geq 5$ (Condition 1). $|NB_{cc}(5,(B,D))\setminus\{A,B\}|=|\{E,F,G\}|=3\geq(k_2-2)$ (Condition 3). Further, $NB_{cc}(4,(A,B))=\{C,D\}$. Hence, $|NB_{cc}(4,(A,B))\bigcup NB_{cc}(5,(B,D))\setminus\{A,B\}|=|\{C,D,E,F,G\}|=5\geq(k_1-2)+(k_2-2)$ (Condition 4). Consequently, $CCP_{tn}(4,5)$ can be obtained by considering $(A,B)$ as $e_1$. Similarly, $CCP_{tn}(4,5)$ can be found by considering $(A,D)$, $(B,C)$ or $(C,D)$ as $e_1$.
		
		However, when $(A,C)$ is considered as $e_1$, no edges in $EB_{cc}(4, (A, C))$ is in 5-truss as $EB_{cc}(4, (A, C))=\{(C,D), \linebreak (A,D),(B,C),(A,B)\}$. Similarly, when $(D,E)$ is considered as $e_1$, $EB_{cc}(4, (D, E))=\{(C,D),(A,D),(B,D),(D,G),\linebreak (D,F),(B,E), (E,G),(E,F)\}$ where $t((B,D))=t((D,G))=t((D,F))=t((B,E))=t((E,G))=t((E,F))$ and $t((B,D))\geq 5$. However, $|NB_{cc}(5,(B,D))\setminus\{D,E\}|=|\{F,G\}|=2<(k_2-2)$, which is the case for all $e\in EB_{cc}(4, (D, E))$ where $t(e)\geq 5$. Hence, $CCP_{tn}(4,5)$ cannot be obtained by considering $(D,E)$ as $e_1$. The same situation occurs when any edge in the clique consisting of vertices $B$, $D$, $E$, $F$ and $G$ is chosen as $e_1$. Thus, $freq(CCP_{tn})=4$.
		\EndOfProof
\end{example}}

\eat{\begin{lemma}\label{lem:complexityOfGetCTTN}
		The worst-case time and space complexities of computing $CCP_{tn/nt}$ are $O(k_{max})$ and $O(|E_T|+|V_T|)$, respectively.
\end{lemma}}

\eat{
	\begin{algorithm}[t]
		\algsetup{
			linenosize=\small
		}
		\begin{algorithmic}[1]
			\small
			\REQUIRE Truss graph $G_T$, edge $e_1=(u_1,v_1)$ and its trussness $k_1$, edge $e_2=(u_2,v_2)$ and its trussness $k_2$;
			\ENSURE Combined truss patterns $CT_{nnc}$ and frequency $freq(CT_{nnc})$;
			
			\WHILE{$k_2\geq 3$}
			
			\STATE $CNN_{k_1}(e_1)\leftarrow GetCNN(G_T,e_1,k_1)\;$\label{line:getCTNNCConditionStart}
			
			\STATE $CEN_{k_1}(e_1)\leftarrow GetCEN(G_T,e_1,CNN_{k_1}(e_1))\;$
			
			\STATE $CNN_{k_2}(e_2)\leftarrow GetCNN(G_T,e_2,k_2)\;$
			
			\STATE $CEN_{k_2}(e_2)\leftarrow GetCEN(G_T,e_2,CNN_{k_2}(e_2))\;$
			
			\STATE $u\leftarrow GetCommonNode(u_1,v_1,u_2,v_2)\;$
			
			\STATE $E_c\leftarrow\phi /*E_c\textrm{ holds the set of candidate }e_3*/\;$
			
			\FOR{$e^{\prime}=(x,y)\in CEN_{k_1}(e_1)$}
			
			\IF{$u\in\{x,y\}$ and $e^{\prime}\in CEN_{k_2}(e_2)$ and $e^{\prime}\neq e_1$ and $e^{\prime}\neq e_2$}
			
			\STATE $E_c\leftarrow E_c\bigcup\{e^{\prime}\}\;$
			
			\ENDIF
			
			\ENDFOR
			
			\FOR{$e=(u,w)\in E_c$}
			
			\IF{$|GetCNN(G_T,e_1,k_1)\setminus\{u,w\}|\geq (k_1-3)$ and $|GetCNN(G_T,e_2,k_2)\setminus\{u,w\}|\geq (k_2-3)$ and $|GetCNN(G_T,e_1,k_1)\setminus\{u,w\}\bigcup GetCNN(G_T,e_2,k_2)\setminus\{u,w\}|\geq(k_1-3+k_2-3)$}\label{line:getCTNNCConditionEnd}
			
			\STATE $k_3\leftarrow k_2\;$
			
			\WHILE{$k_3\geq 3$}
			
			\STATE $CT_{nnc}\leftarrow CT_{nnc}\bigcup CT_{nnc(k_1,k_3)}\;$
			
			\STATE $freq(CT_{nnc(k_1,k_3)})\leftarrow freq(CT_{nnc(k_1,k_3)})+1\;\label{line:getCTNNCCoverage}$
			
			\STATE $k_3\leftarrow k_3-1\;$
			
			\ENDWHILE
			
			\BREAK
			
			\ENDIF
			
			\ENDFOR
			
			\STATE $k_2\leftarrow k_2-1\;$
			
			\ENDWHILE
			
		\end{algorithmic}
		\caption{$GetNNC$.} \label{alg:getCTNNC}
	\end{algorithm}
}

\eat{
	\begin{algorithm}[t]
		\algsetup{
			linenosize=\small
		}
		\begin{algorithmic}[1]
			\small
			\REQUIRE Truss graph $G_T$, edge $e_1=(u_1,v_1)$, its trussness $k_1$, its $k_1$-CT node and edge neighbourhood $CNN_{k_1}(e_1)$ and $CEN_{k_1}(e_1)$, edge $e_2=(u_2,v_2)$ and its trussness $k_2$;
			\ENSURE Combined truss patterns $CT_{nnc}$ and frequency $freq(CT_{nnc})$;
			
			\WHILE{$k_2\geq 3$}
			
			\STATE $CNN_{k_2}(e_2)\leftarrow GetCNN(G_T,e_2,k_2)\;$\label{line:getCTNNCConditionStart}
			
			\STATE $CEN_{k_2}(e_2)\leftarrow GetCEN(G_T,e_2,CNN_{k_2}(e_2))\;$
			
			\STATE $u\leftarrow GetCommonNode(u_1,v_1,u_2,v_2)\;$
			
			\STATE $E_c\leftarrow\phi /*E_c\textrm{ holds the set of candidate }e_3*/\;$
			
			\FOR{$e^{\prime}=(x,y)\in CEN_{k_2}(e_2)$}
			
			\IF{$u\in\{x,y\}$ and $e^{\prime}\neq e_1$}
			
			\STATE $E_c\leftarrow E_c\bigcup\{e^{\prime}\}\;$
			
			\ENDIF
			
			\ENDFOR
			
			\FOR{$e=(u,w)\in E_c$}
			
			\IF{$|GetCNN(G_T,e_1,k_1)\setminus\{u,w\}|\geq (k_1-3)$ and $|GetCNN(G_T,e,k_2)\setminus\{u,w\}|\geq (k_2-3)$ and $|GetCNN(G_T,e_1,k_1)\setminus\{u,w\}\bigcup GetCNN(G_T,e,k_2)\setminus\{u,w\}|\geq(k_1-3+k_2-3)$}\label{line:getCTNNCConditionEnd}
			
			\STATE $k_3\leftarrow k_2\;$
			
			\WHILE{$k_3\geq 3$}
			
			\STATE $CT_{nnc}\leftarrow CT_{nnc}\bigcup CT_{nnc(k_1,k_3)}\;$
			
			\STATE $freq(CT_{nnc(k_1,k_3)})\leftarrow freq(CT_{nnc(k_1,k_3)})+1\;\label{line:getCTNNCCoverage}$
			
			\STATE $k_3\leftarrow k_3-1\;$
			
			\ENDWHILE
			
			\BREAK
			
			\ENDIF
			
			\ENDFOR
			
			\STATE $k_2\leftarrow k_2-1\;$
			
			\ENDWHILE
			
		\end{algorithmic}
		\caption{$GetNNC$.} \label{alg:getCTNNC}
	\end{algorithm}
}

\begin{figure}[!t]
	\centering
	\includegraphics[width=3.3in]{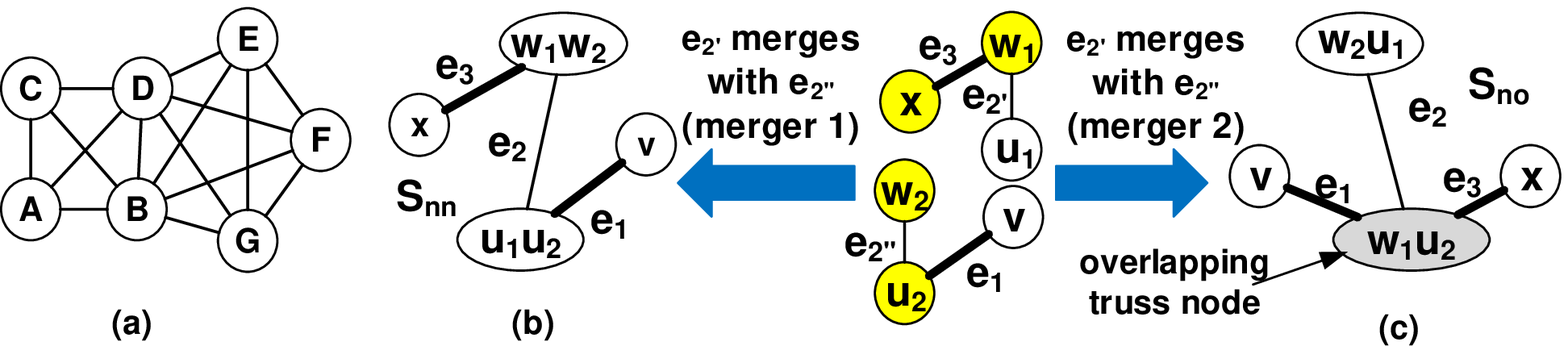}
	\vspace{-2ex}\caption{(a) A $G_T$; (b) Skeleton structure of $CCP_{nn}$; (c) Skeleton structure of $CCP_{no}$. $e_1$ and $e_3$ are truss edges.}\label{fig:nonTrussEdgeMerge}
	\vspace{-3ex}
\end{figure}

\eat{\underline{\textit{Frequencies of $CCP_{nn}(k_1, k_2)$ and $CCP_{no}(k_1, k_2)$}}.
	Recall that\linebreak (Def.~\ref{def:ccp}) a single-edge merge can also involve the merger of two non-truss edges, each from a different $k$-\textsc{cp}. Each non-truss edge contains a truss vertex. There are two ways in which two non-truss edges can merge as shown in Figures~\ref{fig:nonTrussEdgeMerge}(b) and (c). In the former (resp. latter), vertex pairs ($w_1,w_2$) (resp. ($w_2,u_1$)) and ($u_1,u_2$) (resp. ($w_1,u_2$)) are merged. Hence, \eat{Recall from Definition~\ref{def:ccp}, $CCP_{nn}$ involves the merger of a non-truss edge $e$ from a $k_1$-\textsc{cp} with a non-truss edge $e^{\prime}$ from another $k_2$-\textsc{cp} resulting in a merged edge $e_2$ containing a common truss vertex between $e$ and $e^{\prime}$.}a pattern is a $CCP_{nn}$ if it contains embeddings of $C_{k_1}$ and of $C_{k_2}$ whereby there is an edge $e_m$ in the pattern that belongs to both embeddings such that (1) $e_m=(u_m,v_m)$ is a non-truss edge of both embeddings and (2) either $u_m$ or $v_m$ is a truss vertex of both embeddings. Such a merger produces a structure shown in Figure~\ref{fig:nonTrussEdgeMerge}(b) which we refer to as the \textit{skeleton structure} of $CCP_{nn}$ (denoted as $\mathbb{S}_{nn}$). Observe that such a structure is present in \emph{all} $CCP_{nn}$. Hence, we can search for the $\mathbb{S}_{nn}$ of a $CCP_{nn}$ in a \textsc{tir} graph to compute its occurrence and frequency. Specifically, a $CCP_{nn}$ can be obtained if the followings are satisfied: (a) \textit{Condition 1}: There is a $C_{k_1}$ pattern on its truss edge $e_1=(u,v)$ which contains $e_2=(u,w)$. (b) \textit{Condition 2}: There is a $C_{k_2}$ pattern on its truss edge $e_3=(u,x)$ which contains $e_2$.}

\underline{\textit{Frequencies of $CCP_{nn}(k_1, k_2)$ and $CCP_{no}(k_1, k_2)$}}.
Recall that (Def.~\ref{def:ccp}) a single-edge merge can also involve the merger of two non-truss edges, each from a different $k$-\textsc{cp}. Each non-truss edge contains a truss vertex. There are two ways in which two non-truss edges can merge as shown in Figures~\ref{fig:nonTrussEdgeMerge}(b) and (c). In the former (resp. latter), vertex pairs ($w_1,w_2$) (resp. ($w_2,u_1$)) and ($u_1,u_2$) (resp. ($w_1,u_2$)) are merged. Hence, a pattern is a $CCP_{nn}$ if it contains at least one embedding of a structure shown in Figure~\ref{fig:nonTrussEdgeMerge}(b) which we refer to as the \textit{skeleton structure} of $CCP_{nn}$ (denoted as $\mathbb{S}_{nn}$). Hence, we can search for the $\mathbb{S}_{nn}$ of a $CCP_{nn}$ in a \textsc{tir} graph to compute its occurrence and frequency. Specifically, a $CCP_{nn}$ can be obtained if the followings are satisfied: (a) \textit{Condition 1}: There is a $C_{k_1}$ pattern on its truss edge $e_1=(u_1u_2,v)$ which contains $e_2=(u_1u_2,w_1w_2)$. (b) \textit{Condition 2}: There is a $C_{k_2}$ pattern on its truss edge $e_3=(w_1w_2,x)$ which contains $e_2$.

Note that \textit{Condition 1} holds if $|NB_{cc}(k_1, e_1)\setminus\{u_1u_2,w_1w_2\}|\geq (k_1-3)$ (Lemma~\ref{lem:cnn}). Similarly, \textit{Condition 2} holds if \linebreak $|NB_{cc}(k_2, e_3)\setminus\{u_1u_2,w_1w_2\}|\geq (k_2-3)$. Further, if \linebreak $|NB_{cc}(k_1, e_1)\setminus\{u_1u_2,w_1w_2\}\bigcup NB_{cc}(k_2, e_3)\setminus\{u_1u_2,w_1w_2\}|\geq (k_1-3)+(k_2-3)$, then the pattern $CCP_{nn}$ must exist. The \textit{frequency} of a $CCP_{nn}$ is simply the number of skeleton structures $\mathbb{S}_{nn}$ in a \textsc{tir} graph.

\eat{\begin{definition}\label{def:coverageOfCTNNC} {\em
			Given a \textsc{tir} graph $G_T=(V_T,E_T)$, let the composite chord pattern $CCP_{nn}$ merges the non-truss edge of $C_{k_1}$ with a non-truss edge of $C_{k_2}$ where $e_1=(u,v)$ and $e_3=(u,x)$ are the truss edges of $C_{k_1}$ and $C_{k_2}$, respectively. Then, the \textbf{frequency} of a $CCP_{nn}$ is defined as $freq(CCP_{nn})=|\{e_1\in E\}|$ and satisfies the following conditions:\/}
		\begin{enumerate} {\em
				\item $t(e_1)\geq k_1$ and $t(e_3)\geq k_2$,
				\item $e_2=(u,w)\in EB_{cc}(k_1, e_1)$ and $e_2\in EB_{cc}(k_2, e_3)$,
				\item $|NB_{cc}(k_1, e_1)\setminus\{u,w\}|\geq (k_1-3)$,
				\item $|NB_{cc}(k_2, e_3)\setminus\{u,w\}|\geq (k_2-3)$ and
				\item $|NB_{cc}(k_1, e_1)\setminus\{u,w\}\bigcup NB_{cc}(k_2, e_3)\setminus\{u,w\}|\geq (k_1-3)+(k_2-3)$.\/}
		\end{enumerate}
	\end{definition}
	
	\begin{example}
		Reconsider Figure~\ref{fig:nonTrussEdgeMerge}(b). The frequency of $CCP_{nn(5,4)}$ where $k_1=5$ and $k_2=4$ can be computed as follows. Observe that every edge in the clique involving vertices $B$, $D$, $E$, $F$, and $G$ is in a $5$-truss. Suppose $(D,E)$ is $e_1$. Then, $EB_{cc}(5, (D, E))=\{(B,D),(D,G),(D,F),(B,E),\linebreak(E,G),(E,F)\}$. Suppose the merged edge is $(B,D)$. Then $(C,D)$ must be $e_2$ (\ie forming the skeleton structure) if $CT_{nnc(5,4)}$ exists. Since $|NB_{cc}(5, (D,E))\setminus\{B,D\}|=|\{F,G\}|=2\geq(k_1-3)$ (\textit{Condition 3} in Definition~\ref{def:coverageOfCTNNC}), $|NB_{cc}(4, (C, D))\setminus\{B,D\}|=|\{A\}|=1\geq(k_2-3)$ (\textit{Condition 4}) and $|NB_{cc}(5, (D, E))\setminus\{B,D\}\bigcup NB_{cc}(4, (C, D))\setminus\{B,D\}|=|\{F,G,A\}|=3\geq (k_1-3)+(k_2-3)$ (\textit{Condition 5}), then $(B,D)$ and $(C,D)$ can be the merged edge and $e_2$, respectively. Hence, $CCP_{nn}(4,5)$ can be found in this data graph. The same applies when $(D,F)$, $(D,G)$, $(B,E)$, $(B,F)$ or $(B,G)$ is chosen as $e_1$. Selecting other edges as $e_1$ yield no suitable candidate for the merged edge and $e_2$. Hence, $freq(CCP_{nn}(4,5))=6$.
		\vspace{-0.5ex}\EndOfProof\end{example}}

$CCP_{no}$ is very similar to $CCP_{nn}$ except that the truss vertices of the merged edges are not combined during the merger. Figure~\ref{fig:nonTrussEdgeMerge}(c) illustrates the \textit{skeleton structure} of a $CCP_{no}$ ($\mathbb{S}_{no}$), which occurs in all $CCP_{no}$. \eat{Hence, conditions for \textsc{ccp} $CCP_{no}$ are identical to those discussed above except that in this case $e_3=(w_1u_2,x)$ instead of $(w_1w_2,x)$.} The \textit{frequency} of a $CCP_{no}$ is the number of skeleton structures $\mathbb{S}_{no}$.

Observe that $freq(CCP_{nn}(k_1,k_2))=freq(CCP_{no}(k_2,k_1))$ since $k_1$ and $k_2$ can be swapped. The same is true for $CCP_{tn}$ and $CCP_{nt}$. Hence, when combining two $k$-\textsc{cp}s, we only consider the case when $k_1\geq k_2$.

\eat{\vspace{-1ex}\begin{lemma}\label{lem:complexityOfGetCTNN}
		\textit{The worst-case time and space complexities of computing $CCP_{nn/no}$ are $O(k_{max}|EB_{max}|)$ and $O(|E_T|+|V_T|)$, respectively, where $EB_{max}$ is the largest $k$-\textsc{ccp} edge neighbourhood.}
\end{lemma}}

\begin{algorithm}[t]
	\algsetup{
		linenosize=\scriptsize
	}
	\begin{algorithmic}[1]
		\scriptsize
		\REQUIRE \textsc{tir} graph $G_T=(V_T,E_T)$, trussness of all edges $T(e)$;
		\ENSURE Composite chord patterns $P=\{CCP_{tn}\bigcup CCP_{nn}\bigcup CCP_{no}\}$ and frequency $freq(P)$ where $CCP_{tn}=\{CCP_{tn}(k_1,k_2) | 3<k_1\leq k_{max}, 3\leq k_2\leq k_{max}\}$, $CCP_{nn}=\{CCP_{nn}(k_1,k_2) | 3<k_1\leq k_{max}, 3\leq k_2\leq k_{max}\}$ and $CCP_{no}=\{CCP_{no}(k_1,k_2) | 3<k_1\leq k_{max}, 3\leq k_2\leq k_{max}\}$;
		
		\STATE $CCP_{tn}\leftarrow\phi$, $CCP_{nn}\leftarrow\phi$, $CCP_{no}\leftarrow\phi$\;
		
		\FOR{$e_1\in E_T$}
		
		\STATE $k_1\leftarrow t(e_1)$\;
		
		\STATE Compute $NB_{cc}(k_1, e_1)$ \label{line:combinedTrussPatternGetCNN}/* \textrm{\textsf{compute $k$-\textsc{ccp} node neighbourhood}}*/\;
		
		\STATE Compute $EB_{cc}(k_1, e_1)$ \label{line:combinedTrussPatternGetCEN}/* \textrm{\textsf{compute $k$-\textsc{ccp} edge neighbourhood}}*/\;
		
		\WHILE{$k_1\geq 4/* \textrm{\textsf{find composite chord patterns}}*/$} \label{line:combinedTrussPatternSkipSimple4TrussPattern}
		
		\FOR{$e_2\in EB_{cc}(k_1, e_1)$}
		
		\STATE $k_2\leftarrow \textsc{Min}(t(e_2),k_{max}-k_1)\;$
		
		\STATE $CCP_{tn}, freq(CCP_{tn})\leftarrow GetTN(G_T,e_1,k_1,e_2,k_2) \label{line:combinedTrussPatternGetCTTN}\;$
		
		\STATE $CCP_{nn}, freq(CCP_{nn})\leftarrow GetNN(G_T,e_1,k_1,NB_{cc}(k_1, e_1),\linebreak EB_{cc}(k_1, e_1),e_2,k_2,\textsf{NN}) \label{line:combinedTrussPatternGetCTNNC}\eat{/*Alg.~\ref{alg:getCTNN}, Appendix~\ref{app:pseudo}*/}\;$
		
		\STATE $CCP_{no}, freq(CCP_{no})\leftarrow GetNN(G_T,e_1,k_1,NB_{cc}(k_1, e_1),\linebreak EB_{cc}(k_1, e_1),e_2,k_2,\textsf{NO}) \label{line:combinedTrussPatternGetCTNNN}\eat{/*Alg.~\ref{alg:getCTNN}*/}\;$
		
		\ENDFOR
		
		$k_1\leftarrow k_1-1\;$
		
		\ENDWHILE \label{line:end}
		
		\ENDFOR
	\end{algorithmic}
	\caption{$\textsc{GenCombChordPatterns}$.} \label{alg:findCombinedTrussPatterns}
\end{algorithm}

\underline{\textbf{Algorithm}}. Putting the above strategies together (outlined in Algorithm~\ref{alg:findCombinedTrussPatterns}), the \textsc{ccp}s are computed as follows. For each edge in $G_T$, compute the $k_1$-\textsc{ccp} node and edge neighbourhoods (Lines~\ref{line:combinedTrussPatternGetCNN}-\ref{line:combinedTrussPatternGetCEN}). Next, it computes the four types of \textsc{ccp}s (Lines~\ref{line:combinedTrussPatternSkipSimple4TrussPattern}-\ref{line:end}) based on the aforementioned strategies. Note that the smallest \textsc{ccp} generated is a \textsc{ccp}(3,4) due to Lemma~\ref{lem:combinedSimple3TrussPattern}. \eat{Hence, $k_1\geq 4$ in Line~\ref{line:combinedTrussPatternSkipSimple4TrussPattern}.} Also, we only compute $CCP_{tn}(k_1,k_2)$ instead of both $CCP_{tn}(k_1,k_2)$ and $CCP_{nt}(k_1,k_2)$ as $CCP_{nt}(k_1,k_2)$ is covered when $k_2$ and $k_1$ are swapped.

\vspace{-1ex}\begin{theorem}\label{lem:complexityFindCombinedTrussPatterns}
	\textit{The worst-case time and space complexities of the \textsc{ccp} generation technique are $O(k_{max}^2|E_T||EB_{max}|^2)$ and $O(k_{max}|E_T|+|V_T|)$, respectively.
}\end{theorem}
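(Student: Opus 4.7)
The plan is to charge the cost of Algorithm~\ref{alg:findCombinedTrussPatterns} to its nested loop structure and then apply the complexity results already established for the helper routines $GetTN$ and $GetNN$ (the $O(k_{max})$ bound for $CCP_{tn}/CCP_{nt}$ and the $O(k_{max}|EB_{max}|)$ bound for $CCP_{nn}/CCP_{no}$, noted in the preceding discussion). The outer \textbf{for} loop contributes a factor of $|E_T|$. Inside each iteration, the computation of $NB_{cc}(k_1,e_1)$ and $EB_{cc}(k_1,e_1)$ can be carried out by intersecting the adjacency lists of the endpoints of $e_1$ and filtering by trussness, which costs $O(|EB_{max}|)$ per call since the relevant neighbourhood is bounded by $|EB_{max}|$.

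Next, I would bound the \textbf{while} loop on $k_1$: since $k_1$ starts at $t(e_1)\le k_{max}$ and decreases down to $4$, it executes at most $O(k_{max})$ times. Inside it, the \textbf{for} loop over $e_2\in EB_{cc}(k_1,e_1)$ contributes another $O(|EB_{max}|)$ factor. In the body, $GetTN$ costs $O(k_{max})$, and each of the two $GetNN$ invocations costs $O(k_{max}|EB_{max}|)$ because they must iterate over candidate skeleton edges within the $k$-\textsc{ccp} edge neighbourhood and, for each, perform constant-time checks of the conditions on $NB_{cc}$ sets (these sets are already materialized). The dominant term is $GetNN$, so the body costs $O(k_{max}|EB_{max}|)$. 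Multiplying the loop factors yields
$|E_T|\cdot k_{max}\cdot |EB_{max}|\cdot O(k_{max}|EB_{max}|) = O(k_{max}^2|E_T||EB_{max}|^2),$
establishing the claimed time bound.

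For the space bound, I would argue as follows. Storing $G_T$ takes $O(|V_T|+|E_T|)$. For each edge $e_1$, the neighbourhood structures $NB_{cc}$ and $EB_{cc}$ are reused and need only $O(|EB_{max}|)$ scratch space, which is subsumed by $O(|E_T|)$. The output consists of at most $O(k_{max}^2)$ distinct \textsc{ccp} ``types'' (parameterised by $(k_1,k_2)$ with $k_1,k_2\le k_{max}$), together with their frequencies; representing each pattern with its frequency takes $O(k_{max})$ space, but since patterns are indexed by the truss edges that witness them and we only retain aggregated frequencies (not the full list of embeddings), the overall working set for patterns is $O(k_{max}|E_T|)$, dominated by the per-edge bookkeeping of trussness-indexed counters across the $k_{max}$ possible values of $k_1$. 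Summing the components gives $O(k_{max}|E_T|+|V_T|)$.

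The main obstacle I anticipate is a clean accounting of the $GetNN$ cost: verifying the $O(k_{max}|EB_{max}|)$ bound requires that the tests of Conditions~1--2 (each a set-size computation on $NB_{cc}$) are amortised to constant per candidate $e_2$, given that $NB_{cc}(k_1,e_1)$ is computed once before entering the inner loop and $NB_{cc}(k_2,e_3)$ can be obtained incrementally. A secondary subtlety is justifying that no double counting across $CCP_{tn}$ and $CCP_{nt}$ (and symmetrically $CCP_{nn}$ versus $CCP_{no}$) inflates the storage, which follows from the $k_1\ge k_2$ ordering discussed immediately before the theorem statement.
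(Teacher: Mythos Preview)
Your proposal is correct and follows essentially the same nested-loop accounting as the paper: outer loop over $E_T$, inner \textbf{while} over $k_1\le k_{max}$, inner \textbf{for} over $EB_{cc}(k_1,e_1)$, and the dominant $O(k_{max}|EB_{max}|)$ cost of $GetNN$ in the body, multiplied out to $O(k_{max}^2|E_T||EB_{max}|^2)$. The only notable difference is in how the $O(k_{max}|E_T|)$ space term is attributed: the paper charges it to storing the neighbourhood structures $NB$ across edges, whereas you treat $NB_{cc}/EB_{cc}$ as per-edge scratch and instead charge the term to trussness-indexed counters; both routes land on the same bound, but your ``per-edge bookkeeping of trussness-indexed counters'' justification is a bit loose compared to the paper's direct $NB$ storage argument.
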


\vspace{-2ex}
\subsection{Patterns from a TOR Graph} \label{sec:torg}
\eat{We now present candidate pattern generation from the \textsc{tor} graph $G_O$. Recall that $G_O$ comprises of subgraphs that are stars, paths, cycles, and small connected subgraphs of unique topologies.} Generation of candidates from a \textsc{tor} graph consists of two phases: \textit{star pattern extraction} and \textit{small pattern extraction}. The former extracts star and asterism patterns. Subsequently, the edges involved in these patterns are removed from $G_O$ resulting in further decomposition of the \textsc{tor} graph. The resultant graph is referred to as the \textit{remainder graph} ($G_R$). Then, the second phase extracts paths, cycles, and small connected subgraphs from $G_R$.\eat{ The formal algorithms are given in~\cite{tech}.}

\textbf{Extraction of star and asterism patterns}. The frequencies of these patterns can be derived directly from their definitions (Sec.~\ref{sec:pattop}). Specifically, $freq(S_k)=|\{v | v\in V_O, deg(v)=k\}|$ and $freq(A_S)=freq(\{E_m=\{e_{m_1},\ldots,e_{m_{n-1}}\})$
where $e_{m_i}=(r_i,r_{i+1})\in E_O$, $\{k,k_i\}\geq\epsilon$, $deg(r_i)=k_i$ and $deg(r_{i+1})=k_{i+1}$.
Algorithm~\ref{alg:findStarPatterns} outlines the procedure. The star and asterism patterns are extracted in Lines~\ref{line:findStarPatternsStarStart} to~\ref{line:findStarPatternsStarEnd} and Lines~\ref{line:findStarPatternsCombinedStarStart} to~\ref{line:findStarPatternsCombinedStarEnd}, respectively. Briefly, asterism patterns are found using breadth-first search (\textsc{bfs}). A vector of vertices is used to keep track of star centers in an asterism pattern. We ``grow'' the pattern by adding a neighbouring vertex $z$ of the current star center being considered only if $deg(z)\geq\epsilon$ and when the size of the grown pattern is less than or equals to $\eta_{max}$.

\eat{\begin{algorithm}[t]
		\algsetup{
			linenosize=\scriptsize
		}
		\begin{algorithmic}[1]
			\scriptsize
			\REQUIRE \textsc{tor} graph $G_O=(V_O, E_O)$
			\ENSURE Stars and asterisms $P_s$ and frequency $freq(P_s)$;
			
			\STATE $P_s\leftarrow\phi\;$
			
			\FOR{$v\in V_O$}\label{line:findStarPatternsStarStart}
			
			\IF{$deg(v)\geq\epsilon$} \label{line:findStarPatternsStarEpsilon}
			
			\STATE $P_s\leftarrow P_s\bigcup S_{deg(v)}\;$
			
			\STATE $freq(S_{deg(v)})\leftarrow freq(S_{deg(v)})+1\;$
			
			\ENDIF
			
			\ENDFOR\label{line:findStarPatternsStarEnd}
			
			\FOR{$e=(u,v)\in E_O$}\label{line:findStarPatternsCombinedStarStart}
			
			\IF{$deg(u)\geq\epsilon$ and $deg(v)\geq\epsilon$}\label{line:findStarPatternsCombinedStarEpsilon}
			
			\STATE $k_1\leftarrow \textsc{Min}(deg(u),deg(v))\;$
			
			\STATE $k_2\leftarrow \textsc{Max}(deg(u),deg(v))\;$
			
			\STATE $P_s\leftarrow P_s\bigcup A_{k_1,k_2}\;$
			
			\STATE $freq(A_{k_1,k_2})\leftarrow freq(A_{k_1,k_2})+1\;$
			
			\ENDIF
			
			\ENDFOR\label{line:findStarPatternsCombinedStarEnd}
			
		\end{algorithmic}
		\caption{$\textsc{GenStarPatterns}$.} \label{alg:findStarPatterns}
	\end{algorithm}
}

\begin{algorithm}[t]
	\algsetup{
		linenosize=\scriptsize
	}
	\begin{algorithmic}[1]
		\scriptsize
		\REQUIRE \textsc{tor} graph $G_O=(V_O, E_O)$
		\ENSURE Stars and asterisms $P_s$ and frequency $freq(P_s)$;
		
		\STATE $P_s\leftarrow\phi\;$
		
		\FOR{$v\in V_O$}\label{line:findStarPatternsStarStart}
		
		\IF{$deg(v)\geq\epsilon$} \label{line:findStarPatternsStarEpsilon}
		
		\STATE $P_s\leftarrow P_s\bigcup S_{deg(v)}\;$
		
		\STATE $freq(S_{deg(v)})\leftarrow freq(S_{deg(v)})+1\;$
		
		\STATE $Q\leftarrow\phi/*\textrm{Q is a queue}*/\;$\label{line:findStarPatternsCombinedStarStart}
		
		\STATE $SC\leftarrow \textsc{InsertLast}(SC,v)/* \textrm{\textsf{$v$ is appended to $SC$, a vector of nodes}}*/\;$
		
		\STATE $Q\leftarrow\textsc{Enqueue}(Q,SC)\;$
		
		\WHILE{$Q\neq\phi$}
		
		\STATE $SC_{curr}\leftarrow\textsc{Dequeue}(Q)$
		
		\STATE $u\leftarrow\textsc{GetLast}(SC_{curr})/* \textrm{\textsf{retrieve last element in $SC_{curr}$}}*/\;$
		
		\FOR{$z\in\textsc{Neighbours}(u)$}
		
		\IF{$z\notin SC_{curr}$ and $deg(z)\geq\epsilon$ and $\textsc{Size}(SC_{curr})+\textsc{Size}(S_{deg(z)})-1\leq\eta_{max}$}
		
		\STATE $SC_{curr}\leftarrow\textsc{InsertLast}(SC_{curr},z)\;$
		
		\STATE $P_s\leftarrow P_s\bigcup A_{SC_{curr}}\;$
		
		\STATE $freq(A_{SC_{curr}})\leftarrow freq(A_{SC_{curr}})+1\;$
		
		\STATE $Q\leftarrow\textsc{Enqueue}(Q,SC_{curr})\;$
		
		\ENDIF
		
		\ENDFOR
		
		\ENDWHILE\label{line:findStarPatternsCombinedStarEnd}
		
		\ENDIF
		
		\ENDFOR\label{line:findStarPatternsStarEnd}
		
	\end{algorithmic}
	\caption{$\textsc{GenStarPatterns}$.} \label{alg:findStarPatterns}
\end{algorithm}

\vspace{-1ex} \begin{lemma}\label{lem:complexityFindStarLikePatterns}
	\textit{The worst-case time and space complexities of star and asterism pattern extraction are $O(|V_O|^2)$ and $O(|E_O|+|V_O|)$, respectively.}
	\vspace{-0.5ex}\end{lemma}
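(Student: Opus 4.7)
The plan is to bound the two phases of Algorithm~\ref{alg:findStarPatterns} separately. For the star enumeration (the outer scan over $V_O$, excluding the nested BFS body), I would assume vertex degrees are available in $O(1)$ from the adjacency-list representation built once during graph decomposition. Each iteration then performs a constant amount of work: test $deg(v)\geq\epsilon$, emit the star pattern $S_{deg(v)}$, and update its frequency counter in a hash map. This contributes $O(|V_O|)$ time and only $O(|V_O|)$ auxiliary space.

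For the asterism enumeration, I would analyze the BFS rooted at each eligible $v$. Two structural observations drive the bound. First, the $z\notin SC_{curr}$ check forces every sequence ever enqueued to correspond to a simple path of star-center vertices in $G_O$. Second, the explicit size guard $\textsc{Size}(SC_{curr})+\textsc{Size}(S_{deg(z)})-1\leq\eta_{max}$ caps the length of such a sequence by a constant determined by $\eta_{max}$ and $\epsilon$: since each appended center contributes at least $\epsilon-1$ fresh edges to the asterism, no more than $\lfloor(\eta_{max}+1)/(\epsilon-1)\rfloor$ centers can be stacked. Treating $\eta_{max}$ and $\epsilon$ as plug-specified constants, I would argue that the cumulative work of the BFS rooted at a fixed $v$ is $O(|V_O|)$ in the worst case, because the search tree has constant depth and the reachable star-center frontier contains at most $|V_O|$ distinct vertices. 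Summing over the $O(|V_O|)$ outer iterations then yields the claimed $O(|V_O|^2)$ time.

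For the space bound, I would account for three contributions: the adjacency-list representation of $G_O$ occupies $O(|V_O|+|E_O|)$; the BFS queue $Q$ at any instant holds bounded-length sequences whose combined content is $O(|V_O|)$ in the worst case; and the output $P_s$ with its frequency map is built incrementally and indexed by canonical pattern keys, so its footprint is dominated by the number of distinct patterns, itself $O(|V_O|)$. Adding these gives $O(|V_O|+|E_O|)$ working space.

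The main obstacle I foresee is the per-root bound on the BFS, because there is no global visited set across different sequences: a single vertex can legitimately appear in many distinct sequences starting from $v$, so a naive analysis would multiply the branching factor by the depth and risk a blow-up. I would handle this by leaning on the constant-depth bound derived from the $\epsilon$ threshold and on the observation that each asterism sequence ending at a given frontier vertex is determined by the choice of simple path from $v$, whose number is controlled by $G_O$'s local structure. A secondary subtlety is accounting for asterisms that are re-discovered from different starting centers (e.g., the reverse traversal); this contributes only a constant multiplicative overhead per canonical pattern and therefore does not affect the asymptotic bound.
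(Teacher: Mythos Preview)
Your proposal follows essentially the same skeleton as the paper's proof: bound the per-root BFS by $O(|V_O|)$ and sum over the $|V_O|$ roots for time, and charge space to the adjacency-list storage of $G_O$. The paper's argument is terser---it simply asserts that in the worst case (a strongly connected $G_O$) each BFS visits every other vertex once---whereas you go further and explicitly invoke the $\eta_{max}$ size guard and the $\epsilon$ threshold to cap the depth of each center sequence by a constant; this extra reasoning is not in the paper but is the right lever for justifying the per-root bound given that Algorithm~\ref{alg:findStarPatterns} has no global visited set. Your acknowledged ``main obstacle'' (a vertex reappearing across many sequences from the same root) is real, and neither your sketch nor the paper's proof fully closes it; the cleanest fix is to observe that once any non-root center $u$ has been appended, the size check forces $deg(u)\leq\eta_{max}$, so every non-root level has branching factor bounded by the constant $\eta_{max}$, making the entire subtree below the root constant-sized and leaving only the $O(deg(v))\leq O(|V_O|)$ scan at the root.
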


\textbf{Extraction of small patterns.} The remainder graph $G_R$ is primarily composed of small connected components such as paths, cycles, and subgraphs with unique topology. Algorithm~\ref{alg:findSmallPatterns} outlines the extraction of these small patterns and we denote $k$-cycle as $Y_k$ and subgraphs with unique topology as $U$. Given a graph $G=(V,E)$, a \textit{$k$-path}, denoted as $P_{k}=(V_k,E_k)$, is a walk of length $k$ containing a sequence of vertices $v_1,v_2,\cdots,v_k,v_{k+1}$ where $E_k\subseteq E$, $V_k\subseteq V$ such that all vertices in $V_k$ are distinct. A \textit{$k$-cycle} is simply a closed $(k-1)$-path where $k\geq 3$.\eat{ That is, there is an edge $(v_k,v_1)$ in $k$-cycle connecting the ending node ($v_k$) of $(k-1)$-path to starting node ($v_1$) of $(k-1)$-path.} We refer to small subgraph patterns as connected components in $G_R$ that are neither $k$-paths nor $k$-cycles. Note that 1-path, 2-path, 3-cycle and 4-cycle are basic building blocks of real-world networks~\cite{milo2002}.\eat{ Hence, they are likely to be useful for constructing queries.} Recall that in \textsc{Tattoo}, we consider them as \textit{default} patterns\eat{\footnote{\scriptsize Default patterns are permanent fixture on the \textsc{gui} regardless of the dataset.}} and they are not part of the candidate canned pattern set. Hence, we extract all $k$-paths for $k > 2$ (Lines~\ref{line:findSmallPatternsFindPathStart}-\ref{line:findSmallPatternsFindPathEnd}) and $k$-cycles for $k>4$ (Lines~\ref{line:findSmallPatternsFindCycleStart}-\ref{line:findSmallPatternsFindCycleEnd}) and their frequencies. After that, small connected subgraphs and their corresponding frequencies are extracted.\eat{ Since it is possible to have multiple occurrences of a subgraph with the same topology, \textsc{Tattoo} handles this by checking if each newly discovered subgraph pattern is isomorphic to an existing one and updates its frequency accordingly.}

\begin{algorithm}[t]
	\algsetup{
		linenosize=\small
	}
	\begin{algorithmic}[1]
		\small
		\REQUIRE Remainder graph $G_R=(V_R, E_R)$, pattern budget $b=(\eta_{min}, \eta_{max}, \gamma)$
		\ENSURE Small patterns $P_r=\{P\bigcup Y\bigcup U\}$ and frequency $freq(P_r)$ where $P=\{P_k | k\geq 3\}$, $Y=\{Y_k | k\geq 5\}$;
		
		\STATE $P_r\leftarrow\phi$, $P\leftarrow\phi$, $Y\leftarrow\phi$, $U\leftarrow\phi\;$
		
		\STATE $UC_{maxID}\leftarrow 0\;/*\textrm{maximum ID of UC}*/$
		
		\FOR{$v\in V_R$}
		
		\STATE \textrm{set $v$ as unvisited}\;
		
		\ENDFOR
		
		\FOR{$v\in V_R$}
		
		\IF{$v$\textrm{ is not visited}}
		
		\STATE \textrm{find component $C=(V_C,E_C)$ containing $v$}\;
		
		\STATE $n_{deg1}\leftarrow 0\;/*\textrm{num of nodes with deg=1}*/$
		
		\STATE $n_{deg2}\leftarrow 0\;/*\textrm{num of nodes with deg=2}*/$
		
		\FOR{$u\in V_C$}\label{line:findSmallPatternsFindNumDegStart}
		
		\IF{$deg(u)=1$}
		
		\STATE $n_{deg1}\leftarrow n_{deg1}+1$\;
		
		\ELSIF{$deg(u)=2$}
		
		\STATE $n_{deg2}\leftarrow n_{deg2}+1$\;
		
		\ENDIF
		
		\STATE \textrm{set $u$ as visited}\;
		
		\ENDFOR\label{line:findSmallPatternsFindNumDegEnd}
		
		\IF{$n_{deg1}=2$ and $n_{deg2}=|V_C|-2$ and $|V_C|\neq 2$ and $|V_C|\neq 3$}\label{line:findSmallPatternsFindPathStart}
		
		\STATE $P\leftarrow P\bigcup P_{|V_C|-1}\;$
		
		\STATE $freq(P_{|V_C|-1})\leftarrow freq(P_{|V_C|-1})+1\;$\label{line:findSmallPatternsFindPathEnd}
		
		\ELSIF{$n_{deg1}=0$ and $n_{deg2}=|V_C|$ and $|V_C|\neq 3$ and $|V_C|\neq 4$}\label{line:findSmallPatternsFindCycleStart}
		
		\STATE $Y\leftarrow Y\bigcup Y_{|V_C|}\;$
		
		\STATE $freq(Y_{|V_C|})\leftarrow freq(Y_{|V_C|})+1\;$\label{line:findSmallPatternsFindCycleEnd}
		
		\ELSIF{$|E_C|\geq b.\eta_{min}$ and $|E_C|\leq b.\eta_{max}$}\label{line:findSmallPatternsFindUncommonStart}
		
		\IF{$\textsc{IsIsomorphic}(C,U)=true$}\label{line:findSmallPatternsFindUncommonIsomorphicStart}
		
		\STATE $U_{currID}\leftarrow \textsc{GetID}(C,U)\;$
		
		\STATE $freq(U_{currID})\leftarrow freq(U_{currID})+1\;$\label{line:findSmallPatternsFindUncommonIsomorphicEnd}
		
		\ELSE\label{line:findSmallPatternsFindUncommonNotIsomorphicStart}
		
		\STATE $U\leftarrow U\bigcup (C,U_{maxID})\;$
		
		\STATE $freq(U_{maxID})\leftarrow 1\;$
		
		\STATE $U_{maxID}\leftarrow U_{maxID}+1\;$
		
		\ENDIF\label{line:findSmallPatternsFindUncommonNotIsomorphicEnd}
		
		\ENDIF\label{line:findSmallPatternsFindUncommonEnd}
		
		\ENDIF
		
		\ENDFOR
	\end{algorithmic}
	\caption{$\textsc{GenSmallPatterns}$.} \label{alg:findSmallPatterns}
\end{algorithm}

\vspace{-1ex}\begin{lemma}\label{lem:complexityFindSmallPatterns}
	\textit{Worst-case time and space complexities \eat{of Alg.~\ref{alg:findSmallPatterns}}to find small patterns are $O(\eta_{max}|V_R|\eta_{max}!)$ and $O(|E_R|+|V_R|)$, respectively.}
\end{lemma}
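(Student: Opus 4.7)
The plan is to charge costs separately to (i) the outer traversal of $G_R$, (ii) the component classification into paths/cycles, and (iii) the isomorphism check that guards the addition of a new small subgraph to $U$. The dominant factor, as usual for graph-isomorphism-bearing algorithms, will come from (iii); everything else will be absorbed into the linear $O(|V_R|+|E_R|)$ traversal cost.

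First I would analyze the traversal. The outer loop in Algorithm~\ref{alg:findSmallPatterns} visits every vertex of $V_R$, but the BFS/DFS used implicitly at line~9 to materialize the component $C=(V_C,E_C)$ containing the current seed $v$ marks the vertices it explores. Amortized across the whole run, the cumulative cost of finding all components and of the degree-counting loop (lines~11--19) is therefore $O(|V_R|+|E_R|)$, since each vertex is visited once and each edge contributes to at most one component's degree accumulation. The path/cycle branches (lines~20--25) and their frequency updates take $O(1)$ work per component.

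Next I would analyze the ``unique small subgraph'' branch (lines~26--34), which is where the factorial appears. The branch is entered only for components $C$ with $|E_C|\le \eta_{max}$, so $|V_C|\le \eta_{max}+1$. The isomorphism test \textsc{IsIsomorphic}$(C,U)$ is, in the worst case, a brute-force check over all vertex bijections: $O(|V_C|!)=O(\eta_{max}!)$ permutations, each verified by scanning the at most $\eta_{max}$ edges of $C$, i.e.\ $O(\eta_{max}\cdot \eta_{max}!)$ per component. Since the number of entered components is trivially bounded by $|V_R|$, summing this cost yields $O(\eta_{max}|V_R|\eta_{max}!)$, which subsumes the linear traversal cost and gives the stated time bound. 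The subtle point, and the one I expect to be the main obstacle, is making precise the cost of \textsc{IsIsomorphic} and \textsc{GetID}; concretely, one has to argue that even when $U$ accumulates many patterns, the search against $U$ can be organized (e.g.\ via a canonical-form index keyed by $|V_C|$ and degree sequence) so that the amortized per-component test stays within $O(\eta_{max}\cdot\eta_{max}!)$ rather than blowing up by an extra $|U|$ factor.

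Finally, for the space bound I would simply enumerate the resident structures: the input $G_R$ occupies $O(|V_R|+|E_R|)$; the visited-marker array is $O(|V_R|)$; and the outputs $P$, $Y$, and $U$ together store at most one representative per distinct small topology, whose total description is bounded by the edges that induced them and hence by $O(|E_R|)$. The BFS queue auxiliaries are $O(|V_R|)$. Adding these gives $O(|E_R|+|V_R|)$, establishing the claimed space complexity.
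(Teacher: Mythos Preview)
Your proposal is correct and follows essentially the same line as the paper's proof: both identify the \textsc{IsIsomorphic} call as the dominant cost, bound a single check by $O(\eta_{max}\cdot\eta_{max}!)$ (the paper cites VF2 for this), multiply by the at most $|V_R|$ iterations of the outer loop, and obtain the space bound from storing $G_R$ plus the output patterns. Your treatment is in fact more careful than the paper's on one point---you flag the potential extra $|U|$ factor from matching against all stored templates and sketch how a canonical-form index avoids it---whereas the paper's sketch simply asserts the per-call cost without addressing this.
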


\textbf{Remark.} Exponential time complexity of the small pattern extraction phase is due to the \eat{graph }isomorphism check. The time cost is small in practice due to the small size of candidate patterns and their number is typically small in $G_R$.

\eat{\begin{algorithm}[!t]
		\algsetup{
			linenosize=\scriptsize
		}
		\begin{algorithmic}[1]
			\small
			\REQUIRE Candidate pattern set $P_{all}$ and its frequency $freq(P_{all})$, plug $b=(\eta_{min}, \eta_{max}, \gamma, \mathbb{N})$;
			\ENSURE Canned pattern set $\mathcal{P}$;
			
			\STATE $\mathcal{P}\leftarrow \textsc{GetDefaultPatterns}()\;$
			
			\FOR{$p=(V_p,E_p)\in P_{all}\;$}\label{line:patternMiningPruneStart}
			
			\IF{$|E_p|<\eta_{min}$ or $|E_p|>\eta_{max}$ or $freq(p)<\delta$}
			
			\STATE $P_{all}\leftarrow P_{all}\setminus p\;$
			
			\ENDIF
			
			\ENDFOR\label{line:patternMiningPruneEnd}
			
			\STATE $\mathcal{P}\leftarrow\textsc{SelectCandidates}(P_{all},freq(P_{all}),b)\;$\label{line:selectCandidates}
			
		\end{algorithmic}
		\caption{$\textsc{SelectCannedPatterns}$.} \label{alg:patternMining}
\end{algorithm}}

\vspace{-1ex}
\section{Selection of Canned Patterns}\label{sec:cannedPatternSelector}
In this section, we describe the algorithm to select canned pattern set $\mathcal{P}$ from the generated candidate patterns. \eat{The algorithm selects patterns based on the pattern set score (Definition~\ref{def:patternScore}) derived from the characteristics of canned patterns (\ie coverage, cognitive load and similarity/diversity). Hence, we begin by introducing quantification of these characteristics.\eat{ In particular, computation of coverage, cognitive load and similarity are performed in Lines~\ref{line:patternMiningCovCogStart},~\ref{line:patternMiningCog} and ~\ref{line:patternMiningDiversity}, respectively, in Algorithm~\ref{alg:greedy}. Note that the multilinear function $F$ (Lines~\ref{line:dGreedyMultilinear1} and~\ref{line:dGreedyMultilinear2}) in Algorithm~\ref{alg:dGreedy} is derived from these characteristics as well.} As we shall see, the computation of these three measures are different from~\cite{catapult} due to the nature of large networks.} We begin by presenting the theoretical underpinning that influences the design of our algorithm.

\vspace{-1ex}
\subsection{Theoretical Analysis} \label{sec:soln}
Due to the hardness of the \textsc{cps} problem, we design an approximation algorithm to address it. We draw on insights from a related problem, \textit{team formation problem} (\textsc{tfp})~\cite{chen2004,bhowmik2014}, which aims to hire a team of individuals $T$ from a group of experts $S$ for a specific project where $T\subseteq S$. Bhowmik \textit{et al.}~\cite{bhowmik2014} proposed that several aspects should be considered in \textsc{tfp}, namely, skill coverage (\textit{skill}), social compatibility (\textit{social}), teaming cost (\textit{team}) and miscellaneous aspects such as redundant skills avoidance (\textit{red}) and inclusion of selected experts (\textit{exp}).\eat{ In \textsc{tfp}, a project demands particular set of skills and a \textit{skill} is \textit{covered} by an individual $n$ if $n$ is in the team. \textit{Social} considers the communication cost between pairs of skills for a particular project whereas \textit{team} is the cost of forming a team and includes consideration of team size and personnel cost. \textit{Red} is related to skill redundancy of individuals in the team due to overlapping skill set. In contrast, \textit{exp} facilitates inclusion of a predefined set of experts in the team.} The formulation of \textsc{tfp} is given as $s(T^{\prime})=\alpha_{skill}f_{skill}(T^{\prime})-\alpha_{social}f_{social}(T^{\prime})-\alpha_{team}f_{team}(T^{\prime})-\alpha_{red}f_{red}(T^{\prime})+\alpha_{exp}f_{exp}(T^{\prime})$ were $\alpha_{skill}$, $\alpha_{social}$, $\alpha_{team}$, $\alpha_{red}$ and $\alpha_{exp}$ are non-negative coefficients that represent the relative importance of each aspect of team formation~\cite{bhowmik2014}. The goal is to find a team $T^{\prime}\subseteq S$ where the \textit{non-negative} and \textit{non-monotone} function $s(T^{\prime})$ is maximized. According to ~\cite{bhowmik2014}, this formulation can be posed as an \textit{unconstrained submodular function maximization problem} which is NP-hard for arbitrary submodular functions.

Selecting a set of canned patterns in \textsc{cps} is akin to hiring a team of individuals in \textsc{tfp} where $f_{skill}$, $f_{red}$, $f_{team}$ correspond to $f_{cov}$, $f_{sim}$ and $f_{cog}$, respectively. Hence, \textsc{cps} can be formulated in the form $s(P^{\prime})=\alpha_{f_{cov}}f_{cov}(P^{\prime})-\alpha_{f_{sim}}f_{sim}(P^{\prime})-\alpha_{f_{cog}}f_{cog}(P^{\prime})$ (Definition~\ref{def:patternScore}) where $P^{\prime}$ is the set of\emph{ candidate} patterns which yields an optimized $s(P^{\prime})$.\eat{  The constants in $s(P^{\prime})$ ensure non-negative and non-monotone behaviour (Theorem~\ref{thm:scoreNonNegativeNonMonotone}).}

\vspace{-1ex} \begin{definition}\label{def:patternScore} \textbf{[Pattern Set Score]}
	{\em Given a pattern set $\mathcal{P}^{\prime}$, the \textbf{score} of $\mathcal{P}^{\prime}$ is  $s(\mathcal{P}^{\prime})=\frac{1}{3|\mathcal{P}^{\prime}|}(f_{cov}(\mathcal{P}^{\prime})-f_{sim}(\mathcal{P}^{\prime})-f_{cog}(\mathcal{P}^{\prime})+2|\mathcal{P}^{\prime}|)$ where $f_{cov}$, $f_{sim}$ and $f_{cog}$ are the coverage, similarity and cognitive load of $\mathcal{P}^{\prime}$, respectively.}
\end{definition}
\vspace{-1ex}\begin{definition}\label{def:goodPattern} \textbf{[Good Candidate Pattern]}
	{\em Given a pattern set $\mathcal{P}^{\prime}$ and two candidate patterns $p_1$ and $p_2$, $p_1$ is considered a \textbf{good candidate pattern} if $s(\mathcal{P}^{\prime}\bigcup p_1)>s(\mathcal{P}^{\prime}\bigcup p_2)$ and is added to $\mathcal{P}^{\prime}$ instead of $p_2$.}
\end{definition}

Note that Definition~\ref{def:goodPattern} can be utilized for determining inclusion of a candidate pattern in $\mathcal{P}$.
Next, we analyze the  properties of $f_{cov}$, $f_{sim}$, $f_{cog}$, and the pattern score.

\vspace{-1ex} \begin{lemma}\label{lem:coverageSubmodular}
	\textit{Coverage of a pattern set $\mathcal{P}$, $f_{cov}(\mathcal{P})$, is submodular.}
\end{lemma}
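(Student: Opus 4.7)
The plan is to prove submodularity directly from the diminishing-returns characterization: for every $\mathcal{A}\subseteq\mathcal{B}$ in the universe of candidate patterns and every $p\notin\mathcal{B}$, we must show $f_{cov}(\mathcal{A}\cup\{p\})-f_{cov}(\mathcal{A})\geq f_{cov}(\mathcal{B}\cup\{p\})-f_{cov}(\mathcal{B})$. The strategy is to recognize $f_{cov}$ as a classical set-union coverage function and then reduce submodularity to a simple monotonicity statement about set differences.

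First, I would introduce a convenient per-pattern edge set. For a pattern $p$ with embeddings $S(p)=\{s_1,\ldots,s_n\}$ where $s_i=(V_i,E_i)$, define $C(p)=\bigcup_{i=1}^{|S(p)|} E_i\subseteq E$, i.e., the set of edges of $G$ covered by at least one embedding of $p$. Using the definition of coverage given in Section~\ref{sec:problem}, I can then rewrite $f_{cov}(\mathcal{P})=\bigl|\bigcup_{p\in\mathcal{P}} C(p)\bigr|$, so that $f_{cov}$ is literally the cardinality of a union of subsets of the fixed ground set $E$, indexed by the patterns chosen.

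Next, for any set of patterns $\mathcal{S}$ and $p\notin\mathcal{S}$, the marginal gain is
\[
f_{cov}(\mathcal{S}\cup\{p\})-f_{cov}(\mathcal{S})\;=\;\Bigl|\,C(p)\setminus\bigcup_{q\in\mathcal{S}}C(q)\,\Bigr|,
\]
i.e., the number of edges newly covered by adding $p$. Since $\mathcal{A}\subseteq\mathcal{B}$ implies $\bigcup_{q\in\mathcal{A}}C(q)\subseteq\bigcup_{q\in\mathcal{B}}C(q)$, the set $C(p)\setminus\bigcup_{q\in\mathcal{B}}C(q)$ is a subset of $C(p)\setminus\bigcup_{q\in\mathcal{A}}C(q)$, so the cardinalities satisfy the required inequality. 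This establishes submodularity of $f_{cov}$.

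The argument is essentially the standard submodularity proof for max-coverage; the only real care required is the definitional step of checking that $f_{cov}$, as given in Section~\ref{sec:problem}, is genuinely a coverage function in the set-theoretic sense despite being defined through multiple embeddings per pattern and through a normalizing constant $|E|$. Since $|E|$ is fixed and submodularity is preserved under scaling by a positive constant, the normalization is harmless; and the edges covered by $p$ are well-defined independently of how many embeddings $p$ possesses. So the main (minor) obstacle is simply ensuring the reader sees that the abstraction to $C(p)$ is faithful to the paper's definition, after which the set-containment inequality does all the work.
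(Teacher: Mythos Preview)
Your proof is correct and takes essentially the same approach as the paper: both verify the diminishing-returns inequality by treating $f_{cov}$ as a set-union coverage function and using monotonicity of the union under $\mathcal{A}\subseteq\mathcal{B}$. Your presentation is cleaner---the paper splits into four overlap scenarios, whereas your single set-difference identity $f_{cov}(\mathcal{S}\cup\{p\})-f_{cov}(\mathcal{S})=\bigl|C(p)\setminus\bigcup_{q\in\mathcal{S}}C(q)\bigr|$ handles all cases uniformly.
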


\eat{\begin{proof}
	Given a set of $n$ elements ($N$), a function $f(.)$ is submodular if for every $A\subseteq B\subseteq N$ and $j\notin B$, $f(A\bigcup\{ j\})-f(A)\geq f(B\bigcup \{j\})-f(B)$. Given a graph $G$ and canned pattern sets $\mathcal{P}_A$ and $\mathcal{P}_B$ where $\mathcal{P}_A\subseteq\mathcal{P}_B$, let the coverage of $\mathcal{P}_A$ and $\mathcal{P}_B$ be $f_{cov}(\mathcal{P}_A)$ and $f_{cov}(\mathcal{P}_B)$, respectively. Observe that $\mathcal{P}_B$ consists of $\mathcal{P}_A$ and additional patterns (\ie $\mathcal{P}^{\prime}=\mathcal{P}_B\setminus\mathcal{P}_A$). For each canned pattern $p\in\mathcal{P}^{\prime}$, we let $s=\min(|f_{cov}(p)|,|f_{cov}(\mathcal{P}_A)|)$ and $K$ denotes the overlapping set $f_{cov}(p)\bigcap f_{cov}(\mathcal{P}_A)$. The coverage of $p$ falls under one of the four possible scenarios: (1) $K=f_{cov}(p)$ if $s=|f_{cov}(p)|$, (2) $K=f_{cov}(\mathcal{P}_A)$ if $s=|f_{cov}(\mathcal{P}_A)|$, (3) $K$ is an empty set and (4) otherwise (\ie $0<|f_{cov}(p)\bigcap f_{cov}(\mathcal{P}_A)|<s$).
	
	In the case where coverage of every $p$ falls under scenario 1, $f_{cov}(\mathcal{P}_A)=f_{cov}(\mathcal{P}_B)$. If any $p$ falls under scenario 2, 3 or 4, then $f_{cov}(\mathcal{P}_A)\subset f_{cov}(\mathcal{P}_B)$. Hence, $f_{cov}(\mathcal{P}_A)\subseteq f_{cov}(\mathcal{P}_B)$. Consider a pattern $p^{\prime}\notin\mathcal{P}_B$, let $t=\min(|f_{cov}(p^{\prime})|,\linebreak |f_{cov}(\mathcal{P}_A)|)$. Suppose $f_{cov}(p^{\prime})\bigcap f_{cov}(\mathcal{P}_A)=f_{cov}(p^{\prime})$ where $|f_{cov}(p^{\prime})|<|f_{cov}(\mathcal{P}_A)|$ (Scenario 1), then $f_{cov}(\mathcal{P}_A\bigcup \{p^{\prime}\})-f_{cov}(\mathcal{P}_A)$ is an empty set. Note that we use the minus and set minus operator interchangeably in this proof. Since $f_{cov}(\mathcal{P}_A)\subseteq f_{cov}(\mathcal{P}_B)$, $f_{cov}(\mathcal{P}_B\bigcup \{p^{\prime}\})=f_{cov}(\mathcal{P}_B)$. Hence, $f_{cov}(\mathcal{P}_A\bigcup \{p^{\prime}\})-f_{cov}(\mathcal{P}_A)= f_{cov}(\mathcal{P}_B\bigcup\{ p^{\prime}\})-f_{cov}(\mathcal{P}_B)$.
	
	Now, consider $f_{cov}(p^{\prime})\bigcap f_{cov}(\mathcal{P}_A)=f_{cov}(\mathcal{P}_A)$ where $|f_{cov}(p^{\prime})|\linebreak>|f_{cov}(\mathcal{P}_A)|$ (Scenario 2). $f_{cov}(\mathcal{P}_A\bigcup\{ p^{\prime}\})-f_{cov}(\mathcal{P}_A)=f_{cov}(p^{\prime})-f_{cov}(\mathcal{P}_A)$ where $f_{cov}(\mathcal{P}_A)\subset f_{cov}(p^{\prime})$. Let $L$ and $M$ be $f_{cov}(p^{\prime})\setminus f_{cov}(\mathcal{P}_A)$ and $f_{cov}(\mathcal{P}_B)\setminus f_{cov}(\mathcal{P}_A)$, respectively. Observe that, similar to previous observation, it is possible for (1) $L$ to be fully contained in $M$ if $|L|<|M|$, (2) $M$ to be fully contained in $L$ if $|M|<|L|$, (3) $L\bigcap M$ to be empty or (4) otherwise (\ie $0<|L\bigcap M|<t$ where $t=\min(|L|,|M|)$). Hence, $|L\bigcap M|\in[0,t]$. When $|L\bigcap M|=0$, $f_{cov}(\mathcal{P}_A\bigcup \{p^{\prime}\})-f_{cov}(\mathcal{P}_A)= f_{cov}(\mathcal{P}_B\bigcup \{p^{\prime}\})-f_{cov}(\mathcal{P}_B)$. Otherwise, there are some common graphs covered by $L$ and $M$, resulting in $f_{cov}(\mathcal{P}_B\bigcup \{p^{\prime}\})-f_{cov}(\mathcal{P}_B)=L\setminus (L\bigcap M)$. Hence, $|f_{cov}(\mathcal{P}_A\bigcup \{p^{\prime}\})-f_{cov}(\mathcal{P}_A)|>|f_{cov}(\mathcal{P}_B\bigcup \{p^{\prime}\})-f_{cov}(\mathcal{P}_B)|$. Taken together, for scenario 2, $|f_{cov}(\mathcal{P}_A\bigcup \{p^{\prime}\})-f_{cov}(\mathcal{P}_A)|\geq|f_{cov}(\mathcal{P}_B\bigcup \{p^{\prime}\})-f_{cov}(\mathcal{P}_B)|$.
	
	Scenario 3 is similar to scenario 2 where $L$ is $f_{cov}(p^{\prime})$\eat{ instead of $f_{cov}(p^{\prime})\setminus f_{cov}(\mathcal{P}_A)$}. $f_{cov}(\mathcal{P}_A\bigcup \{p^{\prime}\})-f_{cov}(\mathcal{P}_A)=L$ and $f_{cov}(\mathcal{P}_B\bigcup \{p^{\prime}\})-f_{cov}(\mathcal{P}_B)\linebreak =L\setminus (L\bigcap M)$. Since $|L\bigcap M|\in[0,t]$, $|f_{cov}(\mathcal{P}_A\bigcup \{p^{\prime}\})-f_{cov}(\mathcal{P}_A)|\geq |f_{cov}(\mathcal{P}_B\bigcup \{p^{\prime}\})-f_{cov}(\mathcal{P}_B)|$.
	
	Scenario 4 is the same as scenario 3 except that $L=f_{cov}(p^{\prime})\setminus (f_{cov}(\mathcal{P}_A)\bigcap f_{cov}(p^{\prime}))$. Observe that $|f_{cov}(\mathcal{P}_A\bigcup \{p^{\prime}\})-f_{cov}(\mathcal{P}_A)|\geq |f_{cov}(\mathcal{P}_B\bigcup \{p^{\prime}\})-f_{cov}(\mathcal{P}_B)|$ as $|L\bigcap M|\in[0,t]$.
	
	Hence, in all cases, $|f_{cov}(\mathcal{P}_A\bigcup \{p^{\prime}\})-f_{cov}(\mathcal{P}_A)|\geq \linebreak|f_{cov}(\mathcal{P}_B\bigcup \{p^{\prime}\}) -f_{cov}(\mathcal{P}_B)|$ and $f_{cov}(.)$ is submodular.
\end{proof}}

\eat{\begin{proof}Submodular functions satisfies the property of diminishing marginal returns. That is given a set of $n$ elements ($N$), a function $f(.)$ is submodular if for every $A\subseteq B\subseteq N$ and $j\notin B$, $f(A\bigcup j)-f(A)\geq f(B\bigcup j)-f(B)$. Given a graph $G$ and canned pattern sets $\mathcal{P}_A$ and $\mathcal{P}_B$ where $\mathcal{P}_A\subseteq\mathcal{P}_B$, let the coverage of $\mathcal{P}_A$ and $\mathcal{P}_B$ be $f_{cov}(\mathcal{P}_A)$ and $f_{cov}(\mathcal{P}_B)$, respectively. Observe that $\mathcal{P}_B$ consists of $\mathcal{P}_A$ and additional patterns (\ie $\mathcal{P}^{\prime}=\mathcal{P}_B\setminus\mathcal{P}_A$). For each canned pattern $p\in\mathcal{P}^{\prime}$, we let $s=\min(|f_{cov}(p)|,|f_{cov}(\mathcal{P}_A)|)$ and $K$ denotes the overlapping set $f_{cov}(p)\bigcap f_{cov}(\mathcal{P}_A)$. The coverage of $p$ falls under one of four possible scenarios, namely, (1) $K=f_{cov}(p)$ if $s=|f_{cov}(p)|$, (2) $K=f_{cov}(\mathcal{P}_A)$ if $s=|f_{cov}(\mathcal{P}_A)|$, (3) $K$ is an empty set and (4) otherwise (\ie $0<|f_{cov}(p)\bigcap f_{cov}(\mathcal{P}_A)|<s$).
		
		In the case where coverage of every $p$ falls under scenario 1, then $f_{cov}(\mathcal{P}_A)=f_{cov}(\mathcal{P}_B)$. Should any $p$ falls under scenario 2, 3 or 4, then $f_{cov}(\mathcal{P}_A)\subset f_{cov}(\mathcal{P}_B)$. Hence, $f_{cov}(\mathcal{P}_A)\subseteq f_{cov}(\mathcal{P}_B)$. Consider a canned pattern $p^{\prime}\notin\mathcal{P}_B$, let $t=\min(|f_{cov}(p^{\prime})|,|f_{cov}(\mathcal{P}_A)|)$. Suppose $f_{cov}(p^{\prime})\bigcap f_{cov}(\mathcal{P}_A)=f_{cov}(p^{\prime})$ where $|f_{cov}(p^{\prime})|<|f_{cov}(\mathcal{P}_A)|$ (Scenario 1), then $f_{cov}(\mathcal{P}_A\bigcup p^{\prime})-f_{cov}(\mathcal{P}_A)$ is an empty set. Note that we use the minus and set minus operator interchangeably in this proof. Since $f_{cov}(\mathcal{P}_A)\subseteq f_{cov}(\mathcal{P}_B)$, $f_{cov}(\mathcal{P}_B\bigcup p^{\prime})=f_{cov}(\mathcal{P}_B)$. Hence, $f_{cov}(\mathcal{P}_A\bigcup p^{\prime})-f_{cov}(\mathcal{P}_A)= f_{cov}(\mathcal{P}_B\bigcup p^{\prime})-f_{cov}(\mathcal{P}_B)$.
		
		Now, consider $f_{cov}(p^{\prime})\bigcap f_{cov}(\mathcal{P}_A)=f_{cov}(\mathcal{P}_A)$ where $|f_{cov}(p^{\prime})|>|f_{cov}(\mathcal{P}_A)|$ (Scenario 2). $f_{cov}(\mathcal{P}_A\bigcup p^{\prime})-f_{cov}(\mathcal{P}_A)=f_{cov}(p^{\prime})-f_{cov}(\mathcal{P}_A)$ where $f_{cov}(\mathcal{P}_A)\subset f_{cov}(p^{\prime})$. Let $L$ and $M$ be $f_{cov}(p^{\prime})\setminus f_{cov}(\mathcal{P}_A)$ and $f_{cov}(\mathcal{P}_B)\setminus f_{cov}(\mathcal{P}_A)$, respectively. Observe that, similar to previous observation, it is possible for (1) $L$ to be fully contained in $M$ if $|L|<|M|$, (2) $M$ to be fully contained in $L$ if $|M|<|L|$, (3) $L\bigcap M$ to be empty or (4) otherwise (\ie $0<|L\bigcap M|<t$ where $t=\min(|L|,|M|)$). Hence, $|L\bigcap M|\in[0,t]$. When $|L\bigcap M|=0$, $f_{cov}(\mathcal{P}_A\bigcup p^{\prime})-f_{cov}(\mathcal{P}_A)= f_{cov}(\mathcal{P}_B\bigcup p^{\prime})-f_{cov}(\mathcal{P}_B)$. Otherwise, there are some common graphs covered by $L$ and $M$, resulting in $f_{cov}(\mathcal{P}_B\bigcup p^{\prime})-f_{cov}(\mathcal{P}_B)=L\setminus (L\bigcap M)$. Hence, $|f_{cov}(\mathcal{P}_A\bigcup p^{\prime})-f_{cov}(\mathcal{P}_A)|>|f_{cov}(\mathcal{P}_B\bigcup p^{\prime})-f_{cov}(\mathcal{P}_B)|$. Taken together, for scenario 2, $|f_{cov}(\mathcal{P}_A\bigcup p^{\prime})-f_{cov}(\mathcal{P}_A)|\geq |f_{cov}(\mathcal{P}_B\bigcup p^{\prime})-f_{cov}(\mathcal{P}_B)|$.
		
		For scenario 3, it is similar to scenario 2 where $L$ is $f_{cov}(p^{\prime})$ instead of $f_{cov}(p^{\prime})\setminus f_{cov}(\mathcal{P}_A)$. $f_{cov}(\mathcal{P}_A\bigcup p^{\prime})-f_{cov}(\mathcal{P}_A)=L$ and $f_{cov}(\mathcal{P}_B\bigcup p^{\prime})-f_{cov}(\mathcal{P}_B)=L\setminus (L\bigcap M)$. Since $|L\bigcap M|\in[0,t]$, $|f_{cov}(\mathcal{P}_A\bigcup p^{\prime})-f_{cov}(\mathcal{P}_A)|\geq |f_{cov}(\mathcal{P}_B\bigcup p^{\prime})-f_{cov}(\mathcal{P}_B)|$.
		
		For scenario 4, it is the same as scenario 3 except that $L=f_{cov}(p^{\prime})\setminus (f_{cov}(\mathcal{P}_A)\bigcap f_{cov}(p^{\prime}))$. Observe that $|f_{cov}(\mathcal{P}_A\bigcup p^{\prime})-f_{cov}(\mathcal{P}_A)|\geq |f_{cov}(\mathcal{P}_B\bigcup p^{\prime})-f_{cov}(\mathcal{P}_B)|$ due to $|L\bigcap M|\in[0,t]$.
		
		Hence, in all cases, $|f_{cov}(\mathcal{P}_A\bigcup p^{\prime})-f_{cov}(\mathcal{P}_A)|\geq |f_{cov}(\mathcal{P}_B\bigcup p^{\prime})-f_{cov}(\mathcal{P}_B)|$ applies and $f_{cov}(.)$ is submodular.
\end{proof}}

\begin{lemma}\label{lem:similaritySubmodular}
	\textit{The similarity (resp. cognitive load) of a pattern set $\mathcal{P}$, $f_{sim}(\mathcal{P})$ (resp. $f_{cog}(\mathcal{P})$), is supermodular.}
\end{lemma}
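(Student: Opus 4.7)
The plan is to prove supermodularity for each of the two functions separately, since $f_{sim}$ and $f_{cog}$ have very different structure even though supermodularity of both is claimed. Recall that a set function $f$ on a ground set $N$ is supermodular iff for all $A\subseteq B\subseteq N$ and $j\notin B$,
\[
f(A\cup\{j\})-f(A)\;\le\;f(B\cup\{j\})-f(B),
\]
i.e. marginal gains are \emph{non-decreasing} in the current set (the opposite of the diminishing-returns property used for $f_{cov}$ in Lemma~\ref{lem:coverageSubmodular}).

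First I would dispatch $f_{cog}$. Because $f_{cog}(\mathcal{P})=\sum_{p\in\mathcal{P}}cog(p)$ is a pure sum of per-pattern contributions, a direct calculation gives
\[
f_{cog}(A\cup\{j\})-f_{cog}(A)=cog(j)=f_{cog}(B\cup\{j\})-f_{cog}(B),
\]
so $f_{cog}$ is \emph{modular}. Modular functions are simultaneously submodular and supermodular, which establishes the claim for $f_{cog}$.

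Next I would handle $f_{sim}$. Using the definition $f_{sim}(\mathcal{P})=\sum_{(p_i,p_j)\in\mathcal{P}\times\mathcal{P}}sim(p_i,p_j)$, the marginal gain from adding $j$ to any set $S$ with $j\notin S$ unfolds as
\[
f_{sim}(S\cup\{j\})-f_{sim}(S)=\sum_{p\in S}\bigl(sim(p,j)+sim(j,p)\bigr)+sim(j,j),
\]
because the only new ordered pairs in $(S\cup\{j\})\times(S\cup\{j\})$ beyond those in $S\times S$ are $(p,j)$ and $(j,p)$ for $p\in S$, together with $(j,j)$. Applying this identity to both $A$ and $B$ and subtracting gives
\[
\bigl[f_{sim}(B\cup\{j\})-f_{sim}(B)\bigr]-\bigl[f_{sim}(A\cup\{j\})-f_{sim}(A)\bigr]=\sum_{p\in B\setminus A}\bigl(sim(p,j)+sim(j,p)\bigr),
\]
which is non-negative since $sim(\cdot,\cdot)\ge 0$ (any reasonable similarity measure, and in particular the one defined in Section~\ref{sec:cannedPatternSelector}, takes non-negative values). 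This is exactly the supermodularity inequality, completing the argument.

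The only real obstacle is the appeal to non-negativity of $sim$: it depends on the specific similarity measure to be instantiated in Section~\ref{sec:cannedPatternSelector}, so I would note explicitly that the similarity score used by \textsc{Tattoo} is non-negative (as a bounded similarity in $[0,1]$, say), and therefore the pairwise sum inherits supermodularity. Everything else is a direct bookkeeping calculation, with no hidden subtlety, in contrast to the case-analysis required for the submodularity of $f_{cov}$ in Lemma~\ref{lem:coverageSubmodular}.
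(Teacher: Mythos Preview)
Your proof is correct and follows essentially the same approach as the paper: compute the marginal gain of adding a pattern and use non-negativity of $sim(\cdot,\cdot)$ to show it is non-decreasing in the base set. Your treatment is in fact slightly more careful than the paper's, which writes the marginal simply as $\sum_{p_i\in\mathcal{P}}sim(p,p_i)$ (glossing over the ordered-pair and diagonal terms) and dispatches $f_{cog}$ with ``the proof is similar'' rather than your sharper observation that $f_{cog}$ is modular.
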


\eat{\begin{proof}
	Given a submodular function $f(.)$, for every $\mathcal{P}_A\subseteq\mathcal{P}_B\subseteq D$ and every $p\subset D$ s.t $p\notin\mathcal{P}_A,\mathcal{P}_B$, the first order difference states that $f(\mathcal{P}_A\bigcup \{p\})-f(\mathcal{P}_A)\geq f(\mathcal{P}_B\bigcup \{p\})-f(\mathcal{P}_B)$. Given a graph $G$, a canned pattern $p\notin\mathcal{P}_B$ and canned pattern sets $\mathcal{P}_A$ and $\mathcal{P}_B$ where $\mathcal{P}_A\subseteq\mathcal{P}_B$, let the similarity of $\mathcal{P}_A$ and $\mathcal{P}_B$ be $f_{sim}(\mathcal{P}_A)$ and $f_{sim}(\mathcal{P}_B)$, respectively. $f_{sim}(\mathcal{P}_B\bigcup \{p\})-f_{sim}(\mathcal{P}_B)=\sum_{p_i\in\mathcal{P}_B}sim(p,p_i)$ and $f_{sim}(\mathcal{P}_A\bigcup \{p\})-f_{sim}(\mathcal{P}_A)=\sum_{p_i\in\mathcal{P}_A}sim(p,p_i)$. Since \linebreak $sim(p_i,p_j)\geq 0$ $\forall p_i,p_j\subset G$, $\mathcal{P}_A\subseteq\mathcal{P}_B$ and by definition of the first order difference, $f_{sim}(.)$ is supermodular.
	
	The proof is similar for $f_{cog}(.)$.
\end{proof}}

\eat{\begin{proof}
		We begin by stating the \textit{first order difference}. Given a submodular function $f(.)$, for every $\mathcal{P}_A\subseteq\mathcal{P}_B\subseteq D$ and every $p\subset D$ such that $p\notin\mathcal{P}_A,\mathcal{P}_B$, the first order difference states that $f(\mathcal{P}_A\bigcup p)-f(\mathcal{P}_A)\geq f(\mathcal{P}_B\bigcup p)-f(\mathcal{P}_B)$.
		
		Given a graph $G$, a canned pattern $p\notin\mathcal{P}_B$ and canned pattern sets $\mathcal{P}_A$ and $\mathcal{P}_B$ where $\mathcal{P}_A\subseteq\mathcal{P}_B$, let the similarity of $\mathcal{P}_A$ and $\mathcal{P}_B$ be $f_{sim}(\mathcal{P}_A)$ and $f_{sim}(\mathcal{P}_B)$, respectively. $f_{sim}(\mathcal{P}_B\bigcup p)-f_{sim}(\mathcal{P}_B)=\linebreak\sum_{p_i\in\mathcal{P}_B}sim(p,p_i)$ and $f_{sim}(\mathcal{P}_A\bigcup p)-f_{sim}(\mathcal{P}_A)=\sum_{p_i\in\mathcal{P}_A}sim(p,p_i)$. Since $sim(p_i,p_j)\geq 0$ $\forall p_i,p_j\subset G$, $\mathcal{P}_A\subseteq\mathcal{P}_B$ and by definition of the first order difference, $f_{sim}(.)$ is supermodular.
\end{proof}}

\eat{\begin{lemma}\label{lem:cognitiveSubmodular}
		The cognitive load of a pattern set $\mathcal{P}$ denoted as $f_{cog}(\mathcal{P})$ is supermodular.
	\end{lemma}
	
	\begin{proof}
		The proof is similar to that of Lemma~\ref{lem:similaritySubmodular} by replacing $f_{sim}(.)$ with $f_{cog}(.)$
\end{proof}}

\begin{theorem}\label{thm:scoreNonNegativeNonMonotone}
	\textit{The pattern set score  $s(\mathcal{P}^\prime)$ in Definition~\ref{def:patternScore} is a non-negative and non-monotone submodular function.}
\end{theorem}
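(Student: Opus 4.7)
The plan is to decompose $s(\mathcal{P}')$ into the coverage term, the two penalty terms, and the size-linear term, establish each claim (non-negativity, non-monotonicity, submodularity) separately, and combine them using the closure properties of submodular functions together with the results already proved in Lemma~\ref{lem:coverageSubmodular} and Lemma~\ref{lem:similaritySubmodular}.

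First I would dispose of \emph{non-negativity}. Assuming the normalizations on similarity and cognitive load are such that $f_{sim}(\mathcal{P}')\leq |\mathcal{P}'|$ and $f_{cog}(\mathcal{P}')\leq |\mathcal{P}'|$ (each pattern contributes at most $1$ on average, consistent with the per-pattern scoring in Sec.~\ref{sec:cannedPatternSelector}) and $f_{cov}(\mathcal{P}')\geq 0$, the numerator obeys
$f_{cov}(\mathcal{P}')-f_{sim}(\mathcal{P}')-f_{cog}(\mathcal{P}')+2|\mathcal{P}'| \geq -|\mathcal{P}'|-|\mathcal{P}'|+2|\mathcal{P}'|=0$,
so $s(\mathcal{P}')\in[0,1]$. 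For \emph{non-monotonicity} I would exhibit a direct witness: starting from a singleton $\{p_1\}$, adding a pattern $p_2$ that is topologically nearly identical to $p_1$ (contributing negligibly to $f_{cov}$ but significantly to $f_{sim}$) and that is also comparatively dense (raising $f_{cog}$) yields $s(\{p_1,p_2\})<s(\{p_1\})$. This shows $s$ is not monotone increasing; a symmetric construction in which dropping a poor pattern raises the score rules out monotone decreasing as well.

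The main obstacle is \emph{submodularity}, because of the per-set normalizer $\tfrac{1}{3|\mathcal{P}'|}$: division by cardinality does not in general preserve submodularity. My plan is to eliminate this difficulty by working on the feasible family determined by the plug specification, namely sets of fixed cardinality $|\mathcal{P}'|=\gamma$ (Def.~\ref{def:cannedPattern}). On that family the normalizer is the constant $\tfrac{1}{3\gamma}$, so it suffices to prove that the numerator
\[
N(\mathcal{P}')\;=\;f_{cov}(\mathcal{P}')-f_{sim}(\mathcal{P}')-f_{cog}(\mathcal{P}')+2|\mathcal{P}'|
\]
is submodular on the full ground set of candidate patterns. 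This follows from closure properties: $f_{cov}$ is submodular by Lemma~\ref{lem:coverageSubmodular}; $-f_{sim}$ and $-f_{cog}$ are submodular because $f_{sim}$ and $f_{cog}$ are supermodular by Lemma~\ref{lem:similaritySubmodular} (negation flips sub/super-modularity); and $2|\mathcal{P}'|$ is modular, hence both sub- and supermodular. A non-negative linear combination of submodular functions is submodular, so $N$ is submodular, and hence so is $s=\tfrac{1}{3\gamma}\,N$ on the feasible family. Together with the non-negativity and non-monotone witness above, this gives the theorem in the exact form required by the Buchbinder et al.~\cite{buchbinder2014} $1/e$-approximation framework invoked in Sec.~\ref{sec:cannedPatternSelector}.
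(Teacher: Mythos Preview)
Your decomposition and overall line of argument match the paper's proof almost exactly: non-monotonicity via a ``bad'' pattern witness, non-negativity via the range bounds $f_{cov},f_{sim},f_{cog}\in[0,|\mathcal{P}'|]$, and submodularity by combining Lemma~\ref{lem:coverageSubmodular} and Lemma~\ref{lem:similaritySubmodular} through the closure of submodular functions under non-negative linear combinations and negation of supermodular terms. The paper's proof is the same sketch, citing \cite{fujishige2005} for closure and \cite{bhowmik2014} for the harmlessness of the additive constant.

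Where you go \emph{further} than the paper is in flagging the $\tfrac{1}{3|\mathcal{P}'|}$ normalizer: the paper simply treats the scaling as if it were a fixed non-negative weight and asserts submodularity of $s$ directly, without acknowledging that a set-size-dependent multiplier need not preserve submodularity. Your proposed fix is the right instinct, but as stated it is slightly imprecise: submodularity is a property on the full lattice, not on a single cardinality level, and the randomized greedy of \cite{buchbinder2014} evaluates marginal gains on sets of sizes $0,1,\dots,\gamma$. The clean way to finish your argument is to note that your numerator $N(\cdot)$ is submodular on the \emph{entire} power set of candidates, so the surrogate $\tilde s(\cdot)=\tfrac{1}{3\gamma}\,N(\cdot)$ is a non-negative, non-monotone submodular function everywhere; since $\tilde s$ and $s$ coincide on all sets of size $\gamma$, maximizing $s$ under the cardinality constraint is equivalent to maximizing $\tilde s$, and it is $\tilde s$ to which the $\tfrac{1}{e}$-guarantee of \cite{buchbinder2014} applies. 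With that one-line sharpening, your proof is at least as rigorous as the paper's and arguably more so on this point.
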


\eat{\begin{proof} (Sketch).
	Consider a partial pattern set $\mathcal{P}^{\prime}$ and a candidate pattern $p$. Suppose $p$ does not improve the set coverage of $\mathcal{P}^{\prime}$ and adds a high cost in terms of cognitive load and diversity. Then, $s(\mathcal{P}^{\prime})>s(\mathcal{P}^{\prime}\bigcup \{p\})$. Hence, $s(.)$ is \textit{non-monotone}. Since $f_{cov}(\mathcal{P}^{\prime}),f_{sim}(\mathcal{P}^{\prime}), f_{cog}(\mathcal{P}^{\prime})\in[0,|\mathcal{P}^{\prime}|]$, $f_{cov}(\mathcal{P}^{\prime})-f_{sim}(\mathcal{P}^{\prime})-f_{cog}(\mathcal{P}^{\prime})$ is in the range [-2$|\mathcal{P}^{\prime}|$,$|\mathcal{P}^{\prime}|$]. Hence, $\frac{1}{3|\mathcal{P}^{\prime}|}(f_{cov}(\mathcal{P}^{\prime})-f_{sim}(\mathcal{P}^{\prime})-f_{cog}(\mathcal{P}^{\prime})+2|\mathcal{P}^{\prime}|)$ is in the range $[0,1]$ and is \textit{non-negative}. Since supermodular functions are negations of submodular functions and non-negative weighted sum of submodular functions preserve submodular property \cite{fujishige2005}, $s(\mathcal{P}^{\prime})$ is submodular. Note that adding a positive constant (\ie $\frac{2}{3}$) does not change the submodular property~\cite{bhowmik2014} and ensures that $s(\mathcal{P}^{\prime})$ is non-negative. The scaling factors of $\alpha_{f_{cov}}=\alpha_{f_{sim}}=\alpha_{f_{cog}}=\frac{1}{3|\mathcal{P}^{\prime}|}$ further bound $s(\mathcal{P}^{\prime})$ within the range $[0,1]$.
\end{proof}}

\eat{\begin{proof}(Sketch). Consider a partial pattern set $\mathcal{P}^{\prime}$ and a candidate pattern $p$. Suppose $p$ does not improve the set coverage of $\mathcal{P}^{\prime}$ and adds a high cost in terms of cognitive load and diversity. Then, $s(\mathcal{P}^{\prime})>s(\mathcal{P}^{\prime}\bigcup p)$. Hence, the score function $s(.)$ is \textit{non-monotone}. Since $f_{cov}(\mathcal{P}^{\prime}),f_{sim}(\mathcal{P}^{\prime}),f_{cog}(\mathcal{P}^{\prime})\in[0,|\mathcal{P}^{\prime}|]$, $f_{cov}(\mathcal{P}^{\prime})-f_{sim}(\mathcal{P}^{\prime})-f_{cog}(\mathcal{P}^{\prime})$ is in the range [-2$|\mathcal{P}^{\prime}|$,$|\mathcal{P}^{\prime}|$]. Hence, $\frac{1}{3|\mathcal{P}^{\prime}|}(f_{cov}(\mathcal{P}^{\prime})-f_{sim}(\mathcal{P}^{\prime})-f_{cog}(\mathcal{P}^{\prime})+2|\mathcal{P}^{\prime}|)$ (Definition~\ref{def:patternScore}) is in the range $[0,1]$ and is \textit{non-negative}. Since supermodular functions are negations of submodular functions and that non-negative weighted sum of submodular functions preserve submodular property \cite{fujishige2005}, $s(\mathcal{P}^{\prime})$ is submodular. Note that adding a constant (\ie $\frac{2}{3}$) does not change the submodular property~\cite{bhowmik2014} and ensures that $s(\mathcal{P}^{\prime})$ is non-negative. The scaling factors of $\alpha_{f_{cov}}=\alpha_{f_{sim}}=\alpha_{f_{cog}}=\frac{1}{3|\mathcal{P}^{\prime}|}$ further bounds $s(\mathcal{P}^{\prime})$ within the range $[0,1]$.
\end{proof}}

\eat{Note that $s(\mathcal{P}^{\prime})$ is non-monotone in general as exemplified by adding a new candidate pattern that does not improve the set coverage, but adds a high cost in terms of cognitive load and diversity. Further, observe that using appropriate lower bounds for the coefficients allow us to transform $s(\mathcal{P}^{\prime})$ into a non-negative function. Since $f_{cov},f_{sim},f_{cog}\in[0,1]$, $f_{cov}(P^{\prime})-f_{sim}(P^{\prime})-f_{cog}(P^{\prime})$ is in the range [-2,1]. We transform $s(P^{\prime})$ into $s(P^{\prime})=\frac{1}{3}(f_{cov}(P^{\prime})-f_{sim}(P^{\prime})-f_{cog}(P^{\prime})+2)$ (Definition~\ref{def:patternScore}) to ensure that $s(P^{\prime})\in[0,1]$ is \textit{non-negative} and \textit{non-monotone}. Note that adding a constant (\ie $\frac{2}{3}$) does not change the submodular property~\cite{bhowmik2014} and ensures that $s(P^{\prime})$ is non-negative. The scaling factors of $\alpha_{f_{cov}}=\alpha_{f_{sim}}=\alpha_{f_{cog}}=\frac{1}{3}$ further bounds $s(P^{\prime})$ within the range $[0,1]$. Note that the pattern set score can be used to compare a given set of candidate patterns to determine which ones to be selected for inclusion in the canned pattern set (Definition~\ref{def:goodPattern}). We shall revisit these in Section~\ref{sec:cannedPatternSelector}.
}

\vspace{-1ex}Similar to $s(T^{\prime})$ in \textsc{tfp}, $s(\mathcal{P}^{\prime})$ in \textsc{cps} is non-negative and non-monotone. However, unlike \textsc{tfp}, \textsc{cps} imposes a cardinality constraint where $|\mathcal{P}|$ is at most $\gamma$. Thus, \textsc{cps} can be posed instead as a maximization of submodular function problem subject to cardinality constraint~\cite{buchbinder2014}.\eat{ Hardness of such problems is $\frac{1}{2}$ when $k=\frac{|\mathcal{P}_{cand}|}{2}$~\cite{vondrak2013} where $\mathcal{P}_{cand}$ is the set of candidate patterns.\eat{ That is, unless \textsc{p}=\textsc{np}, there is no polynomial-time approximation for \textsc{cps} that achieves better than $\frac{1}{2}$-approximation factor.}\eat{ In particular, two approaches have been proposed to address such problems~\cite{buchbinder2014}, namely, simple randomized greedy approach and continuous double greedy approach. These approaches guarantee $\frac{1}{e}$-approximation and $\frac{1}{2}$-approximation, respectively.}
	
	\begin{theorem}\label{thm:cannedPatternApproximation}
		The hardness of the \textsc{cps} approximation problem is $\frac{1}{2}$.
	\end{theorem}
	
	The detailed proof of the above theorem is provided in ~\cite{vondrak2013} and summarized in~\cite{tech}.}

\eat{\begin{proof}
		We first introduce the concepts of \textit{multilinear relaxation}~\cite{vondrak2013}, \textit{symmetry}~\cite{vondrak2013} and \textit{symmetry gap}~\cite{vondrak2013}.
		
		Consider a discrete optimization problem $\max\{s(\mathcal{P}):\mathcal{P}\in\mathcal{F}\}$ where $s:2^{|P_{all}|}\rightarrow R$ is the objective function and $\mathcal{F}\subset 2^{|P_{all}|}$ is the collection of feasible solutions. In case $s$ is a linear function, $s(\mathcal{P})=\sum_{j\in\mathcal{P}}w_j$, the problem can be replaced by a linear programming problem. For general set function $s(\mathcal{P})$, \textit{multilinear relaxation} is used~\cite{vondrak2013}: For $\textbf{x}\in[0,1]^{|P_{all}|}$, let $\textbf{\^{x}}$ denote a random vector in $\{0,1\}^{|P_{all}|}$ where each coordinate of $x_i$ is rounded independently to 1 with probability $x_i$ and 0 otherwise. Let $F(\textbf{x})=\mathbb{E}[s(\textbf{\^{x}})]=\sum_{\mathcal{P}\subseteq P_{all}}s(\mathcal{P})\prod_{i\in \mathcal{P}}x_i\prod_{j\notin\mathcal{P}}(1-x_j)$. Note that $F(\textbf{x})$ can be computed approximately by random sampling. Consider a continuous optimization problem $\max\{F(\textbf{x}):\textbf{x}\in P(\mathcal{F})\}$ where $P(\mathcal{F})=\{\sum_{\mathcal{P}\in\mathcal{F}}\alpha_{\mathcal{P}}
		\textbf{1}_{\mathcal{P}}:\sum_{\mathcal{P}\in\mathcal{F}}\alpha_{\mathcal{P}}=1,\alpha_{\mathcal{P}}\geq 0\}$ is the convex hull of characteristic vectors corresponding to $\mathcal{F}$.
		
		We denote $\alpha$ as the naturally induced mapping of subsets of $P_{all}:\alpha(\mathcal{P})=\{\alpha(i):i\in \mathcal{P}\}$. Observe that $\alpha$ is applied to sets and must be derived from a permutation on $P_{all}$. For $\textbf{x}\in[0,1]^{|P_{all}|}$, the \textit{symmetrization of \textbf{x}} is $\overline{\textbf{x}}=\mathbb{E}_{\alpha\in\mathcal{G}}[\alpha(\mathbf{x})]$ where $\alpha\in\mathcal{G}$ is uniformly random, $\mathcal{G}$ is the symmetry group and $\alpha(\textbf{x})$ denotes $\mathbf{x}$ with coordinates permuted by $\alpha$ permutations on $\mathcal{G}$. An instance $\max\{s(\mathcal{P}):\mathcal{P}\in\mathcal{F}\}$ on a ground set $P_{all}$ is strongly \textit{symmetric} with respect to a group of permutations $\mathcal{G}$ on $P_{all}$ if $s(\mathcal{P})=s(\alpha(\mathcal{P}))$ for all $\mathcal{P}\subseteq P_{all}$ and $\alpha\in\mathcal{G}$, and $\mathcal{P}\in\mathcal{F}\Leftrightarrow\mathcal{P}^{\prime}\in\mathcal{F}$ whenever $\mathbb{E}_{\alpha\in\mathcal{G}}[\mathbf{1}_{\alpha}(\mathcal{P})]=\mathbb{E}_{\alpha\in\mathcal{G}}[\mathbf{1}_{\alpha(\mathcal{P}^{\prime})}]$.
		A cardinality constraint $\mathcal{F}=\{\mathcal{P}\subseteq P_{all}:|\mathcal{P}|\leq \gamma\}$ is strongly symmetric with respect to all permutations because the condition $\mathcal{P}\in\mathcal{F}$ depends only on the symmetrized vector $\overline{\textbf{1}_{\mathcal{P}}}=\frac{|\mathcal{P}|}{|P_{all}|}\textbf{1}$~\cite{vondrak2013}. Let $\max\{s(\mathcal{P}):\mathcal{P}\in\mathcal{F}\}$ be an instance on a ground set $P_{all}$ which is strongly symmetric with respect to $\mathcal{G}\subset \mathcal{S}(P_{all})$ where $\mathcal{S}(P_{all})$ is the symmetric group (of all permutations) on the ground set $P_{all}$. The \textit{symmetry gap} of $\max\{s(\mathcal{P}):\mathcal{P}\in\mathcal{F}\}$ is $\gamma=\overline{OPT}/{OPT}$ where $OPT=\max\{F(\textbf{x}):\textbf{x}\in P(\mathcal{F})\}$ and $\overline{OPT}=\max\{F(\overline{\textbf{x}}):\textbf{x}\in P(\mathcal{F})\}$ and $\overline{\textbf{x}}=\mathbb{E}_{\alpha\in\mathcal{G}}[\alpha(\textbf{x})]$. The symmetry gap for any strongly symmetric instance translates automatically into hardness of approximation for refined instances~\cite{vondrak2013}.
		
		Now, we let $P_{all}=\{P_1,P_2\}$ and for any $\mathcal{P}\subseteq P_{all}$, $s(\mathcal{P})=1$ if $|\mathcal{P}|=1$ and 0 otherwise. Consider the instance $\max\{s(\mathcal{P}):\mathcal{P}\subseteq P_{all}\}$ (\ie Max Cut problem on $K_2$). Note that this instance exhibits a simple symmetry, the group of all (two) permutations on $\{P_1,P_2\}$. $OPT=F(1,0)=F(0,1)=1$ while $\overline{OPT}=F(\frac{1}{2},\frac{1}{2})=\frac{1}{2}$. Hence, the symmetry gap is $\frac{1}{2}$. Since $s(\mathcal{P})$ is nonnegative submodular and there is no constraint on $\mathcal{P}\subseteq P_{all}$, this will be the case for any refinement of the instance as well. The same symmetry gap holds if simple constraints (\eg, $\max\{s(\mathcal{P}):|\mathcal{P}|\leq 1\}$) are imposed. Hence, non-monotone submodular maximization under cardinality constraints of the type $|\mathcal{P}|\leq \frac{|P_{all}|}{2}$ also has hardness of $\frac{1}{2}$.
\end{proof}}

\vspace{0ex}
\subsection{Coverage, Cognitive Load, and Similarity} \label{sec:quan}
Next, we quantify the coverage, cognitive load, and similarity measures used in the pattern score $s(P^\prime)$.

\textbf{Coverage.} Recall from Section~\ref{sec:problem}, we can compute the coverage of a pattern $p$ as $cov_p=|\bigcup_{i\in|S(p)|}E_i|$. Since the edge sets of $G_T=(V_T,E_T)$ and $G_O=(V_O,E_O)$ are mutually exclusive, we further modify $cov_p$ to include a weight factor to account for effects exerted by the sizes of $G_T$ and $G_O$. Specifically, $cov_p=|\bigcup_{i\in|S(p)|}E_i|\frac{|G_x|}{|E|}$ where $G_x \in \{G_T, G_O\}$ for patterns obtained from $G_x$. However, exact computation of coverage for each candidate pattern is prohibitively expensive. Hence, we approximate $cov_p$ as follows: $cov_{ub(p)}=|E_p|\times freq(p)\times\frac{|G_x|}{|E|}$. Observe that $cov_{ub(p)}$ is in fact the upper bound of $cov_p$ when no isomorphic instances of $p$ in $G$ overlap\eat{\footnote{\scriptsize Any superior upper bound that can be computed efficiently can be incorporated as the canned pattern selection is orthogonal to the choice of upper bound.}}. Any superior upper bound that can be computed efficiently can be incorporated. Unlike $cov_p$, computation of $cov_{ub(p)}$ requires only $freq(p)$, which is significantly more efficient.

The order of pattern extraction in $G_O$ (\textit{e.g.}, extracting stars and asterisms before small patterns) may affect the frequency of the extracted patterns. Hence, \textit{normalization} of $cov_{ub}$ is performed for each class of patterns (\eat{$k$-\textsc{cp} and \textsc{ccp}, star and asterism, and small pattern}$k$-\textsc{cp}, \textsc{ccp}, star, asterism, and small pattern) as follows:
\vspace{-1ex}
\begin{equation} \small
	cov_{ub(p)}= \frac{cov^{\prime}_{ub(p)}-Min(cov^{\prime}_{ub}(P_{t}))+1}{Max(cov^{\prime}_{ub}(P_{t}))-Min(cov^{\prime}_{ub}(P_{t}))+1}
\end{equation}
where $t \in \eat{\{truss, star, path, cycle, small\}}\{k-CP, CCP, star, asterism, small\}$ represents a class of pattern. Specifically, we compute $k$-\textsc{cp}s and \textsc{ccp}s in $G_T$.\eat{; stars and asterisms; paths; cycles; and small patterns, respectively} Stars, asterisms and small patterns are computed in $G_O$. The normalized $cov_{ub}$ is in [0-1].

\textbf{Cognitive Load.}  \cite{catapult,midas} measure cognitive load based on size and density only, ignoring edge crossings. Since it is designed for a collection of small- or medium-sized data graphs, it is a reasonable measure as in many applications such data graphs have very few edge crossings (\eg chemical compounds), if any. In contrast, edge crossings occur frequently in large networks and hence cannot be ignored in our context. In fact, Huang and colleagues examined the effect of edge crossings on mental load of users and found that cognitive load displays a relationship with edge crossings that resembles the logistic curve~\cite{huang2010} $f(x)=\frac{L}{1+e^{-k(x-x_0)}}$ where $L$ is the curve's maximum value, $x_0$ is the $x$ value of sigmoid's midpoint and $k$ is the logistic growth rate \cite{zeide1993}.

\vspace{-1ex}\begin{lemma}\label{lem:crossing}
	\textit{The crossing number (\ie number of edge crossings) of any simple graph $G=(V,E)$ with at least 3 vertices satisfies $cr\geq|E|-3|V|+6$.}
\end{lemma}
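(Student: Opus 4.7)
The plan is to derive this from Euler's formula via a standard edge-removal argument.

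First, I would recall Euler's formula for a connected planar graph: $|V| - |E| + |F| = 2$, where $F$ is the set of faces. For a simple planar graph with $|V| \geq 3$, every face is bounded by at least 3 edges and every edge lies on the boundary of at most 2 faces, so $2|E| \geq 3|F|$, i.e., $|F| \leq \tfrac{2}{3}|E|$. Substituting into Euler's formula (and handling disconnected planar graphs by adding edges to connect components, which only strengthens the bound) yields the classical inequality $|E| \leq 3|V| - 6$ for any simple planar graph on at least 3 vertices.

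Next, I would lift this to arbitrary (not necessarily planar) simple graphs. Take an optimal drawing of $G$ in the plane realizing exactly $cr$ crossings. For each of these $cr$ crossings, pick one of the two edges involved and delete it. The resulting graph $G'$ has a drawing (inherited from the drawing of $G$ with the selected edges removed) that contains no crossings at all, so $G'$ is planar. Since $G'$ is a simple graph on the same vertex set $V$ (with $|V| \geq 3$), the planar edge bound from the first step gives $|E(G')| \leq 3|V| - 6$. Because at most $cr$ edges were deleted, $|E(G')| \geq |E| - cr$, and combining the two inequalities yields $|E| - cr \leq 3|V| - 6$, i.e., $cr \geq |E| - 3|V| + 6$.

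The only subtle point, and what I would flag as the main thing to get right, is step two's removal argument: an edge may participate in several crossings, so deleting it eliminates multiple crossings at once. This only helps (we delete at most $cr$ edges, possibly fewer), so the bound $|E(G')| \geq |E| - cr$ still holds. A minor housekeeping step is also needed for disconnected or sparse graphs: if $|E| \leq 3|V| - 6$ the claim is vacuous since $cr \geq 0$, so the interesting case is $|E| > 3|V| - 6$, where the above deletion argument applies directly. No other machinery from the paper is required.
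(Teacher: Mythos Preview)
Your proof is correct and follows essentially the same edge-removal argument as the paper: delete one edge per crossing from an optimal drawing to obtain a planar subgraph, then apply the Euler-formula bound $|E|\le 3|V|-6$. Your version is in fact more careful than the paper's sketch, since you explicitly address the subtlety that one edge may participate in several crossings and the vacuous case $|E|\le 3|V|-6$.
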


\eat{\begin{proof} (Sketch)
	Consider a graph $G=(V,E)$ with $cr$ crossings. Since each crossing can be removed by removing an edge from $G$, a graph with $|E|-cr$ edges and $|V|$ vertices contains no crossings (\ie planar graph). Since $|E|\leq 3|V|-6$ for the  planar graph (\ie Euler's formula), hence, $|E|-cr\leq 3|V|-6$ for $|V|\geq 3$. Rewriting the inequality, we have $cr\geq|E|-3|V|+6$.
\end{proof}}

\eat{\begin{lemma}\label{lem:crossing}
		\emph{The crossing number of $G=(V,E)$ satisfies $cr\geq\frac{1}{33.75}\frac{E^3}{V^2}$ for $E\geq 7.5V$~\cite{pach1997}}.
	\end{lemma}
	
	\begin{proof} (Sketch)
		Let $G$ be a simple graph drawn in the plane with $cr$ crossings and suppose that $E\geq 7.5V$. Consider construction of a random subgraph $G^{\prime}\subseteq G$ by selecting each vertex of $G$ independently with probability $p=\frac{7.5V}{E}\leq 1$, and let $G^{\prime}=(V^{\prime},E^{\prime})$ be the subgraph induced by the selected vertices. The expected number of vertices of $G^{\prime}$, $\mathbb{E}[V^{\prime}]=pV$. Similarly, $\mathbb{E}[E^{\prime}]=p^2E$. Further, the expected number of crossings in the drawing of $G^{\prime}$ inherited from $G$ (denoted $cr_{G^{\prime}}$) is $p^4cr$ and the expected value of the crossing number of $G^{\prime}$ is even smaller~\cite{pach1997}. By Corollary~\ref{cor:crossing}, $cr_{G^{\prime}}\geq 5E^{\prime}-25V^{\prime}$ for every $G^{\prime}$. Taking expectations, $p^4cr\geq\mathbb{E}[cr_{G^{\prime}}]\geq 5\mathbb{E}[E^{\prime}]-25\mathbb{E}[V^{\prime}]=5p^2E-25pV$. Hence, $cr\geq\frac{1}{33.75}\frac{E^3}{V^2}$ for $E\geq 7.5V$.
	\end{proof}
}

\eat{In \textsc{Tattoo}, cognitive load of a pattern $p$ is determined by size ($sz_p=|E_p|$), density ($d_p=2\frac{|E_p|}{|V_p|(|V_p|-1)}$) and edge crossing ($cr_p$). Note that $cr_p=0$ if $p$ is planar and $cr_p=\frac{1}{33.75}\frac{|E_p|^3}{|V_p|^2}$ (Lemma~\ref{lem:crossing}) \cite{pach1997} if otherwise. Hence, the normalized cognitive load function in \textsc{Tattoo} is defined as $cog_p=\frac{1}{3}\sum_{x\in\{sz_p,d_p,cr_p\}}(1-e^{-x})$. Note that $cog_p\in[0-1]$. In Section~\ref{sec:expt} (Exp 1), we evaluate this cognitive load function against others.}

Hence cognitive load of a pattern $p$ is computed based on the size ($sz_p=|E_p|$), density ($d_p=2\frac{|E_p|}{|V_p|(|V_p|-1)}$) and edge crossing ($cr_p$). $cr_p=0$ if $p$ is planar. Otherwise, it is $cr_p=|E_p|-3|V_p|+6$. We modelled the normalized cognitive load function in \textsc{Tattoo} according to the logistic curve:
\vspace{0ex}
\eat{\begin{equation} \small
		cog_p=\frac{1}{3}\sum_{x\in\{sz_p,d_p,cr_p\}}(1-e^{-x})
	\end{equation}
	\begin{equation} \small
		cog_p=1-e^{-(sz_p+d_p+cr_p)}
\end{equation}}
\begin{equation} \small
	cog_p=1/(1+e^{-0.5\times(sz_p+d_p+cr_p-10)})
\end{equation}
Parameters of $cog_p$ are set empirically to ensure even distribution within the range of [0 1].

\textbf{Similarity.} Given a partial pattern set $\mathcal{P}^\prime$ and two candidate patterns $p_1$ and $p_2$, \textsc{Tattoo} selects $p_1$ preferentially to add to $\mathcal{P}^\prime$ if $\max_{p\in\mathcal{P}^\prime} sim(p_1,p)<\max_{p\in\mathcal{P}^\prime} sim(p_2,p)$. To this end, we utilize \textit{NetSimile}, a size-independent graph similarity approach based on distance between feature vectors \cite{berlingerio2013}\eat{, for the following reasons. First, \textit{NetSimile} leverages local-level (\eg node-level and egonet-level features) network similarity which are more interpretable. Second, it is computationally less expensive to compute local features as compared to global features~\cite{berlingerio2013}. Hence,}. It is scalable with runtime complexity linear to the number of edges.\eat{ Briefly, for every node, \textit{NetSimile} extracts 7 features, namely, degree, clustering coefficient, average degree of neighbours, average clustering coefficient of neighbours, number of edges in ego-network, number of outgoing edges of ego-network and number of neighbours of ego-network. Aggregator functions (\ie median, mean, standard deviation, skewness and kurtosis) are then applied on each local feature to generate the ``signature'' vector for a graph. Similarity score of two given graphs is normalized Canberra distance (range in [0-1]) between their ``signature'' vectors. Nevertheless, any superior and efficient network similarity technique can be adopted.}

\eat{Given two candidate patterns $p_1$ and $p_2$ of same size, $p_1$ is considered more diverse than $p_2$ if more steps are required to reconstruct $p_1$ compared to $p_2$ using existing patterns in the (partial) canned pattern set $\mathcal{P}$. The problem of reconstructing a pattern $p$ using $\mathcal{P}$ is equivalent to the set cover problem which is NP-hard \cite{korte2012}. Hence, \textsc{Tattoo} computes $div$ by leveraging a greedy algorithm \cite{chvatal1979}.}

\eat{\begin{lemma}\label{lem:greedyAlgo}
		The greedy algorithm is an $H(n)$ factor approximation algorithm for the minimum set cover problem, where $H(n)=\sum^{n}_{j=1}\frac{1}{j}\approx log n$.
\end{lemma}}

\eat{Briefly, for each candidate pattern $p$, \textsc{Tattoo} iterates through the current $\mathcal{P}$ to look for existing patterns that are subgraph isomorphic to $p$. Then, it greedily selects patterns from it to reconstruct $p$. \textit{Diversity} of $p$ is simply the ratio of the minimum number of patterns used ($step$) to the size of $p$ (\textit{i.e.}, $div_p=\frac{step}{|E_p|}$). Note that the algorithm is guaranteed to terminate as $\mathcal{P}$ contains a $1$-path default pattern, which is essentially the atomic building block of any canned pattern.}

\eat{\begin{theorem}\label{lem:complexityOfPatternMining} \color{red}**UPDATE AFTER COG FUNCTION AND ALGO FINALIZED**\color{black}
		\emph{The worst-case time and space complexities of Algorithm~\ref{alg:patternMining} are $O(\gamma|P_{all}|(\gamma|V_{p(max)}|!|V_{p(max)}|+|E_{p(max)}|))$ and $O((|P_{all}|+\gamma)(|V_{p(max)}|+|E_{p(max)}|))$, respectively, where $|V_{p(max)}|$ and $|E_{p(max)}|$ are the number of vertices and edges in the largest candidate pattern.
	}\end{theorem}
	
	\vspace{-0.5ex}\textbf{Remark.} Note that in Theorem~\ref{lem:complexityOfPatternMining} the exponential cost is due to the subgraph isomorphism check during diversity computation. In practice, it can be performed fast as candidate patterns are small in size.\eat{ Also, our data-driven approach for canned pattern selection enables a user to personalize the \textsc{gui} by choosing the pattern budget and \textsc{Tattoo} can select the patterns accordingly.}
}

\eat{\begin{algorithm}[!t]
		\algsetup{
			linenosize=\scriptsize
		}
		\begin{algorithmic}[1]
			\scriptsize
			\REQUIRE Candidate pattern set $P_{all}$ and its frequency $freq(P_{all})$, pattern budget $b=(\eta_{min}, \eta_{max}, \gamma)$;
			\ENSURE Canned pattern set $\mathcal{P}$;
			
			\STATE $s_{best}\leftarrow 0\;$
			
			\WHILE{$\gamma>0$}\label{line:patternMiningSelectStart}
			
			\STATE $p_{best}\leftarrow\phi\;$
			
			\STATE $s_{map}\leftarrow\phi\;$
			
			\STATE $C\leftarrow\phi\;/* \textrm{list of good candidates} */$
			
			\FOR{$p\in P_{all}$}\label{line:selectGoodCandidatesStart}
			
			\STATE $f_{covub(p\bigcup\mathcal{P})}\leftarrow \textsc{GetCoverage}(p,\mathcal{P}, freq(\mathcal{P}\bigcup\{p\}))\;$\label{line:patternMiningCovCogStart}
			
			\STATE $f_{cog(p\bigcup\mathcal{P})}\leftarrow \textsc{GetCognitiveLoad}(p,\mathcal{P})\;$\label{line:patternMiningCog}
			
			\STATE $f_{sim(p\bigcup\mathcal{P})}\leftarrow \textsc{GetSimilarity}(p,\mathcal{P})\;$\label{line:patternMiningDiversity}
			
			\STATE $s\leftarrow\frac{1}{3}(f_{covub(p\bigcup\mathcal{P})}-f_{sim(p\bigcup\mathcal{P})}-f_{cog(p\bigcup\mathcal{P})}+2)$
			
			\IF{$s>s_{best}$ \textrm{ and } $|\mathcal{P}|=0$}\label{line:marginalImprovement}
			
			\STATE $s_{best}\leftarrow s\;$
			
			\STATE $p_{best}\leftarrow p\;$
			
			\ELSIF{$s>s_{best}$}\label{line:goodCandidates}
			
			\STATE $s_{map}\leftarrow\textsc{UpdateScore}(s_{map},s,p)$
			
			\STATE $C\leftarrow C\bigcup\{p\}\;$
			
			\ENDIF\label{line:marginalImprovementEnd}
			
			\ENDFOR\label{line:selectGoodCandidatesEnd}

			\IF{$|C|>0$}\label{line:randomStart}
			
			\STATE $p_{best}\leftarrow\textsc{RandomChoose}(C)$\;
			
			\STATE $s_{best}\leftarrow\textsc{GetScore}(s_{map},p_{best})$\;
			
			\ENDIF\label{line:randomEnd}
			
			\IF{$p_{best}\neq\phi$}
			
			\STATE $\mathcal{P}\leftarrow\mathcal{P}\bigcup\{p_{best}\}\;$
			
			\STATE $P_{all}\leftarrow P_{all}\setminus\{p_{best}\}\;$
			
			\STATE $\gamma\leftarrow\gamma-1\;$
			
			\ELSE
			
			\STATE \textrm{break}\;\label{line:noGoodCandidates}
			
			\ENDIF
			
			\ENDWHILE\label{line:patternMiningSelectEnd}
			
		\end{algorithmic}
		\caption{$\textsc{Greedy}$.} \label{alg:greedy}
	\end{algorithm}
}
\vspace{0ex}
\subsection{CPS-Randomized Greedy Algorithm} \label{sec:cps-algo}
The canned pattern selection algorithm is as follows. First, it retrieves the default pattern set ($1$-path, $2$-path, $3$-cycle and $4$-cycle). Next, it prunes candidate patterns whose sizes do not satisfy the plug specification or are ``nearly-unique'' (\ie $freq(p)<\delta$ where $\delta$ is a pre-defined threshold). Note that the latter patterns have very low occurrences in $G$ and are unlikely to be as useful for query construction in their entirety\footnote{\scriptsize In the case, a user is interested in patterns with low coverage, $\delta$ can be set to 0 along with the reduction in $\alpha_{f_{cov}}(P^\prime)$ in $s(P^\prime)$ (Defn.~\ref{def:patternScore}).}. Then, it selects $\mathcal{P}$ from the remaining candidates.

Recall from Section~\ref{sec:soln}, the \textsc{cps} problem can be cast as a maximization of submodular function problem subject to cardinality constraint. \eat{Recently, the algorithm community has proposed two techniques with quality guarantees in~\cite{buchbinder2014} to address it. We exploit these approaches in our \textsc{cps} problem. To the best of our knowledge, these approaches have not been utilized for graph querying.}Recently, the algorithm community has proposed a technique with quality guarantee in~\cite{buchbinder2014} to address it. We exploit this approach, referred to as \textit{CPS-Randomized Greedy} (\textsf{CPS-R-Greedy}, Algorithm~\ref{alg:greedy}), in our \textsc{cps} problem. To the best of our knowledge, this approach has not been utilized for graph querying.

In particular, \textsf{CPS-R-Greedy} extends the discrete greedy algorithm~\cite{nemhauser1978} using a randomized approach. At every step, a random candidate pattern is chosen from a set of ``reasonably good'' candidates (Lines~\ref{line:randomStart}-\ref{line:randomEnd}). Intuitively, these candidates should have very few edge crossings, good coverage and are different from patterns already in $\mathcal{P}$. These candidates are identified as follows. For every candidate pattern $p$, we compute the pattern set score (Definition~\ref{def:patternScore}) assuming $p$ is added to the canned pattern set. A ``good'' candidate $p$ improves on the score of the set when it is added (Definition~\ref{def:goodPattern}). Note that $cov_{ub}$, $cog$, and $sim$ changes as $\mathcal{P}$ changes. Hence, we recompute them at every iteration. Then, we randomly select a ``good'' candidate and assign it to $\mathcal{P}$. The algorithm terminates either when the set contains the desired number of patterns or when there exists no more good candidates. The following quality guarantee can be derived from~\cite{buchbinder2014}.

\begin{theorem}\label{thm:greedyApprox}
	\textit{\textsf{CPS-R-Greedy} achieves $\frac{1}{e}$-approximation of \textsc{cps}.}
\end{theorem}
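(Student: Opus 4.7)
The plan is to reduce the claim to the general $\frac{1}{e}$-approximation guarantee for the randomized greedy algorithm of Buchbinder et al.~\cite{buchbinder2014} on non-negative, non-monotone submodular maximization subject to a cardinality constraint, and then verify that \textsf{CPS-R-Greedy} is faithfully implementing that algorithm on the CPS objective.

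First, I would check that the CPS problem fits the hypotheses of the abstract result. By Theorem~\ref{thm:scoreNonNegativeNonMonotone}, the pattern-set score $s(\mathcal{P}^\prime)$ is a non-negative, non-monotone submodular function on $2^{P_{cand}}$. (For completeness I would reassemble the ingredients: $f_{cov}$ is submodular by Lemma~\ref{lem:coverageSubmodular}; $-f_{sim}$ and $-f_{cog}$ are submodular as negations of the supermodular functions of Lemma~\ref{lem:similaritySubmodular}; non-negative conic combinations of submodular functions are submodular; and the additive constant $\tfrac{2}{3}$ together with the $\tfrac{1}{3|\mathcal{P}^\prime|}$ scaling preserves submodularity and enforces non-negativity.) The plug constraint $|\mathcal{P}|=\gamma$ is a simple cardinality constraint, so CPS is exactly an instance of $\max\{s(\mathcal{P}): \mathcal{P}\subseteq P_{cand},\ |\mathcal{P}|\leq \gamma\}$ with $s$ non-negative, non-monotone, and submodular.

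Second, I would show that \textsf{CPS-R-Greedy} (Algorithm~\ref{alg:greedy}) is an instantiation of the randomized greedy template of~\cite{buchbinder2014}. That template proceeds in $\gamma$ iterations; in each iteration it computes the marginal gain $s(\mathcal{P}\cup\{p\})-s(\mathcal{P})$ for every unpicked element $p$, forms the set $M$ of elements with the largest positive marginal gains (of size at most $\gamma$, padded with dummy elements if needed), and samples one uniformly from $M$. Inspecting our algorithm: at each round we recompute $f_{cov}$, $f_{sim}$, $f_{cog}$ relative to the current partial set $\mathcal{P}$, we collect the ``good'' candidates whose marginal is positive (the score strictly improves), and then sample one uniformly via \textsc{RandomChoose}. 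The early termination when no good candidate exists matches adding a dummy element of zero marginal gain, which is benign for the analysis because such a choice contributes $0$ to the expected value of the output. The edge-case of the first iteration (picking the best-scoring single pattern) is consistent with the template when all singletons have non-negative score, which holds because $s(\cdot)\in[0,1]$.

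Third, once the reduction is in place, the $\frac{1}{e}$ bound follows directly from the main theorem of~\cite{buchbinder2014}: for any non-negative submodular $s$ and cardinality constraint of size $\gamma$, their randomized greedy returns $\mathcal{P}$ with $\mathbb{E}[s(\mathcal{P})]\geq (1-1/e)\cdot s(\mathcal{P}^{\ast})$ in the monotone case and $\mathbb{E}[s(\mathcal{P})]\geq (1/e)\cdot s(\mathcal{P}^{\ast})$ in the general (non-monotone) case, where $\mathcal{P}^{\ast}$ is the optimum. Since $s$ here is non-monotone, the $\frac{1}{e}$ bound is the applicable one, giving the claimed approximation ratio.

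The main obstacle I anticipate is the faithfulness check in step two: our implementation uses a ``positive marginal gain'' filter rather than the ``top-$\gamma$ marginal gains with dummy padding'' rule of~\cite{buchbinder2014}. I would handle this by arguing that whenever fewer than $\gamma$ candidates are positive, the remaining slots in the Buchbinder sampling set are filled by dummies of zero marginal value, and a uniform sample restricted to the strictly positive candidates stochastically dominates (in marginal contribution) a uniform sample over the padded set; this preserves the key inequality $\mathbb{E}[s(\mathcal{P}_{i+1})-s(\mathcal{P}_i)\mid \mathcal{P}_i]\geq \tfrac{1}{\gamma}\big(s(\mathcal{P}^{\ast}\cup\mathcal{P}_i)-s(\mathcal{P}_i)\big)$ used in their analysis, after which the standard telescoping together with the non-negativity lemma $\mathbb{E}[s(\mathcal{P}^{\ast}\cup\mathcal{P}_i)]\geq (1-1/\gamma)^i s(\mathcal{P}^{\ast})$ yields the $\frac{1}{e}$ guarantee.
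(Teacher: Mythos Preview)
Your approach is essentially the same as the paper's: both reduce the claim to the $1/e$ analysis of Buchbinder et al.~\cite{buchbinder2014}. The paper reproduces the key marginal-gain inequality and the induction $\mathbb{E}[s(\mathcal{P}_i)]\geq \tfrac{i}{\gamma}(1-\tfrac{1}{\gamma})^{i-1}s(OPT)$ inline, whereas you invoke the theorem as a black box after verifying the submodularity and non-negativity hypotheses via Theorem~\ref{thm:scoreNonNegativeNonMonotone} and Lemmas~\ref{lem:coverageSubmodular}--\ref{lem:similaritySubmodular}. Both routes are equally legitimate, and you are in fact more explicit than the paper about the need to check that Algorithm~\ref{alg:greedy} matches the abstract template.

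That said, your faithfulness check is incomplete in precisely the direction that matters. You correctly argue that when $|C|<\gamma$, dummy-padding in Buchbinder's set $M_i$ only dilutes the average, so uniform sampling over $C$ dominates. But you do not treat the case $|C|>\gamma$: Buchbinder's $M_i$ is the \emph{top-$\gamma$} set by marginal gain, and uniform sampling from a larger pool of merely positive-gain candidates can have strictly smaller expected marginal, which breaks the key inequality $\mathbb{E}[s_{p_i}(\mathcal{P}_{i-1})\mid\mathcal{P}_{i-1}]\geq\tfrac{1}{\gamma}\sum_{p\in OPT\setminus\mathcal{P}_{i-1}}s_p(\mathcal{P}_{i-1})$. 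Your stochastic-dominance argument does not cover this regime, and Algorithm~\ref{alg:greedy} as written (Lines~\ref{line:goodCandidates}--\ref{line:randomEnd}) does not cap $|C|$ at $\gamma$. The paper's own proof glosses over this by writing $\mathbb{E}[s_{p_i}(\mathcal{P}_{i-1})]=\gamma^{-1}\sum_{p\in M_i}s_p(\mathcal{P}_{i-1})$ as if $|M_i|=\gamma$, so the gap you flagged is real and unresolved there too; to make either proof go through one must either restrict $C$ to the top-$\gamma$ marginals or supply a separate argument for the large-$|C|$ case.
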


\eat{The proof \eat{ and~\ref{thm:greedy} follow} follows from~\cite{buchbinder2014} and is summarized in~\cite{tech}.}
\eat{\begin{proof}
		Let $A_i$ be an event fixing all the random decisions of \textit{Greedy} for every iteration $i$ and $\mathcal{A}_i$ be the set of all possible $A_i$ events. We denote $s(\mathcal{P}_{i-1}\bigcup p_i)-s(\mathcal{P}_{i-1})$ as $s_{p_i}(\mathcal{P}_{i-1})$. Further, let the desired size of $\mathcal{P}$ be $\gamma$, $1\leq i\leq\gamma$ and $A_{i-1}\in\mathcal{A}_{i-1}$. Unless otherwise stated, all the probabilities, expectations and random quantities are implicitly conditioned on $A_{i-1}$. Consider a set $M^{\prime}_i$ containing the patterns of $OPT\setminus\mathcal{P}_{i-1}$ plus enough dummy patterns to make the size of $M^{\prime}_i$ exactly $\gamma$.
		
		Note that $\mathbb{E}[s_{p_i}(\mathcal{P}_{i-1})]=\gamma^{-1}\cdot\sum_{p\in M_i}
		s_{p}(\mathcal{P}_{i-1})\geq\gamma^{-1}\cdot\sum_{p\in M^{\prime}_i}s_{p}
		(\mathcal{P}_{i-1})=\gamma^{-1}\cdot\sum_{p\in OPT\setminus\mathcal{P}_{i-1}}s_{p}(\mathcal{P}_{i-1})\geq\linebreak\frac{s(OPT\bigcup\mathcal{P}_{i-1})-s(\mathcal{P}_{i-1})}{\gamma}$ \cite{buchbinder2014}, where the first inequality follows from the definition of $M_i$ (\ie set of ``good'' candidate patterns) and the second from the submodularity of $s(.)$ Unfixing the event $A_{i-1}$ and taking an expectation over all possible such events, $\mathbb{E}[s_{p_i}(\mathcal{P}_{i-1})]\geq\frac{\mathbb{E}[s(OPT\bigcup\mathcal{P}_{i-1})]-\mathbb{E}[s(\mathcal{P}_{i-1})]}{\gamma}
		\geq\frac{(1-\frac{1}{\gamma})^{i-1}\cdot s(OPT)-\mathbb{E}[s(\mathcal{P}_{i-1})]}{\gamma}$, where the second inequality is due to observation that for every $0\geq i\geq\gamma$, $\mathbb{E}[s(OPT\bigcup\mathcal{P}_i)]\geq(1-\frac{1}{\gamma})^i\cdot s(OPT)$\cite{buchbinder2014}.
		
		We now prove by induction that $\mathbb{E}[s(\mathcal{P}_i)]\geq\frac{i}{\gamma}\cdot(1-\frac{1}{\gamma})^{i-1}\cdot s(OPT)$. Note that this is true for $i=0$ since $s(\mathcal{P}_0)\geq 0=\frac{0}{\gamma}\cdot(1-\frac{1}{\gamma})^{-1}\cdot s(OPT)$. Further, we assume that the claim holds for every $i^{\prime}<i$. Now, we prove it for $i>0$. $\mathbb{E}[s(\mathcal{P}_i)]=\mathbb{E}[s(\mathcal{P}_{i-1})]+\mathbb{E}[s_{p_i}(\mathcal{P}_{i-1})]
		\geq\mathbb{E}[s(\mathcal{P}_{i-1})]+\frac{(1-\frac{1}{\gamma})^{i-1}\cdot s(OPT)-\mathbb{E}[s(\mathcal{P}_{i-1})]}{\gamma}\linebreak=(1-\frac{1}{\gamma})\cdot\mathbb{E}[s(\mathcal{P}_{i-1})]
		+\gamma^{-1}(1-\frac{1}{\gamma})^{i-1}\cdot s(OPT)\geq(1-\frac{1}{\gamma})\cdot[\frac{i-1}{\gamma}\cdot(1-\frac{1}{\gamma})^{i-2}\cdot s(OPT)]+\gamma^{-1}(1-\frac{1}{\gamma})^{i-1}\cdot s(OPT)=[\frac{i}{\gamma}]\cdot(1-\frac{1}{\gamma})^{i-1}\cdot s(OPT)$. Hence, $\mathbb{E}[s(\mathcal{P}_k)]\geq\frac{\gamma}{\gamma}\cdot(1-\frac{1}{\gamma})^{\gamma-1}\cdot s(OPT)\geq e^{-1}\cdot s(OPT)$. That is, Alg.~\ref{alg:greedy} achieves $\frac{1}{e}$-approximation of \textsc{cps}.
\end{proof}}

\begin{theorem}\label{thm:greedyComplexity}
	\textit{\textsf{CPS-R-Greedy} has worst-case time and space complexity of $O(|P_{cand}|\gamma|V_{max}||V_{max}|!)$ and $O(|P_{cand}|(|V_{max}|+|E_{max}|))$, respectively, where $|V_{max}|$ and $|E_{max}|$ are the number of vertices and edges in the largest candidate pattern.}
	\vspace{-1ex}\end{theorem}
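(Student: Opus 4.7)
The plan is to decompose the cost of \textsf{CPS-R-Greedy} across its two nested loops and verify that the stated bounds capture the worst case.

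First, I would bound the outer \textbf{while} loop: each non-trivial iteration either grows $\mathcal{P}$ by one pattern or breaks, so the loop executes at most $\gamma$ times. Inside a single iteration, the inner \textbf{for} loop visits each $p \in P_{cand}$ exactly once. For each such candidate, computing the pattern set score $s(\mathcal{P} \cup \{p\})$ decomposes into three parts: (i) the upper-bound coverage $cov_{ub}(p)$, computable in $O(|E_p|)$ from the already-materialized frequency $freq(p)$; (ii) the cognitive load $cog_p$, which via the density formula and the edge-crossing lower bound from Lemma~\ref{lem:crossing} is computable in $O(|V_p|+|E_p|)$; and (iii) the similarity of $p$ with every pattern in the current $\mathcal{P}$.

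Second, I would identify the dominant per-candidate cost. A structural comparison between $p$ and a pattern already in $\mathcal{P}$ reduces in the worst case to a (sub)graph isomorphism check between graphs of at most $|V_{max}|$ vertices; by brute-force enumeration this costs $O(|V_{max}| \cdot |V_{max}|!)$. By precomputing and caching NetSimile signature vectors once per pattern, only this single dominant check per candidate need be charged to each iteration; the polynomial terms from (i)--(ii) and from distance computations are absorbed. Multiplying this per-candidate cost by $|P_{cand}|$ candidates per iteration and by $\gamma$ outer iterations gives the claimed bound $O(|P_{cand}|\gamma|V_{max}||V_{max}|!)$.

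For the space bound, I would observe that the algorithm stores: the candidate set $P_{cand}$, requiring $O(|P_{cand}|(|V_{max}|+|E_{max}|))$; the frequencies and transient score values, $O(|P_{cand}|)$ in total; and the partial solution $\mathcal{P}$, which uses $O(\gamma(|V_{max}|+|E_{max}|))$ and is subsumed since $\gamma \le |P_{cand}|$. The dominant term yields the stated $O(|P_{cand}|(|V_{max}|+|E_{max}|))$ space bound. The main obstacle will be pinning down the $|V_{max}|!$ factor cleanly: since NetSimile itself runs in polynomial time on precomputed signatures, the exponential term must be attributed to worst-case isomorphism checking used to confirm structural distinctness of selected patterns; I would state this explicitly as a worst-case bound and remark that, in practice, candidate patterns are small enough for this factor not to dominate, mirroring the remark made after Lemma~\ref{lem:complexityFindSmallPatterns}.
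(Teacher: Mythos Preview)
Your decomposition---$\gamma$ outer iterations, $|P_{cand}|$ inner scans, a dominant $O(|V_{max}|\cdot|V_{max}|!)$ per-candidate cost, and storage dominated by the candidate set---is exactly the paper's argument. The only difference is bookkeeping: the paper charges the $O(|V_{max}|!\,|V_{max}|)$ term to the \emph{coverage} subroutine (with cognitive load at $O(|V_{max}|+|E_{max}|)$ and NetSimile-based similarity at $O(|V_{max}|+|V_{max}|\log|V_{max}|)$), whereas you place the factorial in an isomorphism check inside the similarity step. Either attribution yields the same overall bound, so this is cosmetic rather than a gap.
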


\begin{figure}[!t]
	\centering
	\includegraphics[width=\linewidth]{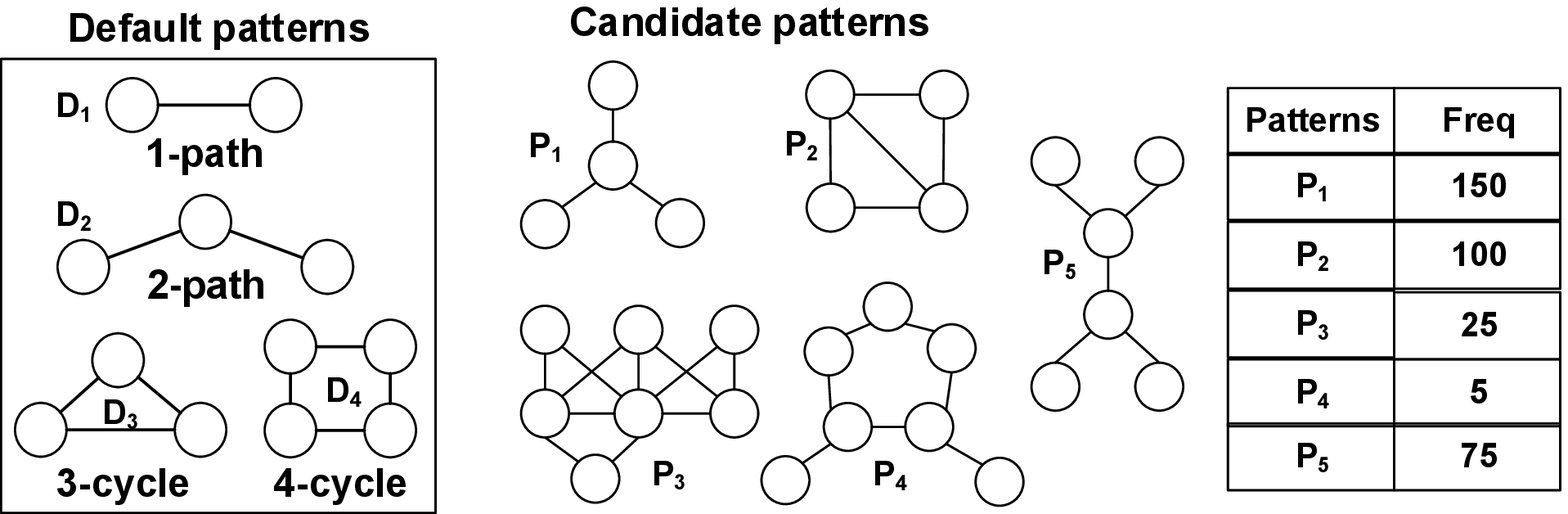}
	\vspace{-3ex}\caption{Default patterns and candidate patterns.}\label{fig:patternMining}
	\vspace{-2ex}
\end{figure}

\eat{\begin{proof} (Sketch).
		Let $G_{max}=(V_{max},E_{max})$ be the largest candidate pattern in $P_{all}$. In the worst-case, time complexity of Algorithm~\ref{alg:greedy} is $O(|P_{all}|\gamma|V_{max}|!|V_{max}|)$ since there are $|P_{all}|$ candidate patterns and the while-loop in Algorithm~\ref{alg:greedy} iterates at most $\gamma$ times. For each iteration, the score function requires computation of coverage, cognitive load and redundancy which requires $O(|V_{max}|!|V_{max}|)$, $O(|V_{max}|+|E_{max}|)$ and $O(|V_{max}|+|V_{max}|log(|V_{max}|)$~\cite{berlingerio2013}, respectively. Note that $|V_{max}|log(|V_{max}|)\approx|E_{max}|$ in real-world graphs~\cite{berlingerio2013}. The space complexity is due to storage of all candidate patterns. Hence, Algorithm~\ref{alg:greedy} has space complexity of $O(|P_{all}|(|V_{max}|+|E_{max}|))$.
\end{proof}}

\eat{\vspace{-0.5ex}\textbf{Remark.} Note that in Theorem~\ref{thm:greedyComplexity} the exponential time complexity is due to the subgraph isomorphism check during diversity computation.\eat{ In practice, it can be performed fast as candidate patterns are small in size.}}

\vspace{-0.5ex}\begin{example} \label{eg:greedyExample}
	Consider a \textsc{gui} $\mathbb{I}$ and a plug $b=(3,11,6\eat{, 1})$. Suppose there are four default patterns and five candidate patterns (\ie $P_{cand}$) as depicted in Figure~\ref{fig:patternMining}.\eat{ Observe that $p_1$ and $p_5$ belong to star-like patterns whereas $p_2$ and $p_3$ are $k$-\textsc{cp} and \textsc{ccp}.} Let $\delta=10$. The algorithm first removes $p_4$ since $freq(p_4)<\delta$. Then, for the remaining patterns in $P_{cand}$, each is considered in turn to be added to $\mathcal{P}$ by exploiting \textsf{CPS-R-Greedy} technique. It first considers adding $p_1$ to $\mathcal{P}$ and computes the resulting coverage ($f_{covub(\mathcal{P}\bigcup p_1)}$), cognitive load ($f_{cog(\mathcal{P}\bigcup p_1)}$) and similarity ($f_{sim(\mathcal{P}\bigcup p_1)}$). The pattern set score of $\mathcal{P}\bigcup p_1$ is then computed using Defn.~\ref{def:patternScore}. The scores of the other candidate patterns are computed similarly. Suppose the scores are 0.72, 0.63, 0.54, 0.68 for $p_1$, $p_2$, $p_3$, $p_5$, respectively. Then, in the first iteration, $p_1$ is selected (and removed from subsequent iterations) and the current best score $s_{best}$ is updated to 0.72. In the next (\ie final) iteration, the candidates are again considered in turn to be added to $\mathcal{P}$ and corresponding pattern set scores are computed. However, unlike the first iteration, only those candidates whose scores are greater than $s_{best}$ are considered. Let the scores of $p_2$, $p_3$ and $p_5$ be 0.81, 0.7 and 0.77, respectively. Then, a candidate will be randomly selected from $p_2$ or $p_5$. Suppose $p_2$ is chosen, then the final pattern set is $\{d_1, d_2, d_3, d_4, p_1, p_2\}$.
	\EndOfProof\end{example}

\begin{algorithm}[!t]
	\algsetup{
		linenosize=\scriptsize
	}
	\begin{algorithmic}[1]
		\scriptsize
		\REQUIRE Candidate pattern set $P_{all}$ and its frequency $freq(P_{all})$, pattern budget $b=(\eta_{min}, \eta_{max}, \gamma)$;
		\ENSURE Canned pattern set $\mathcal{P}$;
		
		\STATE $s_{best}\leftarrow 0\;$
		
		\WHILE{$\gamma>0$}\label{line:patternMiningSelectStart}
		
		\STATE $p_{best}\leftarrow\phi\;$
		
		\STATE $s_{map}\leftarrow\phi\;$
		
		\STATE $C\leftarrow\phi\;/* \textrm{list of good candidates} */$
		
		\FOR{$p\in P_{all}$}\label{line:selectGoodCandidatesStart}
		
		\STATE $f_{covub(p\bigcup\mathcal{P})}\leftarrow \textsc{GetCoverage}(p,\mathcal{P}, freq(\mathcal{P}\bigcup\{p\}))\;$\label{line:patternMiningCovCogStart}
		
		\STATE $f_{cog(p\bigcup\mathcal{P})}\leftarrow \textsc{GetCognitiveLoad}(p,\mathcal{P})\;$\label{line:patternMiningCog}
		
		\STATE $f_{sim(p\bigcup\mathcal{P})}\leftarrow \textsc{GetSimilarity}(p,\mathcal{P})\;$\label{line:patternMiningDiversity}
		
		\STATE $s\leftarrow\frac{1}{3}(f_{covub(p\bigcup\mathcal{P})}-f_{sim(p\bigcup\mathcal{P})}-f_{cog(p\bigcup\mathcal{P})}+2)$
		
		\IF{$s>s_{best}$ \textrm{ and } $|\mathcal{P}|=0$}\label{line:marginalImprovement}
		
		\STATE $s_{best}\leftarrow s\;$
		
		\STATE $p_{best}\leftarrow p\;$
		
		\ELSIF{$s>s_{best}$}\label{line:goodCandidates}
		
		\STATE $s_{map}\leftarrow\textsc{UpdateScore}(s_{map},s,p)$
		
		\STATE $C\leftarrow C\bigcup\{p\}\;$
		
		\ENDIF\label{line:marginalImprovementEnd}
		
		\ENDFOR\label{line:selectGoodCandidatesEnd}

		\IF{$|C|>0$}\label{line:randomStart}
		
		\STATE $p_{best}\leftarrow\textsc{RandomChoose}(C)$\;
		
		\STATE $s_{best}\leftarrow\textsc{GetScore}(s_{map},p_{best})$\;
		
		\ENDIF\label{line:randomEnd}
		
		\IF{$p_{best}\neq\phi$}
		
		\STATE $\mathcal{P}\leftarrow\mathcal{P}\bigcup\{p_{best}\}\;$
		
		\STATE $P_{all}\leftarrow P_{all}\setminus\{p_{best}\}\;$
		
		\STATE $\gamma\leftarrow\gamma-1\;$
		
		\ELSE
		
		\STATE \textrm{break}\;\label{line:noGoodCandidates}
		
		\ENDIF
		
		\ENDWHILE\label{line:patternMiningSelectEnd}
		
	\end{algorithmic}
	\caption{$\textsc{CPS-R-Greedy}$.} \label{alg:greedy}
\end{algorithm}

\eat{\begin{algorithm}[!t]
		\algsetup{
			linenosize=\scriptsize
		}
		\begin{algorithmic}[1]
			\scriptsize
			\REQUIRE Candidate pattern set $P_{all}$ and its frequency $freq(P_{all})$, pattern budget $b=(\eta_{min}, \eta_{max}, \gamma)$;
			\ENSURE Canned pattern set $\mathcal{P}$;
			
			\STATE $\mathcal{P}_A\leftarrow\phi\;$
			
			\STATE $\mathcal{P}_B\leftarrow P_{all}\;$
			
			\WHILE{$|\mathcal{P}_A|\neq|\mathcal{P}_B|$}\label{line:dGreedyStart}
			
			\FOR{$p\in\mathcal{P}_B\setminus\mathcal{P}_A$}
			
			\STATE $\alpha_p\leftarrow\max\{s(p\bigcup\mathcal{P}_A)-s(\mathcal{P}_A),0\}$\label{line:dGreedyAlpha}
			
			\STATE $\beta_p\leftarrow\max\{s(\mathcal{P}_B\setminus p)-s(\mathcal{P}_B),0\}$\label{line:dGreedyBeta}
			
			\ENDFOR
			
			\STATE $\alpha\leftarrow\textsc{Bound}(\alpha,\gamma,|\mathcal{P}_A|)$\label{line:dGreedyBound}
			
			\FOR{$p\in\mathcal{P}_B\setminus\mathcal{P}_A$}
			
			\STATE $\mathcal{P}_A\leftarrow\textsc{updateSetA}(\mathcal{P}_A,\frac{\alpha_p}{\alpha_p+\beta_p},p)$\label{line:dGreedyUpdateSetA}
			
			\STATE $\mathcal{P}_B\leftarrow\textsc{updateSetB}(\mathcal{P}_B,\frac{\beta_p}{\alpha_p+\beta_p},p)$\label{line:dGreedyUpdateSetB}
			
			\ENDFOR
			
			\ENDWHILE\label{line:dGreedyEnd}
			
			\STATE $\mathcal{P}\leftarrow\mathcal{P}\bigcup\mathcal{P}_A$
			
		\end{algorithmic}
		\caption{$\textsc{D-Greedy}$.} \label{alg:dGreedy}
	\end{algorithm}
}

\eat{\begin{algorithm}[!t]
		\algsetup{
			linenosize=\scriptsize
		}
		\begin{algorithmic}[1]
			\small
			\REQUIRE Candidate pattern set $P_{all}$, plug $b=(\eta_{min}, \eta_{max}, \gamma, \mathbb{N})$;
			\ENSURE Canned pattern set $\mathcal{P}$;
			
			\STATE $x_0\leftarrow\phi\;$
			
			\STATE $y_0\leftarrow P_{cand}\;$
			
			\FOR{$i\in[1,|P_{cand}|]$}
			
			\STATE $\alpha_i^{\prime}\leftarrow F(x_{i-1}+ p_i)-F(x_{i-1})$\label{line:dGreedyMultilinear1}
			
			\STATE $\beta_i^{\prime}\leftarrow F(y_{i-1}- p_i)-F(y_{i-1})$\label{line:dGreedyMultilinear2}
			
			\STATE $\alpha_i^{\prime}\leftarrow\max\{\alpha_i,0\}$
			
			\STATE $\beta_i^{\prime}\leftarrow\max\{\beta_i,0\}$
			
			\STATE $x_i\leftarrow x_{i-1}+\frac{\alpha_i^{\prime}}{\alpha_i^{\prime}+\beta_i^{\prime}}\cdot\{p_i\}$\label{line:dGreedyUpdateX}
			
			\STATE $y_i\leftarrow y_{i-1}-\frac{\beta_i^{\prime}}{\alpha_i^{\prime}+\beta_i^{\prime}}\cdot\{p_i\}$\label{line:dGreedyUpdateY}
			
			\ENDFOR
			
			\STATE $\mathcal{P}\leftarrow GetIntSol(x_{|P_{cand}|})$/* Equiv. to $GetIntSol(y_{|P_{cand}|})$ */\label{line:dGreedyRounding}
			
		\end{algorithmic}
		\caption{\textsf{CPS-CD}-Greedy Algorithm.} \label{alg:dGreedy}
\end{algorithm}}

\eat{\textbf{CPS-Continuous Double Greedy Algorithm (CPS-CD-Greedy).} The \textit{Continuous Double Greedy} algorithm in ~\cite{buchbinder2014} is an extension of the \textit{Discrete Double Greedy} (\textsf{DD-Greedy}) approach~\cite{buchbinder2015}. We briefly describe \textsf{DD-Greedy} to ease exposition of \textsf{CPS-CD-Greedy}.\eat{ The \textsf{DD-Greedy} was originally developed to address \textit{unconstrained submodular maximization} (\textsc{usm}) problem.} Given a non-negative submodular function $f$, let the complement of $f$ be $\overline{f}$ defined as $\overline{f}(X)\triangleq f(\mathcal{N}\setminus X)$ for any $X\subseteq\mathcal{N}$. \eat{Consider the greedy algorithm. For $f$, the algorithm starts with an empty solution and in each iteration, greedily adds an element to it. Conversely, in the case of $\overline{f}$, the algorithm starts with solution $\mathcal{N}$ and iteratively removes elements from it. It is known that both approaches fail when applied separately for \textsc{usm}~\cite{buchbinder2015}.} The \textsf{DD-Greedy} algorithm is as follows: it starts with two solution sets $X$ and $Y$ which are initialized to an empty set and $\mathcal{N}$, respectively. Then, it considers one random element (\eg $p$) at a time. Let $\alpha_p$ and $\beta_p$ be the marginal gains of adding $p$ to $X$ and removing $p$ from $Y$, respectively. Note that $p$ will be added to $X$ (resp. removed from $Y$) if $\alpha_p\geq\beta_p$ (resp. $\alpha_p<\beta_p$). After a single pass of the ground set $\mathcal{N}$, the two solutions coincide and is returned as the output of the algorithm. Such correlated execution on both $f$ and $\overline{f}$ yields an approximation with guarantee of $\frac{1}{2}$~\cite{buchbinder2015}.
	
	The \textit{Continuous Double Greedy} algorithm extends \textsf{DD-Greedy} to a continuous setting to allow greater control over the sizes of $X$ and $Y$. Algorithm~\ref{alg:dGreedy} outlines it in the context of \textsc{cps} problem. In contrast to \textsf{DD-Greedy}, it assigns a fractional value to each $p$ and then the resulting fractional solution is transformed into an integral solution with the same expected cost (Line~\ref{line:dGreedyRounding}) using known rounding techniques~\cite{buchbinder2014} (we use \textit{pipage rounding}~\cite{ageev2004}, which iteratively transforms a given fractional solution $x$ into a new solution $x^{\prime}$ with fewer number of non-integral components). This is achieved by (1) replacing the original submodular pattern set score function $s(.)$ (Definition~\ref{def:patternScore}) with its multilinear extension $F$, and (2) replacing the solution sets $X$ and $Y$ with the vectors $x_i,y_i\in[0,1]^{|P_{cand}|}$. In particular, the value of $F$ at point $x\in[0,1]^{|P_{cand}|}$ is the expected value of $s(.)$ over a random subset $\mathcal{P}_x\subseteq P_{cand}$. The subset $\mathcal{P}_x$ contains each pattern $p\in P_{cand}$ independently with probability $x_p$. For every $x\in[0,1]^{|P_{cand}|}$, $F(x)\triangleq \mathbb{E}[\mathcal{P}_x]=\sum_{M\subseteq P_{cand}}s(M)\prod_{p\in M}x_p\prod_{p\notin M}(1-x_p)$. Similar to~\cite{buchbinder2015}, we abuse notations and unify a set with its characteristic vector. In particular, for $x,y\in[0,1]^{|P_{cand}|}$, we use $x\vee y$ and $x\wedge y$ to denote the coordinate-wise maximum and minimum, respectively, of $x$ and $y$. That is, $(x\vee y)_p=\max\{x_p,y_p\}$ and $(x\wedge y)_p=\min\{x_p,y_p\}$. Further, for Lines~\ref{line:dGreedyUpdateX} and~\ref{line:dGreedyUpdateY}, if $\alpha_i^{\prime}$ or $\beta_i^{\prime}$ is 0, then $\frac{\alpha_i^{\prime}}{\alpha_i^{\prime}+\beta_i^{\prime}}$ and $\frac{\beta_i^{\prime}}{\alpha_i^{\prime}+\beta_i^{\prime}}$ are set to 1 and 0, respectively.
	
	\eat{In \textsc{Tattoo}, rounding (Line~\ref{line:dGreedyRounding}) is achieved using pipage rounding~\cite{ageev2004}. Briefly, pipage rounding iteratively transform a given fractional solution $x$ into a new solution $x^{\prime}$ with fewer number of non-integral components. We begin by formulating a general rounding problem. Given a bipartite graph $G=(U,W;E)$, a function $F(x)$ defined on rational points $x=(x_e:e\in E)$ of the $|E|$-dimensional cube $[0,1]^{|E|}$ computable in polynomial time, and $p:U\bigcup W\rightarrow \mathbb{Z}_+$. Consider the binary program $\max F(x)$ s.t. three constraints are satisfied, namely, (1) $\sum_{e\in\delta(v)}x_e\leq p_v$, $v\in U\bigcup W$, (2) $0\leq x_e\leq 1$, $e\in E$, and (3) $x_e\in\{0,1\}$, $e\in E$. The general step of pipage rounding is as follows: If $x$ is integral, rounding terminates and $x$ is returned. Otherwise, a subgraph $H_x$ of $G$ with the same vertex set and edge set $E_x$ is constructed. Note that $E_x$ is defined by the condition that $e\in E_x$ iff $x_e$ is non-integral. If $H_x$ contains cycles, then set $R$ to be such a cycle. If $H_x$ is a forest, $R$ is set as a path of $H_x$ whose endpoints have degree 1. Observe that in both cases, $R$ can be uniquely represented as the union of two matchings (\ie $M_1$ and $M_2$) since $H_x$ is bipartite. Then, a new solution $x(\epsilon,R)$ is defined with the following rule: if $e\in E\setminus R$, then $x_e(\epsilon,R)$ coincides with $x_e$, otherwise, $x_e(\epsilon,R)=x_e+\epsilon$ when $e\in M_1$, and $x_e(\epsilon,R)=x_e-\epsilon$ when $e\in M_2$. Further, $\epsilon_1=\min\{\min_{e\in M_1}x_e,\min_{e\in M_2}(1-x_e)\}$, $\epsilon_2=\min\{\min_{e\in M_1}(1-x_e),\min_{e\in M_2}x_e\}$, $x_1=x(-\epsilon_1,R)$ and $x_2=x(\epsilon_2,R)$. \textit{Rounding of $x$ over $R$} is done by setting $x^{\prime}=x_1$ if $F(x_2)<F(x_1)$ and $x^{\prime}=x_2$ otherwise. Note that $F(\overline{x})\geq F(x)$ where $\overline{x}$ is the integral vector converted from the fractional solution $x$~\cite{ageev2004}.}
	
	\eat{\begin{lemma}\label{lem:dGreedyMarginalGain}
			In \textit{D-Greedy}, for every $1\geq i\geq |P_{all}|, F(OPT_{i-1})-F(OPT_i)\leq\frac{1}{2}\cdot[F(x_i)-F(x_{i-1})+F(y_i)-F(y_{i-1})].$
	\end{lemma}}
	
	\eat{\begin{proof}
			Similar to \cite{buchbinder2015}, we begin by proving that for every $1\geq i\geq |P_{all}|, s(OPT_{i-1})-s(OPT_i)\leq\frac{1}{2}\cdot[s(X_i)-s(X_{i-1})+F(Y_i)-F(Y_{i-1})]$. Assume without loss of generality that $\alpha_i\geq\beta_i$. That it, $X_i\leftarrow X_{i-1}\bigcup\{p_i\}$ and $Y_i\leftarrow Y_{i-1}$ (the other case is analogous). Note that in this case, $OPT_i=(OPT\bigcup X_i)\bigcap Y_i=OPT_{i-1}\bigcup\{p_i\}$ and $Y_i=Y_{i-1}$. Thus, the inequality can be written as $s(OPT_{i-1})-s(OPT_{i-1}\bigcup\{p_i\})\leq s(X_i)-s(X_{i-1})=\alpha_i$. Let's consider the case where $p_i\in OPT$. Note that LHS of the inequality is 0. Further, since $(X_{i_1}\bigcup\{p_i\})\bigcup(Y_i\setminus\{p_i\})=Y_{i-1}$ and $(X_{i-1}\bigcup\{p_i\})\bigcap(Y_i\setminus\{p_i\})=X_{i-1}$ and by combining these observations with submodularity, $\alpha_i+\beta_i=[s(X_{i-1}\bigcup\{p_i\})-s(X_{i-1})]+[s(Y_{i-1}\setminus\{p_i\})-s(Y_{i_1})]=[s(X_{i-1}\bigcup\{p_i\})+s(Y_{i-1}\setminus\{p_i\})]-[s(X_{i-1})+s(Y_{i-1})]\geq 0$ (\ie for every $1\leq i\leq |P_{all}|$, $\alpha_i+\beta_i\geq 0$ (Lemma A)). Coupled with our assumption that $\alpha_i\geq\beta_i$ (by our assumption), then $\alpha_i$ is non-negative. Now, considering the case where $p_i\notin OPT$, then, $p_i\notin OPT_{i-1}$. Thus, $s(OPT_{i-1})-s(OPT_{i-1}\bigcup\{p_i\})\leq s(Y_{i-1}\setminus\{p_i\})-s(Y_{i-1})=\beta_i\leq\alpha_i$. Note that the first inequality follows by submodularity since $OPT_{i-1}=((OPT\bigcup X_{i-1})\bigcap Y_{i-1})\subseteq Y_{i-1}\setminus\{p_i\}$, $p_i\in Y_{i-1}$ and $p_i\notin OPT_{i_1}$. The second inequality follows from the assumption that $\alpha_i\geq\beta_i$.
			
			By Lemma A and $\alpha_i+\beta_i\geq 0$, $\alpha_i$ and $\beta_i$ cannot be both strictly less than 0. Thus, there are 3 cases left to consider, namely, $\alpha_i\geq 0$ and $\beta_i\leq 0$ (case 1); $\alpha_i<0$ and $\beta_i\geq 0$ (case 2) and $\alpha_i\geq 0$ and $\beta_i>0$ (case 3). For case 1: $\frac{\alpha_i^{\prime}}{\alpha_i^{\prime}+\beta_i^{\prime}}=1$ and so $y_i=y_{i-1}$, $x_i\leftarrow x_{i-1}\vee\{p_i\}$. Hence, $F(y_i)-F(y_{i-1})=0$. By our definition, $OPT_i=(OPT\vee x_i)\wedge y_i=OPT_{i-1}\vee\{p_i\}$ and it is left to prove that $F(OPT_{i-1})-F(OPT_{i-1}\vee\{p_i\})\leq\frac{1}{2}\cdot[F(x_i)-F(x_{i-1})]=\frac{\alpha_i}{2}$. Note that if $p_i\in OPT$, then LHS of inequality is 0 (\ie $\leq\frac{\alpha_i}{2}$). When $p_i\notin OPT$, then $F(OPT_{i-1})-F(OPT_{i-1}\vee\{p_i\})\leq F(y_{i-1}-\{p_i\})-F(y_{i-1})=\beta_i\leq 0\leq\frac{\alpha_i}{2}$. Observe that the first inequality follows from submodularity since $OPT_{i-1}=((OPT\wedge x_{i-1})\vee y_{i-1})\leq y_{i-1}-\{p_i\}$. Note that $(y_{i-1})_{p_i}=1$ and $(OPT_{i-1})_{p_i}=0$. Case 2 is analogous to case 1 and we omit its proof. Case 3: $\alpha_i^{\prime}=\alpha_i$, $\beta_i^{\prime}=\beta_i$, and so, $x_i\leftarrow x_{i+1}+\frac{\alpha_i}{\alpha_i+\beta_i}\cdot\{p_i\}$ and $y_i\leftarrow y_{i-1}-\frac{\beta_i}{\alpha_i+\beta_i}\cdot\{p_i\}$. Therefore, $F(x_i)-F(x_{i-1})=\frac{\alpha_i}{\alpha_i+\beta_i}\cdot[F(x_{i-1}\vee\{p_i\})-F(x_{i-1})]=\frac{\alpha_i^2}{\alpha_i+\beta_i}$ (Equation A). Similarly, $F(y_i)-F(y_{i-1})=\frac{\beta_i^2}{\alpha_i+\beta_i}$ (Equation B). Now, we upper bound $F(OPT_{i-1})-F(OPT_i)$. Let's assume that $p_i\notin OPT$ (proof for $p_i\in OPT$ is similar). Note that $OPT_i=(OPT\vee x_i)\wedge y_i$. $F(OPT_{i-1})-F(OPT_i)=\frac{\alpha_i}{\alpha_i+\beta_i}\cdot[F(OPT_{i-1})-F(OPT_{i-1}\vee\{p_i\})]\leq
			\frac{\alpha_i}{\alpha_i+\beta_i}\cdot[F(y_{i-1}-\{p_i\})-F(y_{i-1})]=\frac{\alpha_i\beta_i}{\alpha_i+\beta_i}$ (Equation C). The above inequality follows from the submodularity of $s$ since $OPT_{i-1}=((OPT\vee x_{i-1})\wedge y_{i-1})\leq y_{i-1}-\{p_i\}$. Note that $(y_{i-1})_{p_i}=1$ and $(OPT_{i-1})_{p_i}=0$. We now plug Equations A, B and C into the inequality that we need to prove and obtain $\frac{\alpha_i\beta_i}{\alpha_i+\beta_i}\leq\frac{1}{2}\cdot\frac{\alpha_i^2+\beta_i^2}{\alpha_i+\beta_i}$.
	\end{proof}}
	
	\begin{theorem}\label{thm:greedy}
		Algorithm~\ref{alg:dGreedy} achieves $\frac{1}{2}$-approximation of \textsc{cps} if there is oracle access to $F$. Otherwise, it achieves an approximation ratio of $\frac{1}{2}-o(1)$ with sampling.
	\end{theorem}

	\eat{\begin{proof}
			Let $OPT_i\triangleq(OPT\vee x_i)\wedge y_i$. For the sequence $F(OPT_0)$,$\cdots$,$F(OPT_{|P_{all}|})$, $OPT_0=OPT$ and $OPT_{|P_{all}|}=x_{|P_{all}|}=y_{|P_{all}|}$ since the sequence starts with the value of an optimal solution and the sequence ends at a fractional point whose value is the expected value of the algorithm's output. Further, summing up Lemma~\ref{lem:dGreedyMarginalGain} for every $1\leq i\leq |P_{all}|$ gives $\sum^{|P_{all}|}_{i=1}[F(OPT_{i-1})-F(OPT_i)]\leq\frac{1}{2}\cdot\sum^{|P_{all}|}_{i=1}[F(x_i)-F(x_{i-1})]+
			\frac{1}{2}\cdot\sum^{|P_{all}|}_{i=1}[F(y_i)-F(y_{i-1})]$. Collapsing the above telescopic sum, $F(OPT_0)-F(OPT_{|P_{all}|})\leq\frac{1}{2}\cdot[F(x_{|P_{all}|})-F(x_0)]+\frac{1}{2}\cdot[F(y_{|P_{all}|})-F(y_0)]\leq\frac{F(x_{|P_{all}|})+F(y_{|P_{all}|})}{2}$. Given the definitions of $OPT_0$ and $OPT_{|P_{all}|}$, then $F(x_{|P_{all}|})=F(y_{|P_{all}|})\geq\frac{s(OPT)}{2}$.
	\end{proof}}

	\begin{theorem}\label{thm:dGreedyComplexity}
		Algorithm~\ref{alg:dGreedy} has worst-case time and space complexity of $O(|P_{cand}||V_{max}|!|V_{max}|)$ and $O(|P_{cand}|(|V_{max}|+|E_{max}|))$, respectively.
	\end{theorem}
	
	\eat{\begin{proof}
			The proof is similar to that Theorem~\ref{thm:greedyComplexity}. In the case of Algorithm~\ref{alg:dGreedy}, the for-loop iterates for $|P_{all}|$ times and for each iteration, the score function requires computation of coverage, cognitive load and redundancy which requires $O(|V_{max}|!|V_{max}|)$, $O(|V_{max}|+|E_{max}|)$ and $O(|V_{max}|+|V_{max}|log(|V_{max}|)$~\cite{berlingerio2013}, respectively. Hence, worst-case time complexity is $O(|P_{all}||V_{max}|!|V_{max}|)$. The space complexity is due to storage of all candidate patterns. Hence, the space complexity is $O(|P_{all}|(|V_{max}|+|E_{max}|))$.
	\end{proof}}

	\vspace{-0.5ex}\begin{example} \label{eg:dGreedyExample}
		Consider a \textsc{gui} $\mathbb{I}$ and a plug $b=(3,11,6\eat{, 1})$. Suppose there are four default patterns and five candidate patterns ($P_{cand}$) as depicted in Figure~\ref{fig:patternMining}. The \textsf{CPS-CD-Greedy} algorithm starts with $x_0$ and $y_0$ being set to $\phi$ and $P_{cand}$, respectively. In the first iteration (\ie $i=1$), it computes the marginal gains of adding candidate $p_1$ to $x$ and removing $p_1$ from $y$, denoted as $\alpha_1^{\prime}$ and $\beta_1^{\prime}$, respectively. In particular, $\alpha_1^{\prime}=s(p_1)-s(\phi)$. Similarly, $\beta_1^{\prime}=s(P_{cand}\setminus p_1)-s(P_{cand})$.
		The partial solutions $x_1$ and $y_1$ are updated as $x_0+\frac{\alpha_1^{\prime}}{\alpha_1^{\prime}+\beta_1^{\prime}}\cdot\{p_1\}$ and $y_0+\frac{\beta_1^{\prime}}{\alpha_1^{\prime}+\beta_1^{\prime}}\cdot\{p_1\}$, respectively. As \textsf{CPS-CD-Greedy} iterates through $P_{cand}$, it assigns a fractional value to each candidate $p_1$, $p_2$, $p_3$, $p_5$. In the $|P_{cand}|^{th}$ iteration, the fractional value of the candidates in $x_{|P_{cand}|}$ converges with that of $y_{|P_{cand}|}$. Let $x_{|P_{cand}|}=(x_{p_1}, x_{p_2}, x_{p_3}, x_{p_5})=(0.8, 0.7, 0.3, 0.2)$. The fractional solution is then transformed into an integral solution using pipage rounding.
		\eat{ First, candidates whose $x$ coordinate values are zero are pruned (\ie no pruning is required in this case). Then, a pair of $x$ coordinates is selected (\ie $(x_{P_1},x_{P_5})=(0.8,0.2)$). There are two possible transformations, namely, (1) (0.8+0.2,0.2-0.2) where $\epsilon_1=0.2$ and (2) (0.8-0.8,0.2+0.8) where $\epsilon_2=0.8$. The probabilities of transformations 1 and 2 are $\frac{\epsilon_2}{\epsilon_1+\epsilon_2}=\frac{8}{10}$ and $\frac{\epsilon_1}{\epsilon_1+\epsilon_2}=\frac{2}{10}$, respectively. Suppose transformation 1 is selected, then $x$ is updated to $(1, 0.7, 0.3, 0)$. In the next iteration, the non-integral coordinates remaining are those corresponding to $P_2$ and $P_3$ (\ie $(0.7,0.3)$). Similarly, the possible transformations are (0.7+0.3,0.3-0.3) and (0.7-0.7,0.3+0.7) with probabilities $\frac{7}{10}$ and $\frac{3}{10}$, respectively.} Suppose $x$ is updated to $(1, 1, 0, 0)$. Then, the final canned pattern set is $\{d_1, d_2, d_3, d_4, p_1, p_2\}$.
		\EndOfProof\end{example}}

\vspace{-2ex}
\section{Performance Study}\label{sec:expt}
\textsc{Tattoo} is implemented in C++ with GCC 4.2.1 compiler. \eat{ In particular, the \textit{Boost} library is used to perform subgraph checks, isomorphism checks and for multithreading.} We now report the key performance results of \textsc{Tattoo}.\eat{ Additional results and a case study are discussed in~\cite{tech}.} All experiments are performed on a 64-bit Windows 10 desktop with Intel(R) Core(TM) i7-4770K CPU (3.50GHz) and 16GB RAM.\eat{ We tested \textsc{=} on both Windows 10 and Ubuntu 16.04 platforms. Note that we report only results on the Windows platform as the time taken for all experiments are similar on both platforms and the remaining performance measures recorded are identical.}
\vspace{0ex}
\subsection{Experimental Setup}\label{sec:exptSetup}
\textbf{Datasets.} We evaluate \textsc{Tattoo}'s performance using 10 large networks (Table~\ref{tab:graphDecomposition}) from \textsc{snap} (\url{http://snap.stanford.edu/data/index.html}) containing up to 34.7 million edges.

\textbf{Algorithms.}\eat{ To the best of our knowledge, there is no existing data-driven technique that selects canned patterns automatically from \emph{large} networks.}\eat{ Note that the recent work in~\cite{catapult} focuses on a large collection of small- or medium-sized graphs and not large networks.} State-of-the-art \textsc{gui}s for large networks~\cite{PH+17,PH+18} do not support canned patterns. Hence, we compare \textsc{Tattoo} with the following baselines: (a) \textit{\textsc{Catapult}}~\cite{catapult}: We assign same labels to all nodes of a network and partition it into a collection of small- or medium-sized data graphs using \textsc{Metis}~\cite{karypis1997}. Then the algorithm in~\cite{catapult} is used to select canned patterns.\eat{ Note that except \textit{Amazon}, other datasets either cannot be segmented by \textsc{Metis} or fail to generate patterns\footnote{\scriptsize Pattern generation using \textsc{Catapult} did not complete within 12 hrs likely due to too many possible matches due to unlabelled graphs and consequently exponential time required for processing \textsc{ged} of every possible match.}.} (b) \textit{Use \textit{graphlets}, \textit{frequent subgraphs}, \textit{random patterns}, default patterns, and edge-at-a-time (\ie pattern oblivious):} \textit{$x$-node graphlets} where $x\in[2-5]$ are generated using the approach in \cite{chen2016}. \textit{Random patterns} are generated by randomly selecting subgraphs of specific sizes from a network. The number of candidates per size follows a uniform distribution. \textit{Frequent subgraphs} are generated using \textit{Peregrine}\cite{jamshidi2020} (downloaded from~\cite{peregrine-c}). These subgraphs are considered as candidates from which the canned patterns are selected using our algorithm in Section~\ref{sec:cps-algo}.

\textbf{Query sets and GUI.} We use different query sets for the user study and automated performance study. We shall elaborate on them in respective sections. The \textsc{gui} used for user study is viewable at \url{https://youtu.be/sL0yHV1eEPw}.
\eat{We use the \textsc{gui} framework of~\cite{catapult,kai2020} for formulating all queries in the user study. The \textsc{gui} is implemented in HTML and Linkurious.js \eat{(a graph library) }and is launched using Nodes.js (9.9.0).}

\textbf{Parameter settings.} Unless specified otherwise, we set $\eta_{min}=3$, $\eta_{max}=15$, $\gamma=30$, $\delta=3$, and $\epsilon=5$.

\textbf{Performance measures.} We measure the performance of \textsc{Tattoo} using the followings: (1) \textit{Run time}: Execution time of \textsc{Tattoo}. (2) \textit{Memory requirement} (\textsc{mr}): Peak memory usage when executing \textsc{Tattoo}. (3) \textit{Reduction ratio} (denoted as $\mu$): Given a subgraph query $Q$, $\mu=\frac{step_{total}-step_P}{step_{total}}$ where $step_P$ is the \textit{minimum} number of steps required to construct $Q$ when $\mathcal{P}$ is used and $step_{total}$ is the total number of steps needed when \textit{edge-at-a-time} approach is used. Note that the number of steps excludes vertex label assignments which is a constant for a given $Q$ regardless of the approach. For simplicity in automated performance study, we follow the same assumptions in \cite{catapult}: (1) a canned pattern $p\in\mathcal{P}$ can be used in $Q$ iff $p \subseteq Q$; (2) when multiple patterns are used to construct $Q$, their corresponding isomorphic subgraphs in $Q$ do not overlap. In the user study, we shall jettison these assumptions by allowing users to modify the canned patterns and no restrictions are imposed (\ie $step_P$ does not need to be minimum). Smaller values of $div$ imply better pattern diversity. For ease of comparison, the diversity plots are based on the inverse of $div$.

\vspace{-1ex}
\subsection{User Study}\label{sec:userStudy}
We undertake a user study to demonstrate the benefits of using our framework from a user's perspective. \eat{ Lastly, we compare \textsc{Tattoo} against four baseline approaches (\ie edge-at-a-time, graphlets, random and default patterns) by conducting a user study. 27 unpaid volunteers (ages from 20 to 35) took part in accordance to \textsc{hci} research that recommends at least 10 participants\cite{Faulkner03,LFH10}. A pre-study survey showed that our subjects had a range of familiarity and expertise with subgraph queries.} 27 unpaid volunteers (ages from 20 to 35), who were students of, or, researchers within different majors took part in the user study\eat{ in accordance to \textsc{hci} research that recommends at least 10 participants\cite{Faulkner03,LFH10}}. None of them has used our \textsc{gui} prior to the study. First, we presented a 10-min scripted tutorial of our \textsc{gui} describing how to visually formulate queries. Then, we allowed the subjects to play with the tool for 15 min\eat{. and attempt to formulate at least two queries of their choice}.\eat{ During this time we answered any questions they had about the \textsc{gui}. A video of the query formulation process using the \textsc{gui} is available at \url{https://youtu.be/sL0yHV1eEPw}.}

For each dataset, 5 subgraph queries with size in the range [10-28] are selected. These queries mimic topology of real-world queries containing various structures described in Section~\ref{sec:pattop}. To describe the queries to the participants, we provided printed visual subgraph queries.\eat{ For every query, participants were allocated some time to determine the steps needed for formulate the query visually.} A subject then draws the given query using a mouse in our \textsc{gui}. The users are asked to make maximum use of the patterns to this end. Each query was formulated 5 times by different participants. We ensure the same query set is constructed in a random order (the order of the query and the approach are randomized) to counterbalance learning effects\eat{ (see \cite{tech} for details)}.

\begin{figure}[!t]
	\centering
	\includegraphics[width=3.3in]{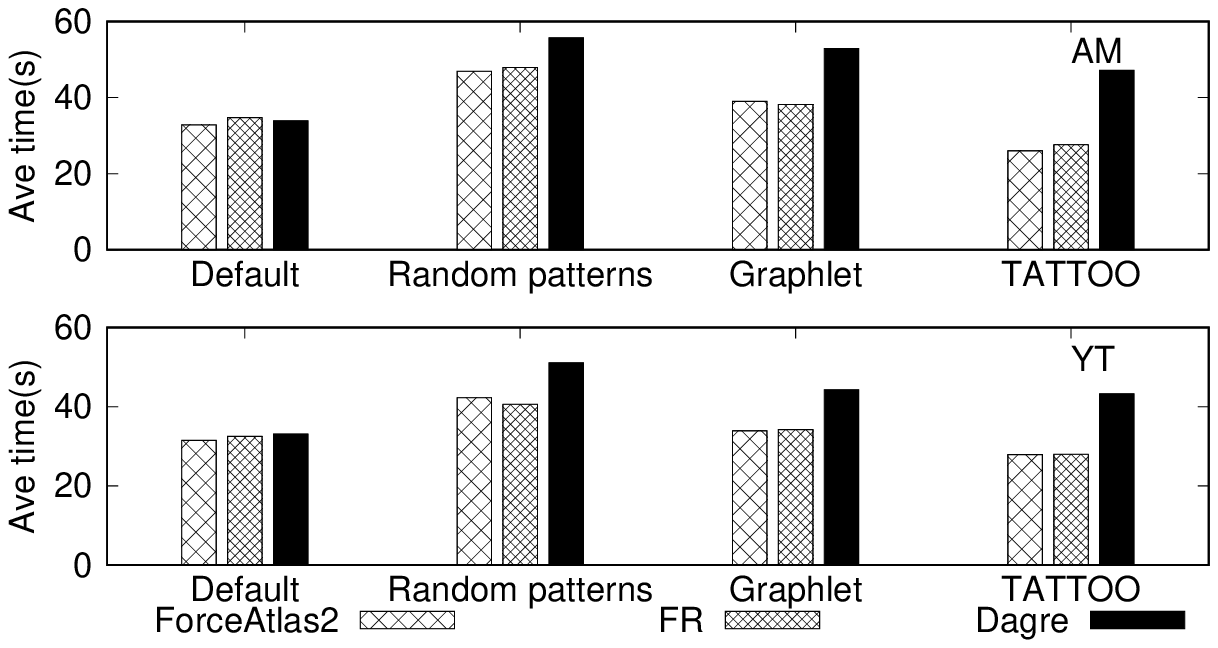}
	\vspace{-2ex}\caption{Effect of graph layout algorithm on query formulation time.}\label{fig:userStudy_orientation_time}
\end{figure}

\begin{figure}[!t]
	\centering
	\includegraphics[width=3.3in]{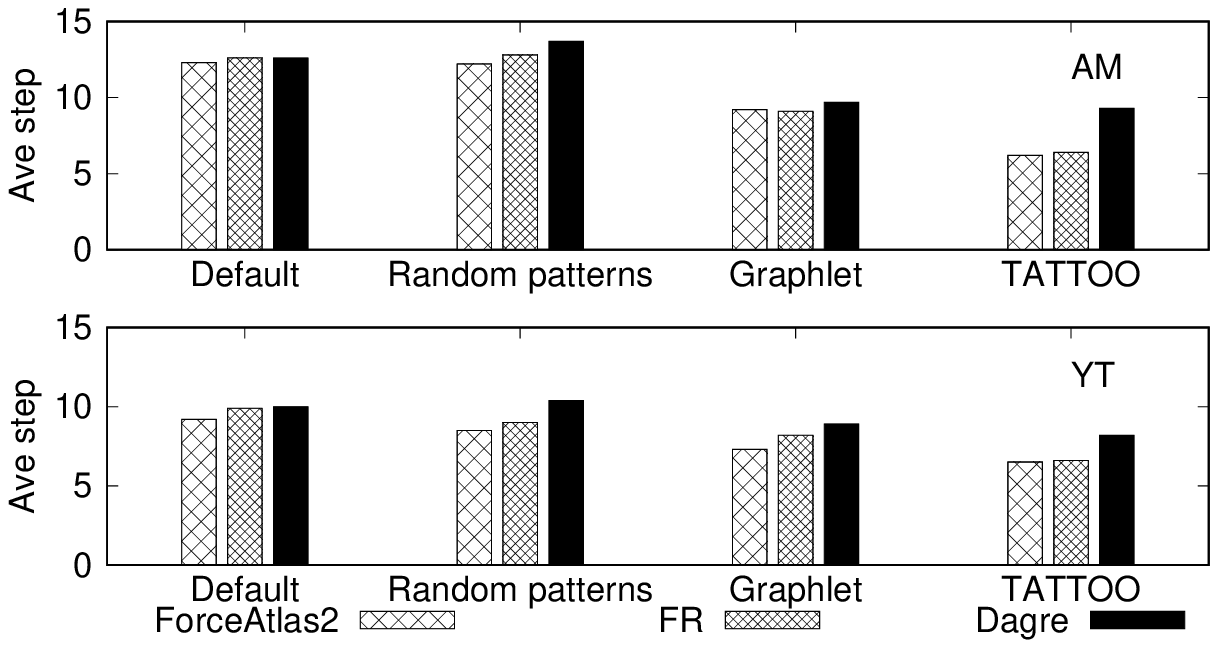}
	\vspace{-2ex}\caption{Effect of graph layout algorithm on the number of query formulation steps}\label{fig:userStudy_orientation_step}
\end{figure}

\begin{figure}[!t]
	\centering
	\includegraphics[width=3in, height=3cm]{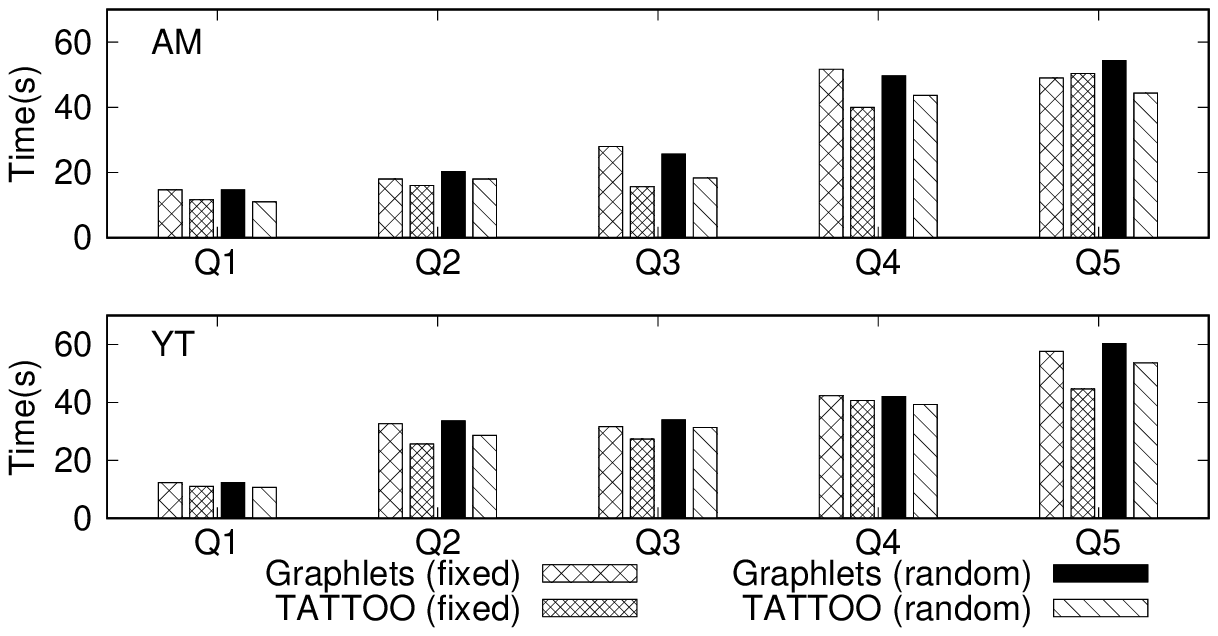}
	\vspace{0ex}\caption{Learning effect on query formulation time.}\label{fig:userStudy_random}
\end{figure}

The canned patterns on the \textsc{gui} are grouped by size and displayed using \textit{ForceAtlas2} layout~\cite{jacomy2014} in different pages according to their sizes. This multi-page-based organization yields faster average query formulation time and fewer steps compared to other alternatives\eat{ (see \cite{tech} for details)}.

\eat{Since the same query set is used repeatedly for each approach, there may be a \textit{learning effect} where volunteers start to commit the query set to memory if the study is conducted in a fixed order.\eat{ Particularly, approaches that are tested latter in the study may gain an unfair advantage over earlier approaches.} Hence, we ensure the same query set is constructed in a random order (the order of the query and the approach are randomized). This counterbalancing minimized learning effects (see \cite{tech} for details).}

\underline{\textit{Display layout of canned patterns.}} We first explore 3 different graph layout algorithms, namely, \textit{ForceAtlas2} \cite{jacomy2014}, \textit{Fruchterman Reingold} (\textsc{fr}) \cite{fruchterman1991} and \textit{Dagre} \cite{pettitt2014} to determine the most suitable layout for displaying canned patterns our \textsc{gui}. Ten participants were asked to construct a set of 5 queries each for the \textit{Amazon} and \textit{YouTube} datasets\eat{where the queries and dataset were given in a fixed order}. Each participant repeated this experiment 3 times where a different graph layout was used on the canned pattern set each time. Figures~\ref{fig:userStudy_orientation_time} and~\ref{fig:userStudy_orientation_step} plot the average query formulation time and average number of steps.\eat{ The query formulation time (\textsc{qft}) is measured from the time the user clicks the start button until the entire query formulation is completed and the stop button is clicked.} In general, the \textit{ForceAtlas2} layout yielded faster average \textsc{qft} compared to \textsc{fr} and \textit{Dagre}. Participants also took fewer steps using \textit{ForceAtlas2} compared to the rest. Hence, in subsequent experiments, \textsc{Tattoo} leverages \textit{ForceAtlas2} layout to display the canned patterns.

\begin{figure}[!t]
	\centering
	\includegraphics[width=3.2in, height=1.8cm]{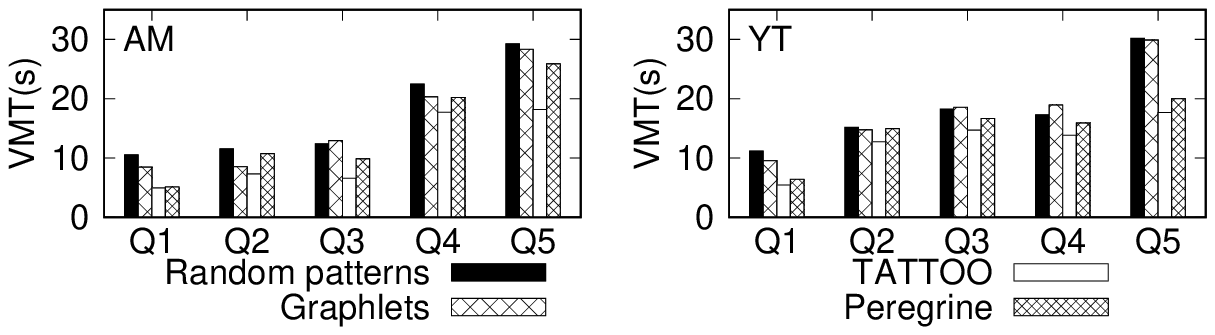}
	\vspace{-3ex}\caption{Visual mapping time of canned patterns.}\label{fig:mappingTime_userStudy}
	\vspace{-2ex}
\end{figure}

\begin{figure}[!t]
	\centering
	\includegraphics[width=3.2in, height=1.7cm]{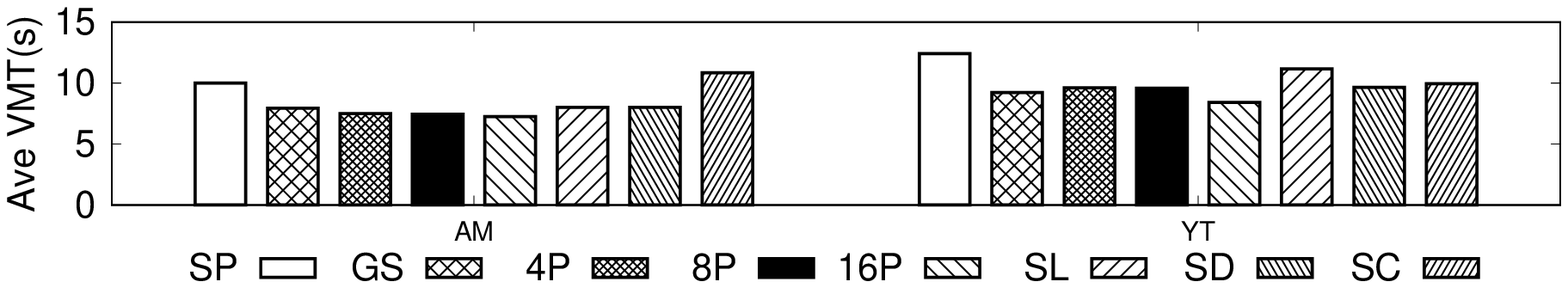}
	\vspace{-3ex}\caption{Effect of canned pattern layout on VMT.}\label{fig:vmt}
	\vspace{-3ex}
\end{figure}

\underline{\textit{Learning effect.}} Since the same query set is used repeatedly for each approach, there may be a learning effect where volunteers start to commit the query set to memory if the study is conducted in a fixed order. Particularly, approaches that are tested latter in the study may gain an unfair advantage over earlier approaches. We investigate it further with an experiment. Ten participants ($U_f$) were asked to construct 5 queries on the \textit{Amazon} and \textit{YouTube} datasets in a fixed order while another ten participants ($U_r$ where $U_f \cap U_r = \emptyset$) were asked to construct the same query set in a random order (the order of the query and the approach are randomized) to minimize learning effects. Figure~\ref{fig:userStudy_random} reports the average time taken for query formulation. Interestingly, formulation time is generally faster using \textsc{Tattoo}'s canned patterns compared to graphlets regardless of the order on the query and approach. We further examined the difference in average formulation time (\textit{i.e.}, $t_{Graphlet}-t_{\textsc{Tattoo}}$) across queries and datasets for these two orders. In particular, the average time difference for \textit{Amazon} (resp. \textit{YouTube}) is 5.5s (resp. 5.5s) and 5.9s (resp. 3.7s) for fixed order and random order, respectively. This is possibly due to the learning effect. Hence, in subsequent experiments we follow the randomized order to minimize learning effect.

\underline{\textit{Visual mapping time.}} In order to use canned patterns for query formulation, a user needs to browse the pattern set and visually map them to her query. We refer to this as \textit{visual mapping time} (\textsc{vmt}). For each pattern used, we record the \textit{pattern mapping time} (\textsc{pmt}) as the duration when the mouse cursor is in the \textit{Pattern Panel} to the time a user selects and drags it to the \textit{Query Canvas}. The \textsc{vmt} of a query is its average \textsc{pmt}. Intuitively, a longer \textsc{vmt} implies greater cognitive load on a user. Figure~\ref{fig:mappingTime_userStudy} shows the \textsc{vmt} of \textsc{tattoo} patterns, graphlets, frequent subgraphs, and random patterns on \textsc{am} and \textsc{yt} datasets. On average, \textsc{Tattoo} patterns consume the least \textsc{vmt}.

We investigate the effect of various \textsc{gui} canned pattern layout options (\ie single page ($SP$); group by size ($GS$); 4 per page ($4P$); 8 per page ($8P$); 16 per page ($16P$); sort by cognitive load ($SL$), diversity ($SD$) and sort by coverage ($SC$)) on \textsc{vmt} for \textit{AM} and \textit{YT} datasets. Note that $SP$ arranges the patterns in the order that they are identified. The plug was set to $b=(4,15,30,\ceil{\frac{30}{12}})$.\eat{ The average \textsc{vmt} is found to be in the range of [7.2-12.4] seconds.} Figure~\ref{fig:vmt} shows that the average \textsc{vmt} for layout options with multiple pages tends to be shorter (up to 33.1\%) than those in a single page (\ie $SP$, $SL$, $SD$ and $SC$). When the patterns are organized in pages, an increased number of pages reduces the need for a user to scroll and browse the patterns on a particular page and increases the need to toggle between various pages to identify useful patterns. Compared to single page options, the multi-page options ($GS$, $4P$, $8P$, $16P$) achieve superior performance primarily due to the former. Hence, the multi-page-based organization is used for our user study.

\begin{figure}[!t]
	\centering
	\includegraphics[width=3.1in, height=2.7cm]{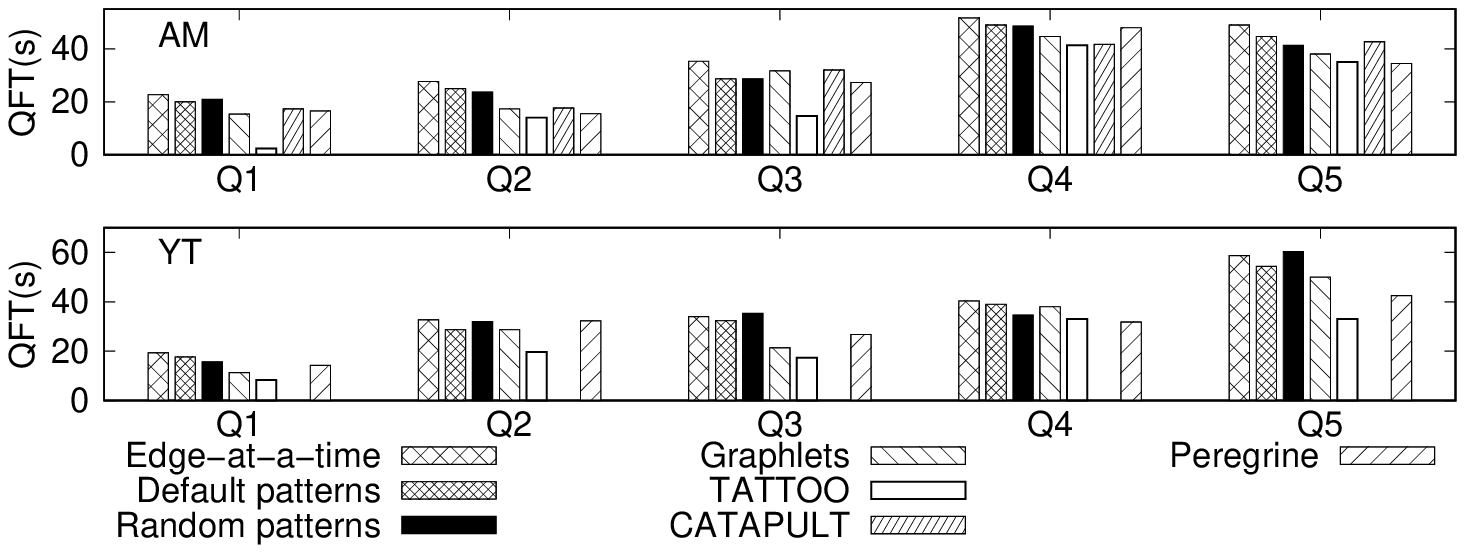}
	\vspace{-2ex}\caption{Query formulation time in user study.}\label{fig:userStudy_time}
	\vspace{-3ex}
\end{figure}

\begin{figure}[!t]
	\centering
	\includegraphics[width=\linewidth, height=3cm]{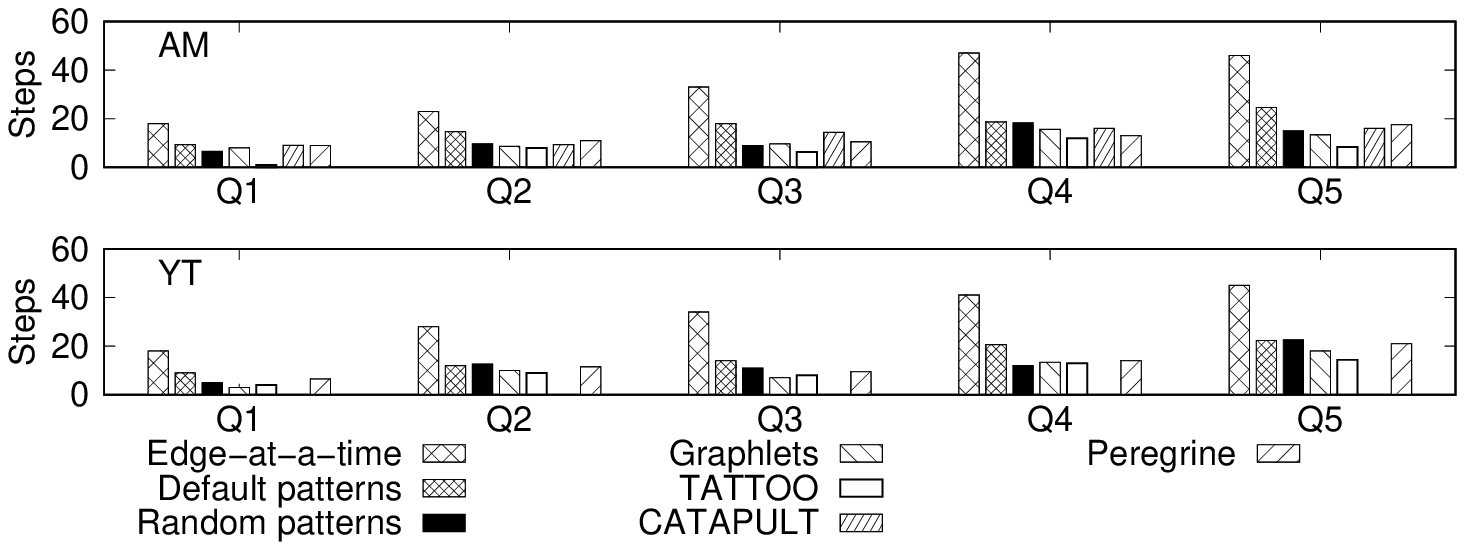}
	\vspace{-5ex}\caption{Query construction steps in user study.}\label{fig:userStudy_step}
	\vspace{-2ex}
\end{figure}

\eat{\begin{figure}[!t]
		\centering
		\includegraphics[width=3.3in, height=1.5cm]{userStudy_varyNumPatterns_time_all.eps}
		\vspace{-2ex}\caption{Effect of varying $|P|$ on QFT and steps. Query size is indicated in round brackets.}\label{fig:numGUI_time}
		\vspace{-2ex}
\end{figure}}

\begin{figure}[!t]
	\centering
	\includegraphics[width=3.1in, height=3.2cm]{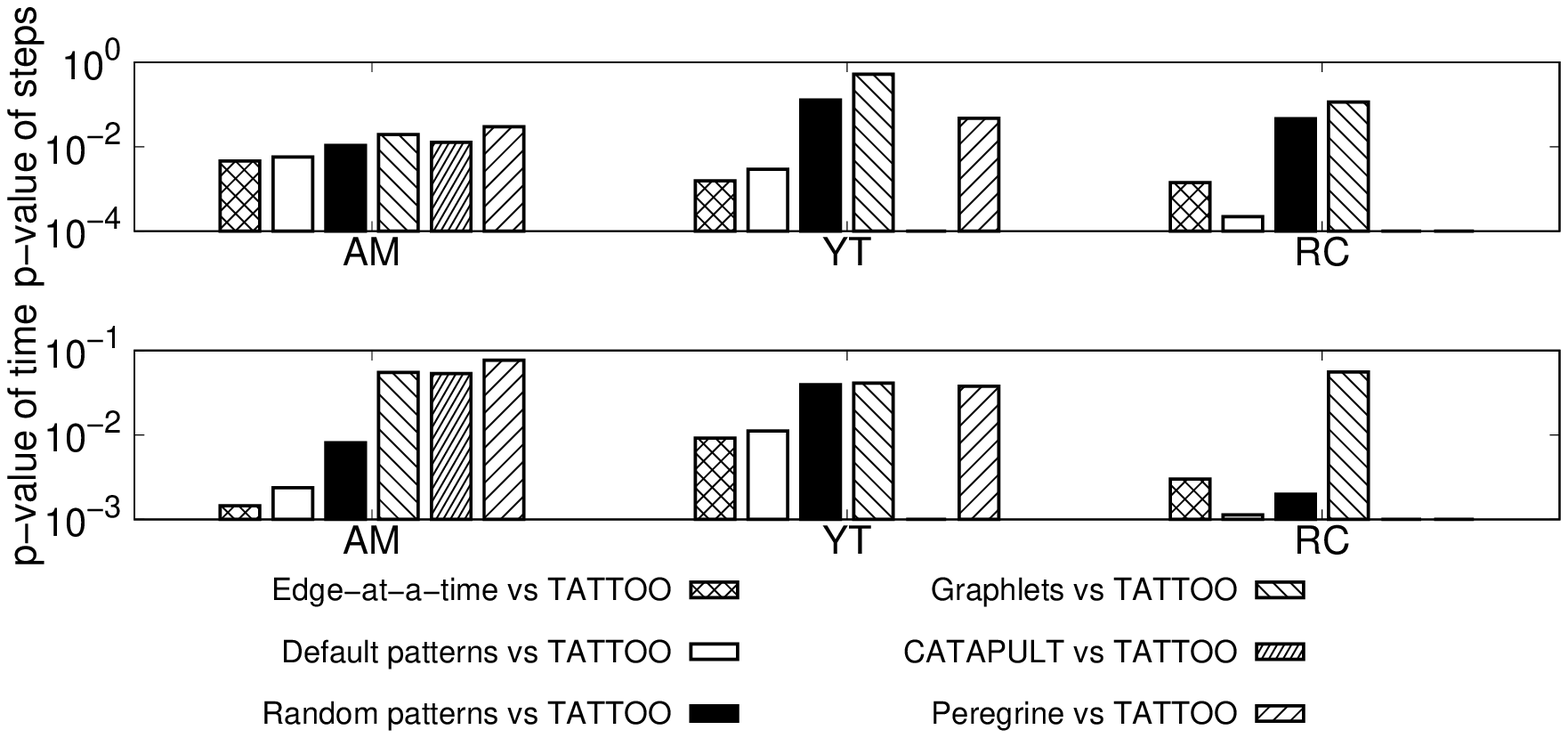}
	\vspace{-2ex}\caption{P-values of user study.}\label{fig:userStudy_pValues}
\end{figure}

\begin{figure}[!t]
	\centering
	\includegraphics[width=\linewidth]{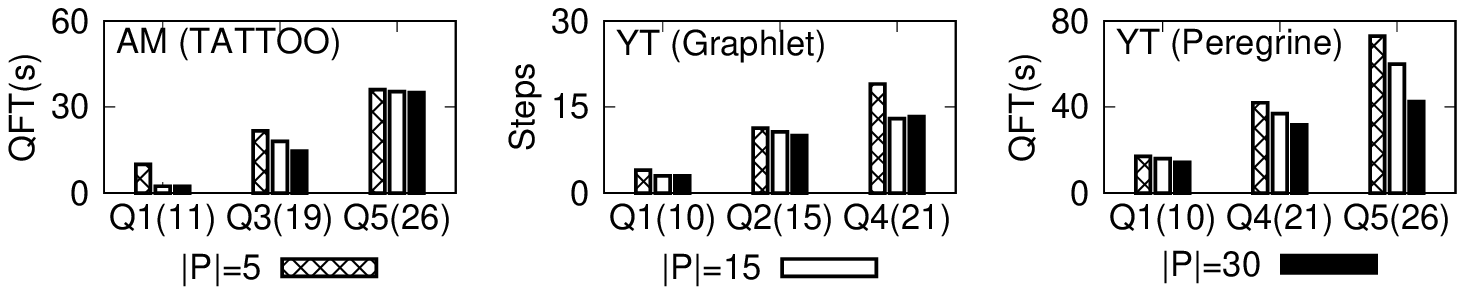}
	\vspace{-5ex}\caption{Effect of varying $|P|$ on QFT and steps. Query size is indicated in round brackets.}\label{fig:numGUI_time_step}
	\vspace{-2ex}
\end{figure}

\underline{\textit{Query formulation time (\textsc{qft}) and number of steps.}} Figures~\ref{fig:userStudy_time} and~\ref{fig:userStudy_step} plot the average \textsc{qft} and the average number of steps taken, respectively, for \textsc{am} and \textsc{yt}. Note that a \textsc{qft} includes the \textsc{vmt} and the steps include addition/deletion of nodes and edges and merger of nodes. \eat{We observe that random patterns demand the longest \textsc{qft}. This is likely due to more time needed to visually map these patterns, which tend to be dense. }As expected, the edge-at-a-time approach took the most steps. Paired t-test shows that the superior performance of \textsc{Tattoo} is statistically significant ($p<0.05$) for 79.4\% of the comparisons \eat{(see~\cite{tech} for details)}(Figure~\ref{fig:userStudy_pValues}). In particular, it takes up to 18X, 9.3X, 6.7X, 8X, 9X, and 9X fewer steps \eat{and is up to 9.8X faster }compared to edge-at-a-time, default pattern, random patterns, graphlet, frequent patterns, and \textsc{Catapult}-generated patterns, respectively. For \textsc{qft}, \textsc{Tattoo} is up to 9.7X, 8.6X, 9X, 6.6X, 7.1X, and 7.4X faster, respectively. The results are qualitatively similar in other datasets. Note that we can run \textsc{Catapult} only on \textsc{am} for reasons discussed later.

\eat{\underline{\textit{Effect of $|\mathcal{P}|$.}} The number of patterns on a \textsc{gui} may also impact a user cognitively as larger $|\mathcal{P}|$ means a user needs to browse more patterns to select relevant ones. Hence, we investigate the effect of varying $|\mathcal{P}|$ on average \textsc{qft} (Figure~\ref{fig:numGUI_time})\eat{ and the average number of steps taken (Figure~\ref{fig:numGUI_step})}. Note that there is no significant change in the number of steps taken as $|\mathcal{P}|$ varies. Interestingly, the average \textsc{qft} for larger queries (\ie $Q4$ and $Q5$) is reduced by up to 32.8\% when $|\mathcal{P}|$ is increased from 5 to 30. Increased in $|\mathcal{P}|$ results in two opposing effects: (1) longer time needed to browse and select appropriate canned patterns and longer \textsc{vmt} and (2) potentially more and larger patterns available for query construction resulting in fewer construction steps and shorter \textsc{qft}. In the case of the larger queries, the latter effect dominates. Results for other approaches are similar qualitatively.}

\underline{\textit{Effect of $|\mathcal{P}|$.}} The number of patterns on a \textsc{gui} may also impact a user cognitively as larger $|\mathcal{P}|$ means a user needs to browse more patterns to select relevant ones. Hence, we investigate the effect of $|\mathcal{P}|$ on \textsc{qft} and the number of steps (Figure~\ref{fig:numGUI_time_step}). Interestingly, \textsc{qft} and steps are reduced by average of 12\% and 22\% (maximum reduction of 77\% and 80\%), respectively, when $|\mathcal{P}|$ is increased from 5 to 30. Increase in $|\mathcal{P}|$ exposes more patterns that could be leveraged for query formulation, reducing query formulation steps. Further, it results in two opposing effects: (1) longer time needed to browse and select appropriate patterns (longer \textsc{vmt}) and (2) potentially more and larger patterns available for query construction resulting in fewer construction steps and shorter \textsc{qft}. The latter effect dominates.

\eat{\begin{figure}[!t]
		\centering
		\includegraphics[width=\linewidth, height=3.7cm]{userStudy_varyNumPatterns_step.eps}
		\vspace{-2ex}\caption{Effect of varying $|P|$ on query construction step. Query size is indicated in round brackets.}\label{fig:numGUI_step}
\end{figure}}

\eat{\textit{\underline{Comparison with graphlets, \textsc{Catapult}-generated patterns, and}} \linebreak \textit{\underline{random patterns.}} Next, we compare \textsc{Tattoo}'s canned patterns with those of graphlets (30 patterns derived from 2, 3, 4 and 5 nodes) and \textsc{Catapult}-based (Figure~\ref{fig:graphlets}) We observe that \textsc{Tattoo}'s patterns result in higher $\mu$ (up to $\sim 5\%$) and improved diversity (up to $\sim 12\%$), and are comparable in cognitive load when compared to graphlets. Hence, \textit{\textsc{Tattoo}'s canned patterns achieves greater $\mu$ and are more diversified than graphlets of comparable cognitive load}. The results are qualitatively similar for other datasets. Note that coverage is not examined since it is 100\% in all cases as all queries can be constructed using 2-node graphlet. The comparison with random patterns are reported in~\cite{tech}.}

\eat{\begin{figure}[!t]
		\centering
		\includegraphics[width=3.2in, height=1.7cm]{catapultTiming.eps}
		\vspace{-3ex}\caption{Run time of \textsc{Catapult} vs \textsc{Tattoo}.}\label{fig:catapultTiming}
\end{figure}}

\begin{figure}[!t]
	\centering
	\includegraphics[width=3in, height=3cm]{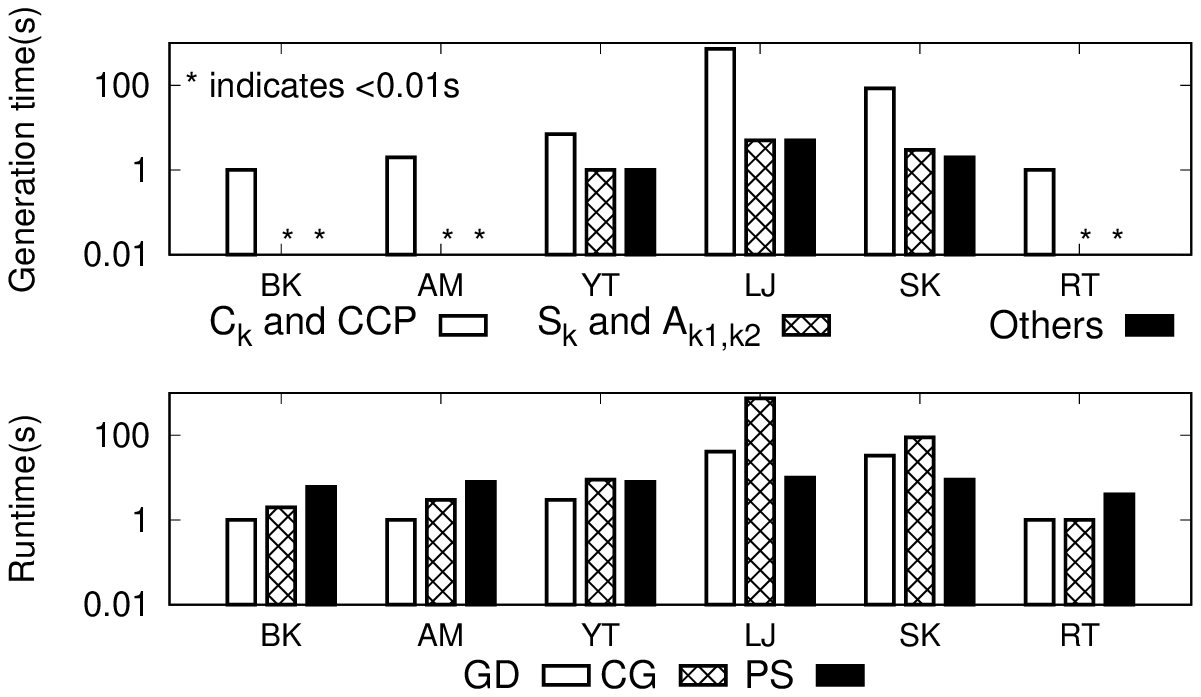}
	\vspace{-3ex}\caption{Run time. GD, CG and PS represent truss-based graph decomposition, candidate generation and pattern selection, respectively.}\label{fig:patternType_time}
	\vspace{-2ex}
\end{figure}

\begin{figure}[!t]
	\centering
	\includegraphics[width=3.3in, height=2cm]{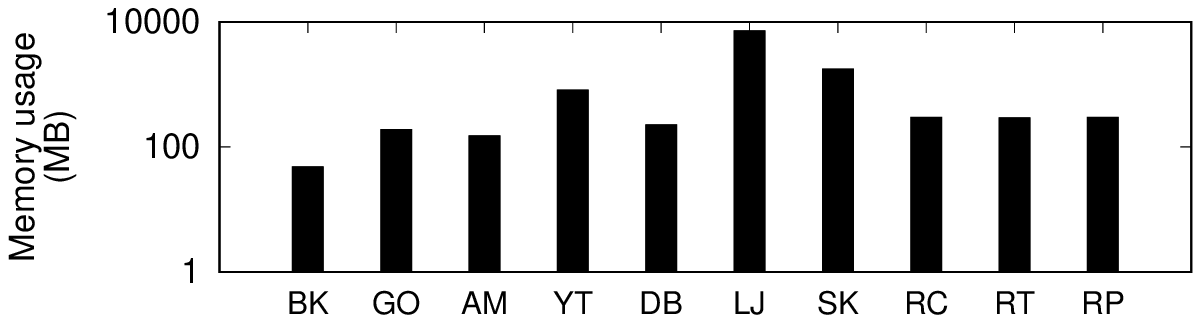}
	\vspace{-2ex}\caption{Memory requirement.}\label{fig:patternType_memory}
\end{figure}

\begin{figure}[!t]
	\centering
	\includegraphics[width=\linewidth, height=1.7cm]{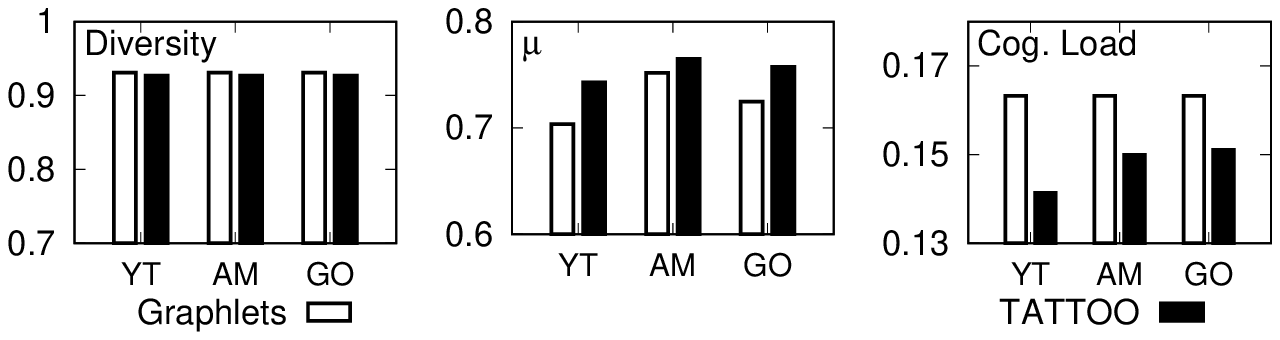}
	\vspace{-0.8cm}\caption{\textsc{TATTOO} vs graphlet  patterns.}\label{fig:graphlets}
	\vspace{-2ex}
\end{figure}

\eat{\textbf{Comparison with \textsc{Catapult}-generated patterns.} Then, we compare \textsc{Tattoo}'s canned patterns with those generated using \textsc{Catapult} on \textit{Amazon} dataset. For the latter, the single large graph is partitioned into a collection of smaller subgraphs (> 1M subgraphs) using \textsc{Metis}. Lazy sampling of \textsc{Catapult} is used to sample these subgraphs and 30 canned patterns (size in range of [4-15]) are then generated. Figure~\ref{fig:catapultTiming} shows the timing requirement of \textsc{Catapult} and \textsc{Tattoo} whereas Figure~\ref{} illustrates the canned pattern quality. Observe that \textsc{Tattoo} is 751X faster compared to \textsc{Catapult}. In terms of canned pattern quality, XXXXX}

\eat{\underline{\textit{Qualitative evaluation.}} We also conducted a post-study questionnaire to gain some qualitative feedback on \textsc{Tattoo}. Overall, \emph{all} participants prefer to use pattern-at-a-time approach compared to edge-at-a-time approach and \textsc{Tattoo}'s canned patterns are rated the most useful with average rating of 4.5 on a 5-point Likert scale. Graphlets and random patterns have average ratings of 3.5 and 1.75, respectively. More details are given in~\cite{tech}.}

\underline{\textit{Qualitative evaluation.}} We also conducted a post-study questionnaire to gain some qualitative feedback on \textsc{Tattoo}. Overall, \emph{all} participants prefer to use pattern-at-a-time approach compared to edge-at-a-time approach and \textsc{Tattoo}'s canned patterns are rated the most useful with average rating of 4.5 on a 5-point Likert scale. Graphlets, frequent and random patterns have average ratings of 3.5, 4 and 1.75, respectively.

In addition, participants were queried using an adapted \textsc{nasa-tlx}~\cite{hart1988} questionnaire (on a 5-point Likert scale) regarding the mental demand, performance and frustration level of query formulation using canned patterns generated from various approaches. Briefly, \textsc{nasa-tlx} is a commonly used tool for assessing perceived workload based on user inputs in the form of a questionnaire and consists of ratings for 6 categories\footnote{\scriptsize Mental demand can be interpreted as the cognitive load on the user; Physical demand assesses the amount of physical activity required for the task; Temporal demand is related to the time pressure experienced based on the pace of the task; Performance measures a user's satisfaction with performance of the task; Effort can be interpreted as the overall mental and physical demand needed to perform the task; Frustration level sets out to measure user's feelings (\textit{i.e.}, irritated, stressed and annoyed versus content, relaxed and complacent) during performance of the task.}, namely, mental demand, physical demand, temporal demand, performance, effort and frustration. Since we do not impose a time limit on the query formulation task, there is no temporal demand. Physical demand is also negligible in our problem setting. In particular, effort and mental demand are equivalent in this case and can be associated directly with cognitive load. The mental demand is 2.5, 3, 3.5 and 4.25 for \textsc{Tattoo}, frequent patterns, graphlets and random patterns, respectively, where larger values imply greater mental demand. In terms of performance, it is 4.25, 4, 4 and 3.75, respectively, where larger values are associated with better performance. The frustration level is 2.75, 2.75, 3 and 4, respectively, where larger values relates to more frustration. \eat{ More details are given in~\cite{tech}.} This highlights the benefits of canned patterns generated by \textsc{\textsc{Tattoo}}.

Some users elaborated on their preferences. Table~\ref{tab:feedback} lists the key comments by these users. Several users highlighted that they found edge-at-a-time approach tedious to use compared to pattern-at-a-time approach due to the repetitive task of drawing vertices and edges. This is consistent with \textsc{hci} research as remarked in Section~\ref{sec:intro}. They also felt that canned patterns of \textsc{Tattoo} are more diverse and easy to map to the query graphs. Hence, using them during query construction do not require much effort, and the patterns are useful in speeding up query formulation. Lastly, several users highlighted the usefulness of default patterns in extending other larger canned patterns during query formulation.

\begin{table}[!t]\caption{Examples of user comments.}
	\small \centering
	\vspace{0ex}\begin{tabular}{| l | p{70mm} |}
		\hline
		\textbf{Index} & \textbf{Comment} \\
		\hline
		1 & I like to use canned patterns to draw a query graph because it is faster and less tedious. Constructing the query graph each vertex and each edge at a time is just too tedious and boring!\\
		\hline
		2 & The random and graphlet patterns appear very cluttered and confusing!. It takes me considerable time to figure out if I can use them or not for my query. \\
		\hline
		3 & The default patterns are simple and easy for me to figure out. But I need to choose several of them repeatedly for a large query. The \textsc{Tattoo} patterns are a great complement to the default patterns to make the drawing faster.\\
		\hline
		4 & \textsc{Tattoo} patterns are more varied and I can usually find some patterns to use for drawing subgraph queries.\\
		\hline
		5 & I find it easy to map the \textsc{Tattoo} patterns to a query compared to other patterns like the random ones. It is actually much faster to construct a query using them than if I were to draw the vertices and edges one at a time. \\
		\hline
		6 & I like the ease of use of the \textsc{gui}. I can just use one interface to query different datasets. I don't need to switch to different interfaces for different sources.\\
		\hline
		
		7& The default patterns are very useful for extending other bigger patterns like those from \textsc{Tattoo} and graphlets when I draw a subgraph query.\\
		\hline
	\end{tabular}\label{tab:feedback}
	\vspace{0ex} \end{table}

\vspace{0ex}
\subsection{Automated Performance Study}
\eat{The preceding section reports user experience with \textsc{Tattoo}.} In this section, we evaluate \textsc{Tattoo} from the following perspectives. First, we compare the runtime and quality of patterns of \textsc{Tattoo} with the baseline approaches (\textit{Exp 1, 2}). Second, we present results that support our design decisions (\textit{Exp 3, 4, 5}). To this end, we generate \textbf{1000} queries (size [4-30]) for each dataset where 500 are randomly generated and remaining (evenly distributed) are path-like, tree, star-like, cycle-like and flower-like queries.

\textbf{Exp 1: Run time.} First, we evaluate the generation time of different patterns types in canned pattern sets. Figure~\ref{fig:patternType_time} (top) shows the results. In particular, generation of chord-like patterns requires significantly more time (up to 146\% more for \textsc{lj}) than other pattern types. This is primarily due to checks for different types of edge merger required for \textsc{ccp}s. Figure~\ref{fig:patternType_time} (bottom) reports the time taken by various phases of \textsc{Tattoo} as well as runtime of \textsc{Catapult}. \textit{ \textsc{Tattoo} selects canned patterns efficiently within a few minutes}. Observe that the time cost for the small pattern extraction phase is small in practice. \eat{As expected, candidate generation is generally the most expensive phase and pattern selection requires 10s or less. }In general, pattern selection is the most expensive phase and requires a couple of minutes or less. Results are qualitatively similar for other datasets.\eat{ Observe that our design choice of ensuring candidate pattern generation is independent from pattern selection is beneficial here as for a given network an end user can generate different \textsc{gui}s (\ie pattern selection) quickly based on her needs by using different plugs.} Figure~\ref{fig:patternType_memory} plots the memory requirement for \textsc{Tattoo}. It is largely dependent on the size of the dataset where the largest dataset $LJ$ has the greatest memory cost. \eat{Memory usage is reported in~\cite{tech}.}

Lastly, observe that \textsc{Tattoo} is 735X faster than \textsc{Catapult}, which is not designed for large networks. Except \textsc{am}, other datasets either cannot be processed by \textsc{Metis} or fail to generate patterns in a reasonable time (within 12 hrs) due to too many possible matches of unlabelled graphs that require expensive graph edit distance computation. In the sequel, we shall omit discussions on \textsc{Catapult}.

\eat{Figure~\ref{fig:patternType_memory} plots the memory requirement for \textsc{Tattoo}. It is largely dependent on the size of the dataset where the largest dataset $LJ$ has the greatest memory cost.}

\eat{\textbf{Exp 2: Comparison with graphlets and frequent subgraphs.} Next, we compare \textsc{Tattoo}'s patterns with those of graphlets (30 patterns derived from graphlets). Figure~\ref{fig:graphlets} reports the results. Observe that \textsc{Tattoo}'s patterns are superior to graphlets in all aspects.\eat{ Compared to \textsc{Catapult} (\textit{Amazon} dataset), \textsc{Tattoo} generated patterns have XXX $\mu$ (up to $\sim XXX\%$) and have slightly lower diversity ($\sim 3\%$) and cognitive load ($\sim 1\%$).}  The results are qualitatively similar for other datasets. Note that coverage is not examined since it is 100\% in all cases as all queries can be constructed using a 2-node graphlet.

We compare the canned pattern set derived from frequent subgraphs generated by\textit{ Peregrine}  (denoted as $\mathcal{P}_{P}$) to those generated by \textsc{Tattoo}.  We observe that \textit{ Peregrine} failed to extract larger size patterns (\ie $|V|\geq 8$) within 12 hrs for all networks.  Specifically, for \textsc{rp}, \textsc{rc}, and \textsc{rt} (resp. \textsc{am}), it was able to extract frequent patterns of size $|V|\leq 7$ (resp.  $|V|\leq 6$) within 2.5 hrs. For \textsc{bk} and \textsc{db} (resp. \textsc{yt}, \textsc{lj}, \textsc{sk}, and \textsc{go}) it can extract up to size $|V|\leq 5$ (resp. $|V|\leq 4$) within 2.5hrs. However, it took around 39 hrs on  \textsc{am} to yield a meaningful number of candidate patterns (994 patterns with $|V|\leq 7$ and $|E|\leq 21$)  when the minimum threshold is set to 100.  \textit{Hence,  \textsc{Tattoo}  is orders of magnitude faster than frequent pattern-based solution.}  Consequently, we restrict the canned pattern sets of both \textsc{Tattoo} and $\mathcal{P}_{P}$ to 30 patterns with $|V|\leq 7$ and $|E|\leq 21$ for \textsc{am} in our experiments for fair comparison.\eat{  Note that \textsc{Tattoo} took only 3.9 sec to select these canned patterns.} Consistent with our user study, \textsc{Tattoo}'s pattern set is superior to $\mathcal{P}_{P}$ in most aspects. The average coverage, cognitive load, diversity and $\mu$ for \textsc{Tattoo} (resp. \textit{Peregrine}) are 0.3 (resp. 0.27), 0.15 (resp. 0.14), 0.64 (resp. 0.59) and 0.23 (resp. 0.24), respectively. In summary, \textsc{Tattoo} generates better quality canned patterns.}

\textbf{Exp 2: Comparison with graphlets, frequent subgraphs, and random patterns.} Next, we compare \textsc{Tattoo}'s patterns with those of graphlets (30 patterns derived from graphlets). Figure~\ref{fig:graphlets} reports the results. Observe that \textsc{Tattoo}'s patterns are superior to graphlets in all aspects.\eat{ Compared to \textsc{Catapult} (\textit{Amazon} dataset), \textsc{Tattoo} generated patterns have XXX $\mu$ (up to $\sim XXX\%$) and have slightly lower diversity ($\sim 3\%$) and cognitive load ($\sim 1\%$).}  The results are qualitatively similar for other datasets. Note that coverage is not examined since it is 100\% in all cases as all queries can be constructed using 2-node graphlet.

We compare the canned pattern set derived from frequent subgraphs generated by\textit{ Peregrine} (denoted as $\mathcal{P}_{P}$) to those generated by \textsc{Tattoo}.\eat{ Table~\ref{tab:peregrine} shows a summary of the frequent subgraph mining using \textit{Peregrine}.} We observe that \textit{ Peregrine} failed to extract larger size patterns (\ie $|V|\geq 8$) within 12 hrs for all networks. Specifically, for \textsc{rp}, \textsc{rc}, and \textsc{rt} (resp. \textsc{am}), it was able to extract frequent patterns of size $|V|\leq 7$ (resp. $|V|\leq 6$) within 2.5 hrs. For \textsc{bk} and \textsc{db} (resp. \textsc{yt}, \textsc{lj}, \textsc{sk}, and \textsc{go}) it can extract upto size $|V|\leq 5$ (resp. $|V|\leq 4$) within 2.5hrs. However, it took around 39 hrs on \textsc{am} to yield a meaningful number of candidate patterns (994 patterns with size $|V|\leq 7$ and $|E|\leq 21$) when the minimum threshold is set to 100. \textit{Hence, \textsc{Tattoo} is orders of magnitude faster than frequent pattern-based solution.} Consequently, we restrict the canned pattern sets of both \textsc{Tattoo} and $\mathcal{P}_{P}$ to 30 patterns with $|V|\leq 7$ and $|E|\leq 21$ for \textsc{am} in our experiments for fair comparison.\eat{  Note that \textsc{Tattoo} took only 3.9 sec to select these canned patterns.} Consistent with our user study, \textsc{Tattoo}'s pattern set is superior to $\mathcal{P}_{P}$ in most aspects. The average coverage, cognitive load, diversity and $\mu$ for \textsc{Tattoo} (resp. \textit{Peregrine}) are 0.3 (resp. 0.27), 0.15 (resp. 0.14), 0.64 (resp. 0.59) and 0.23 (resp. 0.24), respectively.

The comparison with random patterns are reported in Figure~\ref{fig:random}. We observe that \textsc{Tattoo}'s patterns result in higher $\mu$, \eat{greater coverage, }and are significantly lower in cognitive load (up to 3.2X) when compared to random patterns. \textsc{Tattoo}'s patterns is up to $9\%$ less diverse compared to random patterns. The greater diversity of the random pattern set is likely due to the unrestricted way in generating the random patterns as compared to pattern generation of \textsc{Tattoo} which are partially derived from defined structures such as trusses, paths, cycles and stars. Despite the greater diversity, random patterns are more difficult to use in practice (Section~\ref{sec:userStudy}) likely due to the greater cognitive load necessary to interpret the patterns.

\begin{figure}[!t]
	\centering
	\includegraphics[width=3.1in]{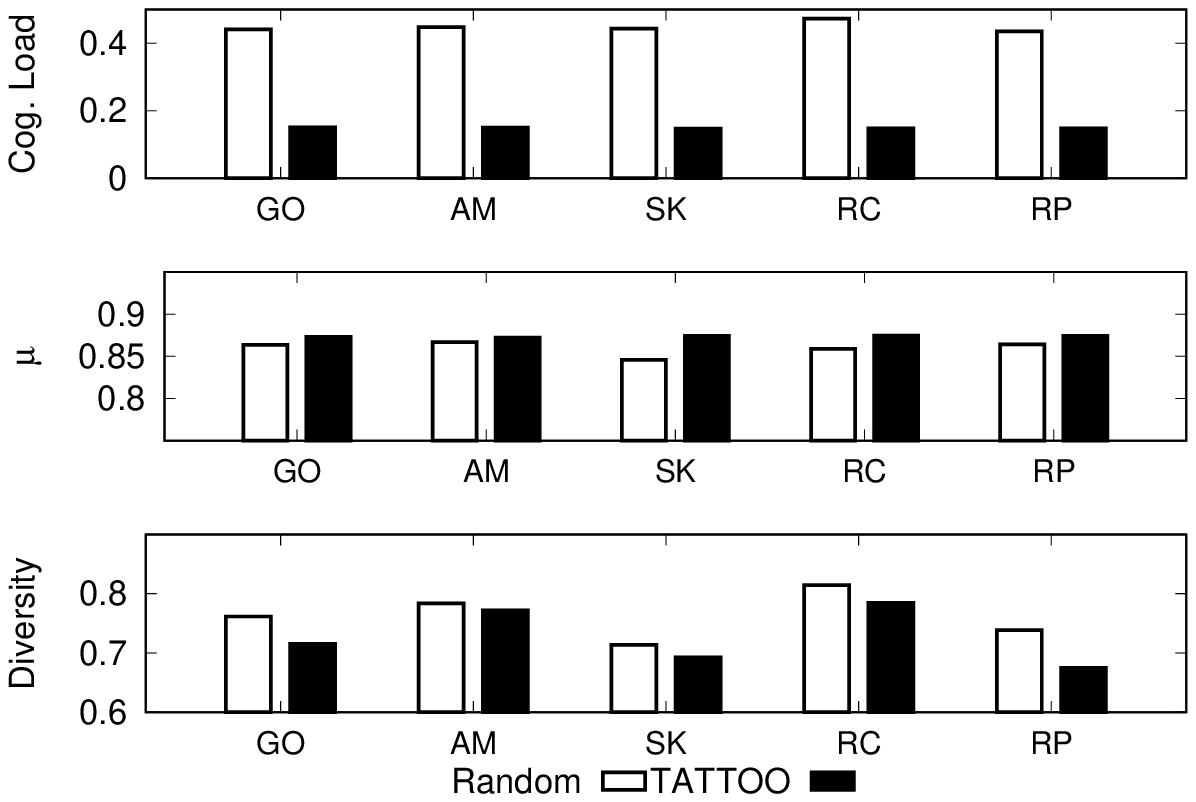}
	\vspace{0ex}\caption{TATTOO vs random patterns.}\label{fig:random}
\end{figure}

\begin{figure}[!t]
	\centering
	\includegraphics[width=0.9\linewidth, height=4cm]{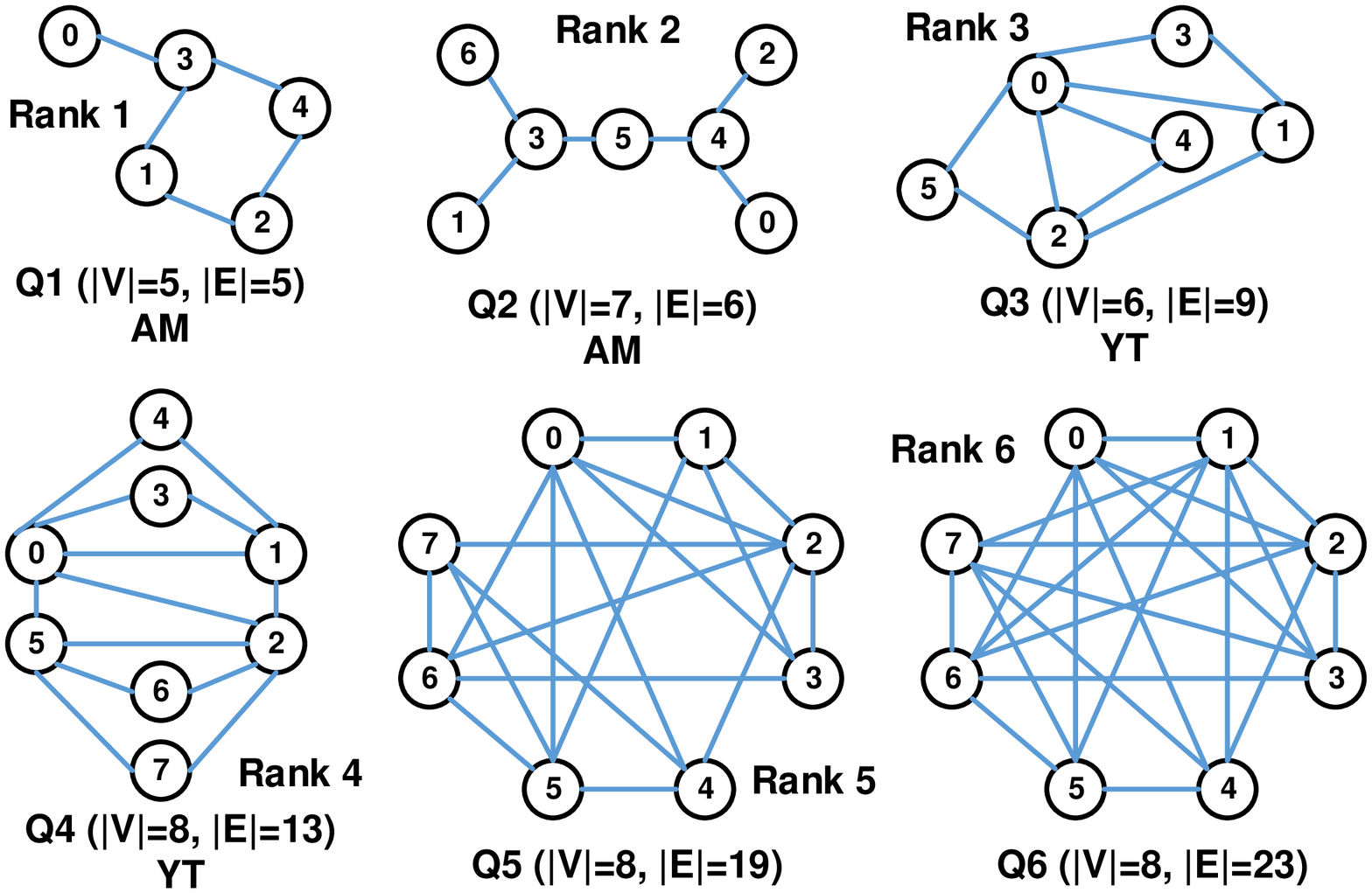}
	\vspace{-2ex}\caption{Graphs used for assessing cognitive load.}\label{fig:cogGraphs}
	\vspace{-2ex}
\end{figure}

\eat{\begin{figure}[!t]
		\centering
		\includegraphics[width=0.8\linewidth, height=1.3cm]{cog.eps}
		\vspace{-3ex}\caption{Comparison of cognitive load functions.}\label{fig:cogLoad_tau}
		\vspace{-2ex}
\end{figure}}

\begin{figure}[!t]
	\centering
	\includegraphics[width=0.9\linewidth, height=1.7cm]{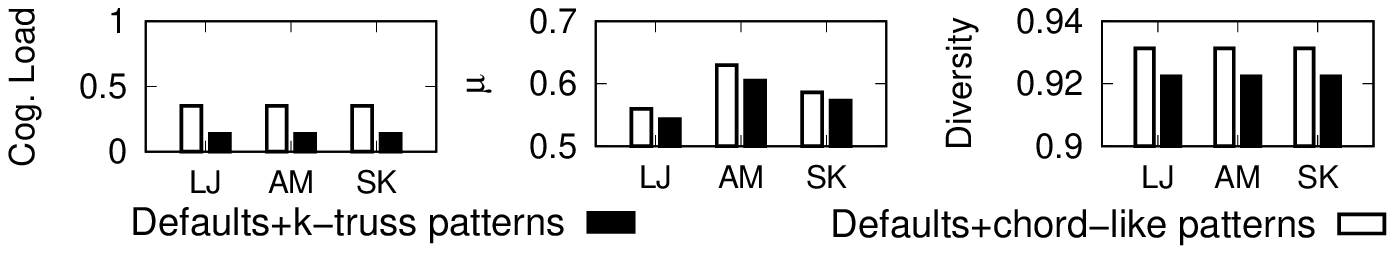}
	\vspace{-5ex}\caption{Chord-like patterns vs $k$-trusses.}\label{fig:kTruss}
	\vspace{-2ex}
\end{figure}

\begin{figure}[!t]
	\centering
	\includegraphics[width=3.1in, height=3.3cm]{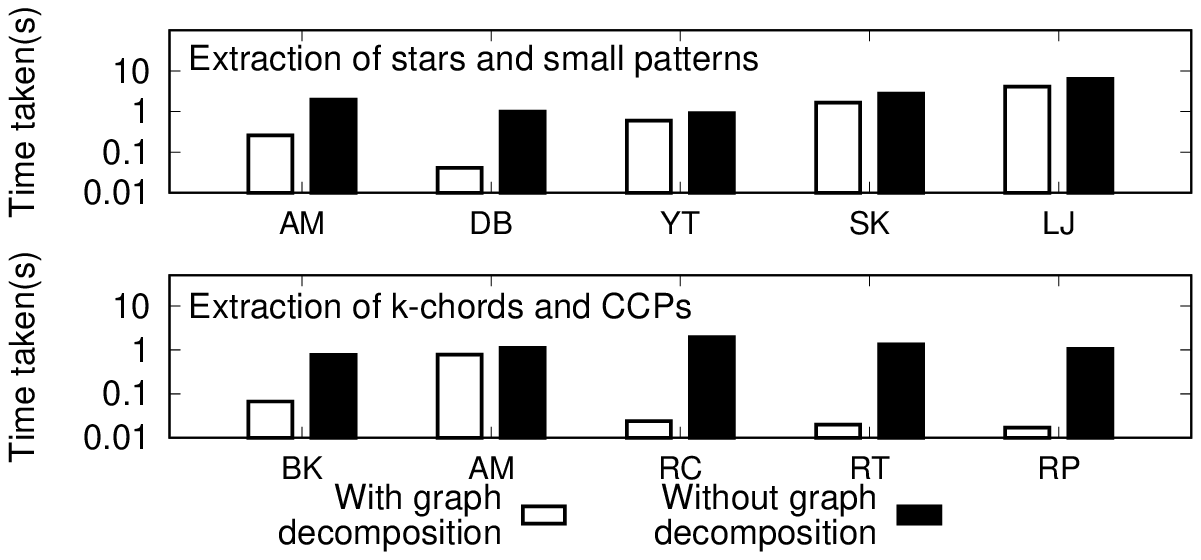}
	\vspace{-2ex}\caption{Effect of graph decomposition.}\label{fig:effectOfGraphDecomposition}
	\vspace{0ex}
\end{figure}

\textbf{Exp 3: Measuring cognitive load.} We now justify the choice of our proposed cognitive load measure. Specifically, we compare several ways of measuring cognitive load of a pattern $p$, namely, $f_{cog1}=\frac{1}{3}\sum_{x\in\{sz_p,d_p,cr_p\}}(1-e^{-x})$; $f_{cog2}=1/(1+e^{-0.5\times(sz_p+d_p+cr_p-10)})$; $f_{cog3}=sz_p+d_p+cr_p$; $f_{cog4}=sz_p\times d_p$ (used in~\cite{catapult}); and $f_{cog5}=cr_p$ (recall $sz_p$, $d_p$, $cr_p$ from Sec.~\ref{sec:quan}). 20 volunteers were asked to rank the visual representations of six graphs (Figure~\ref{fig:cogGraphs}) of varying sizes and topology, in terms of cognitive effort required to interpret these graphs.\eat{ Observe that this task simulates the effort users spend to select a pattern from the canned pattern set by visually inspecting them. Note that $Q1$ to $Q4$ are random graphs obtained from \textsc{am} and \textsc{yt}. We also include two graphs ($Q5$, $Q6$) with 8 vertices and random number of edge crossings to examine the effect of edge crossings.} A ``ground truth'' ranking for these graphs is obtained based on the average ranks assigned by the volunteers. Then, the graphs are ranked according to the five cognitive load measures and compared against the ground truth using Kendall's $\tau$~\eat{ $\tau=\frac{n_c-n_d}{0.5\times n(n-1)}$} \cite{kendall1948}\eat{ where $n$ is the number of observations; $n_c$ and $n_d$ are the number of concordant and discordant pairs, respectively}. $f_{cog2}$ and $f_{cog3}$ achieve the highest $\tau = 1$. We select $f_{cog2}$ as the cognitive load measure since it is in the range of $[0,1]$ and facilitates easy formulation of a non-negative and non-monotone submodular pattern score function (Theorem~\ref{thm:scoreNonNegativeNonMonotone}).

\textbf{Exp 4: Chord patterns vs $k$-trusses.} Next, we show the benefits of using $k$-\textsc{cp}/\textsc{ccp}s (\ie $k$-truss-like structures) compared to simply utilizing $k$-trusses as topology for canned patterns (recall from Section~\ref{sec:pattop}). We generate 100 random queries of size [4-30] from $G_T$ and these yielded 11 $k$-\textsc{cp}/\textsc{ccp}s and 3 $k$-trusses.
Observe that $k$-\textsc{cp}/\textsc{ccp}s improve both $\mu$ and diversity but have poorer cognitive load (Figure~\ref{fig:kTruss}). Here the cognitive load and diversity of a pattern set is the average value for respective measures. Importantly, more $k$-\textsc{cp}/\textsc{ccp}s than $k$-trusses satisfying the plug are generated due to relaxed structure of the former. For instance, the \textsc{rp} dataset produces $266.67\%$ more $k$-\textsc{cp}/\textsc{ccp}s due to the small size of $G_T$ (see Table~\ref{tab:graphDecomposition}). That is, $k$-trusses may not result in sufficient number of canned patterns on a \textsc{gui}. Hence, chord patterns improve the quality of canned patterns in terms of $\mu$ and diversity compared to $k$-trusses and yielded more candidate patterns.

\begin{figure}[!t]
	\centering
	\includegraphics[width=\linewidth]{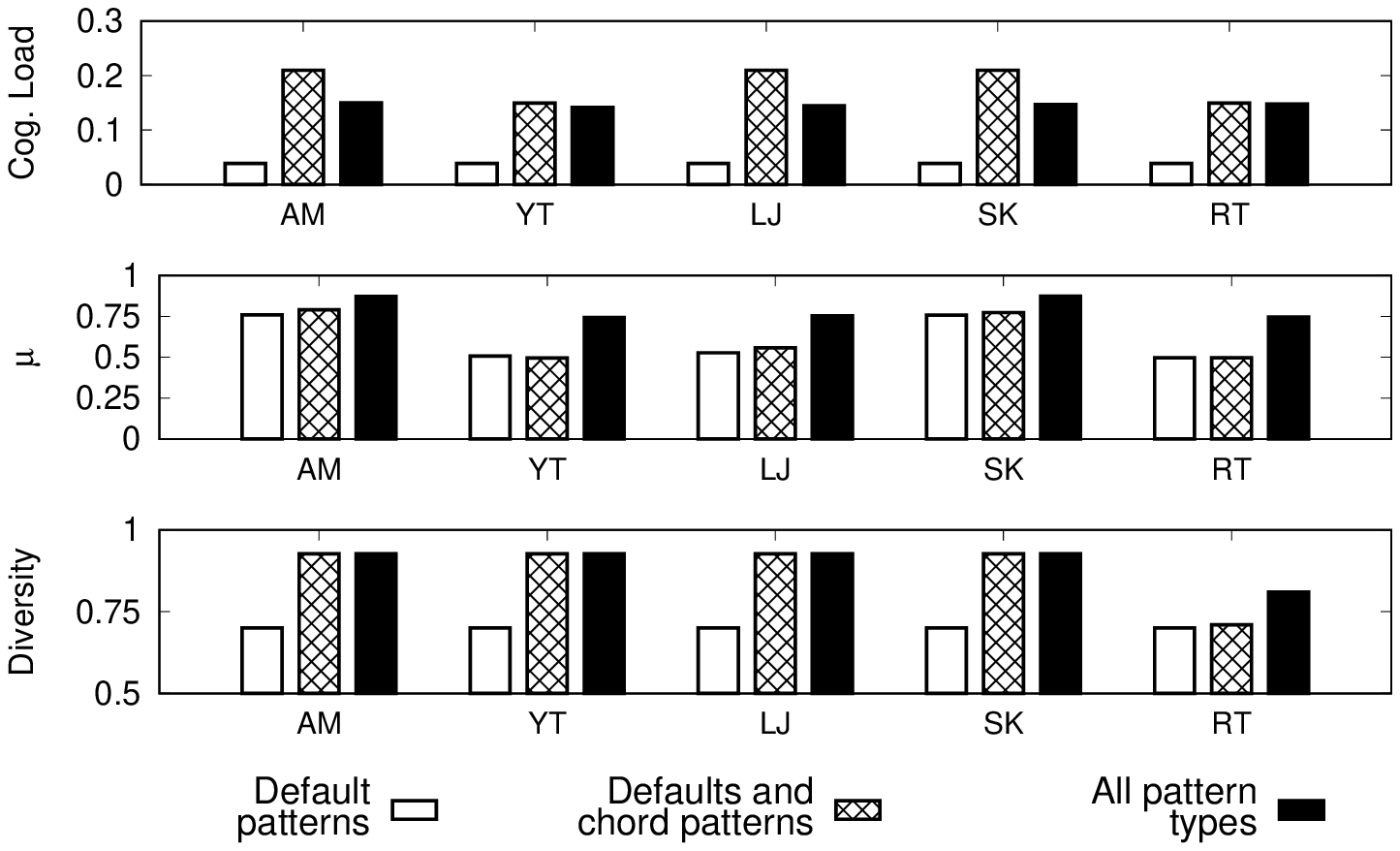}
	\vspace{-2ex}\caption{Patterns from $G_T$ and $G_O$.}\label{fig:patternType_metrics}
	\vspace{0ex}
\end{figure}

\eat{\begin{figure}[!t]
	\centering
	\includegraphics[width=0.95\linewidth, height=1.7cm]{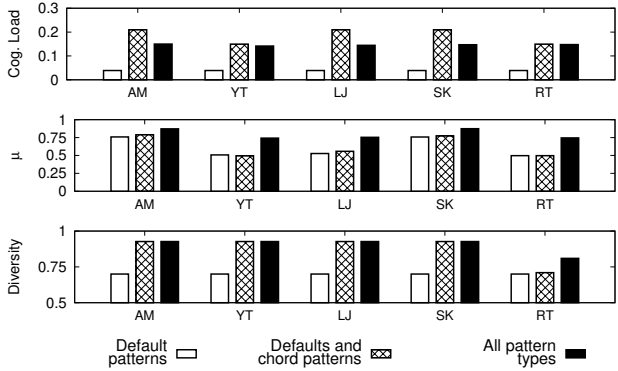}
	\vspace{-3ex}\caption{Patterns from $G_T$ and $G_O$.}\label{fig:patternType_metrics}
	\vspace{-3ex}
\end{figure}}

\textbf{Exp 5: Generating patterns from $G_T$ and $G_O$.} Lastly, we examine the (a) benefits brought by graph decomposition on pattern extraction and (b) the characteristics (\ie average cognitive load, $\mu$, and average diversity) of canned pattern sets generated from $G_T$ and $G_O$ and compare them with the default patterns (\ie $G_T$- and $G_O$-oblivious). Graph decomposition reduces the pattern extraction time for $k$-truss-like structures, as well as, stars and small patterns across \textit{all} datasets (Figure~\ref{fig:effectOfGraphDecomposition}). The effect is most prominent for extraction of $k$-truss-like structures where it is up to 81.5X faster when graph decomposition is applied. For extraction of stars and small patterns, it is up to 24.3X faster with graph decomposition. Hence, this justifies our decision to perform truss-based graph decomposition.  

Figure~\ref{fig:patternType_metrics} reports the characteristics of the canned pattern sets. Note that coverage is 100\% for the three cases as all queries can be constructed using the default pattern $D_1$ (Figure~\ref{fig:patternMining}). On the other hand, \textsc{Tattoo}'s canned patterns achieve $49\%$ and $78\%$ average coverage of $G_T$ and $G_O$, respectively\eat{ (see \cite{tech} for detailed results)}. Observe that patterns obtained from $G_O$ and $G_T$ contribute to higher $\mu$ and greater diversity over the default patterns, respectively. That is, despite 100\% coverage of the latter, it is less efficient (\ie more number of steps for query formulation) than the former. Also notice the increase in cognitive load due to the inclusion of chord-like patterns as they are likely to be denser than others. Hence, \textit{patterns from $G_T$ and $G_O$ complement the default patterns by improving $\mu$ and diversity}.\eat{ That is, choosing patterns from $G_T$ and $G_O$ enable us to generate superior patterns.}\eat{ The results are qualitatively similar for other datasets.}

\eat{\begin{figure}[!t]
		\centering
		\includegraphics[width=3.1in, height=3.5cm]{cogRRDiv_patternTypes_expt1.eps}
		\vspace{-3ex}\caption{Patterns from $G_T$ and $G_O$.}\label{fig:patternType_metrics}
\end{figure}}

\eat{\begin{table}[!t]\caption{Summary of motif generation.}
		\scriptsize
		\centering
		\vspace{0ex}\begin{tabular}{| l | l | l | l | l |}
			\hline
			\textbf{Data} & \textbf{Motif} & \textbf{No. of} & \textbf{Total time} & \textbf{Support range}\\
			&  \textbf{size} & \textbf{motifs} & \textbf{taken (min)} & \\
			\hline
			$AM$ & 3 & 2 & 0.04 & [0.7M - 7M] \\
			\cline{2-5}
			& 4 & 6 & 0.44 & [0.3M - 0.1B] \\
			\cline{2-5}
			& 5 & 21 & 6.86 & [0.1M - 6.3B] \\
			\cline{2-5}
			& 6 & 112 & 46.27 & [3K - 0.5T] \\
			\cline{2-5}
			& 7 & 853 & 2315.4 & [32 - 40T] \\
			\hline
			$YT$ & 3 & 2 & 0.03 & [3M - 1.5B] \\
			\cline{2-5}
			& 4 & 6 & 8.02 & [5M - 5.7T] \\
			\hline
		\end{tabular}\label{tab:peregrine}
		\vspace{-2ex}
\end{table}}

\eat{\begin{figure}[!t]
		\centering
		\includegraphics[width=3.1in, height=1.5cm]{peregrine.eps}
		\vspace{-3ex}\caption{TATTOO vs frequent motifs.}\label{fig:peregrine}
		\vspace{-2ex}
\end{figure}}

\begin{figure*}[!t]
	\centering
	\includegraphics[width=7in, height=7cm]{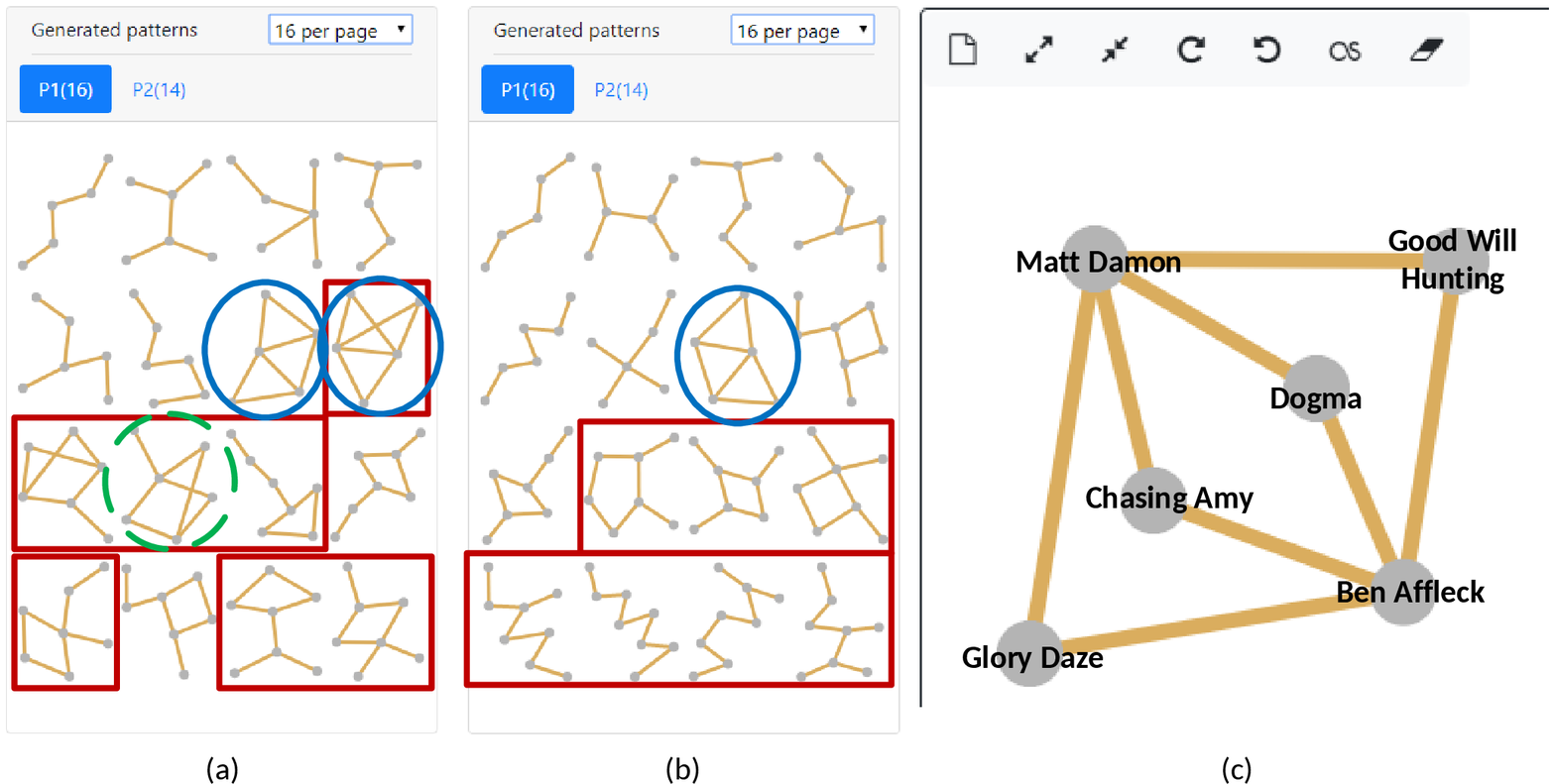}
	\vspace{-2ex}\caption{Case study.}\label{fig:gui}
\end{figure*}
\vspace{0ex}\section{Case Study}\label{sec:case}
In this section, we describe the application of \textsc{Tattoo} for visual query formulation on the \textit{Amazon} and \textit{RoadNet-TX} datasets. We generate 30 canned patterns of sizes between 4 and 15 from each of these datasets. Figures~\ref{fig:gui}(a) and~\ref{fig:gui}(b) depict some of the patterns selected from \textit{Amazon} and \textit{RoadNet-TX}, respectively. Observe that the patterns are different for different datasets. Specifically, patterns in red rectangle boxes in Figures~\ref{fig:gui}(a) are not found in Figures~\ref{fig:gui}(b) and vice versa. This emphasizes the fact that different datasets may expose different collection of canned patterns to aid efficient query formulation. Second, observe that the patterns have low cognitive load as one can easily recognize their topology with a quick glance. In particular, patterns encapsulated by blue ellipses (solid lines) in Figures~\ref{fig:gui}(a) and (b) are examples of some patterns derived from $G_T$.

We now illustrate efficient query construction using canned patterns. Suppose one is interested in making a new movie involving \textit{Ben Affleck} and \textit{Matt Damon}. She would like to identify other actors that have prior working experience with these two actors. She may construct a subgraph query such as the one in Figure~\ref{fig:gui}(c) to query the \textit{Amazon} dataset containing movie titles and associated actors. Query formulation takes 8 steps ($\sim20s$) by utilizing one canned pattern (highlighted by green ellipse with broken line in Figure~\ref{fig:gui}(a)) and adding an edge. Note that the steps taken include vertex label assignment. In comparison, edge-at-a-time  requires a total of 20 steps ($\sim39s$).

\vspace{-1ex}
\section{Conclusions \& Future Work}\label{sec:conclusion}
Canned patterns play a pivotal role in supporting efficient visual subgraph query formulation using direct-manipulation interfaces. We present \textsc{Tattoo}, which takes a data-driven approach to selecting them from the underlying network by exploiting real-world query characteristics and optimizing coverage, diversity, and cognitive load of the patterns. Our experimental study demonstrates superiority of our framework to several baselines. As part of future work, we plan to explore the problem in a distributed settings.\eat{ We believe our decomposition-based strategy is amenable towards this goal.}

\vspace{1ex}\textbf{Acknowledgements.} The first four authors are supported by the AcRF Tier-2 Grant MOE2015-T2-1-040. Wook-Shin Han was supported by Institute of Information \& communications Technology Planning \& Evaluation (IITP) grant funded by the Korea government (MSIT) (No. 2018-0-01398). Byron Choi is supported by HKBU12201518.

\bibliographystyle{abbrv}
\balance

\appendix

\section{Proofs} \label{app:proof}

\vspace{1ex}\noindent\textbf{Proof of Theorem~\ref{thm:cannedPattern} (Sketch).} The \textsc{cps} is a multi-objective optimization problem which can be reformulated as a constrained single-objective optimization problem where the objective function is $\max f_{cov}$ and the constraints are $\min (f_{sim},f_{cog})$. This reformulated problem (\ie $\max f_{cov}$) can be reduced from the maximum coverage problem, which is a classical NP-hard optimization problem \cite{karp1972}. In particular, given a number $k$ and a collection of sets $S$, the maximum coverage problem aims to find a set $S^{\prime}\subset S$ such that $|S^{\prime}|\leq k$ and the number of covered elements is maximized. In \textsc{cps}, the collection of sets $S$ is the set that consists of all possible subgraphs of the graph dataset $D$. The subset $S^{\prime}$ is the canned pattern set and $k$ is the size of the canned pattern set. The number of covered elements corresponds to the number of covered subgraphs in $D$. Note that the reformulated optimization problem is at least as hard as the maximum coverage problem since optimizing the objective may result in solutions that are sub-optimal with regards to additional imposed constraints.

\vspace{1ex}\noindent\textbf{Proof of Lemma~\ref{lem:complexityFindSimpleTrussPatterns} (Sketch).} In Algorithm~\ref{alg:findSimpleTrussPatterns}, the worst-case time complexity is due to Lines~\ref{line:simpleTrussPatternComputeTrussnessPerEdgeStart} to~\ref{line:simpleTrussPatternComputeTrussnessPerEdgeEnd} which computes the trussness of each edge $e\in E_T$ ($O(|\sqrt{E_T}|)$ \cite{wang2012}), updates $freq(C_k)$ and stores $C_k$ in the candidate pattern set. Hence, the worst-case time complexity is $O(|E_T|^{1.5}+|E_T|k_{max})$ since upper bound of $k$ is $k_{max}$. Algorithm~\ref{alg:findSimpleTrussPatterns} uses $O(|E_T|+|V_T|)$ and $O(|E_T|)$ space to hold $G_T$ and $t(e)$, respectively. Further, all possible $k$-chord patterns ($3\geq k\geq k_{max}$) and its frequency have to be stored in the worst-case ($O(k_{max})$). Hence, worst-case space complexity is $O(|E_T|+|V_T|)$ since $|E_T|+|V_T|\gg k_{max}$ for large graph in practice.

\vspace{1ex}\noindent\textbf{Proof of Lemma~\ref{lem:combinedSimple3TrussPattern} (Sketch).} The simple 3-truss pattern $C_3=(V_{c3},E_{c3})$ is simply a triangle. Hence, $\forall e=(u,v)\in E_{c3}$, there is a vertex $w$ that is adjacent to both $u$ and $v$. That is, all edges in $C_3$ have similar structure. Hence, all different types of single edge merger between two $C_3$ produces a pattern with a merged edge $e_m=(x,y)$ and vertices $x$ and $y$ have two common adjacent vertices $w_1$ and $w_2$ which is essentially $C_4$ where its truss edge correspond to the merged edge of the two $C_3$(Figure~\ref{fig:combine2Simple3Truss}).

\begin{figure}[!t]
	\centering
	\includegraphics[width=3.3in]{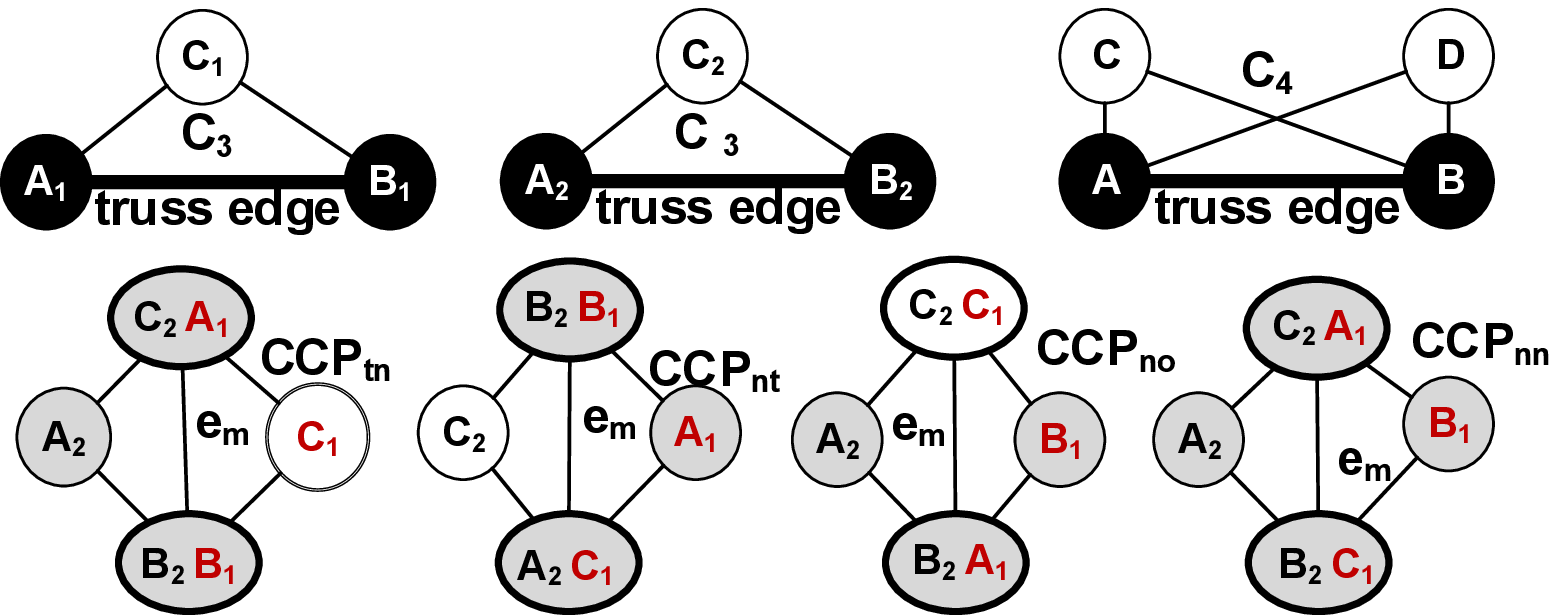}
	\vspace{-2ex}\caption{Combination of two 3-chord patterns.}\label{fig:combine2Simple3Truss}
	\vspace{2ex}
\end{figure}

\vspace{1ex}\noindent\textbf{Proof of Lemma~\ref{lem:cnn} (Sketch).} Observe that $k$-chord pattern on an edge $e=(u,v)$ implies that $k$-2 triangles in the graph contains $e$. Since $\textsc{nb}_{cc}(k,e)$ is the set of nodes $W$ adjacent to $u$ and $v$ such that $\forall w\in W$, $t((u,w))\geq k$ and $t((w,v))\geq k$, $|\textsc{nb}_{cc}(k,e)|$ is equivalent to the number of triangles around $e$. Hence, when $|\textsc{nb}_{cc}(k,e)|\geq (k-2)$, a $k$-chord pattern must exist on $e$.

\vspace{1ex}\noindent\textbf{Proof of Theorem~\ref{lem:complexityFindCombinedTrussPatterns} (Sketch).} In Algorithm~\ref{alg:findCombinedTrussPatterns}, for each edge $e\in E_T$, there are $k_1\times|\textsc{eb}_{cc}(k_1,e_1)|$ iterations that computes the procedures $GetTN$ ($O(k_{max})$), $GetNN$ ($O(k_{max}|\textsc{eb}_{max}|)$) where $\textsc{eb}_{max}$ is the $k$-CCP edge neighbourhood with the largest size. The worst-case time complexity is $O(k_{max}^2|E_T|\times|\textsc{eb}_{max}|^2)$ since $k_{max}$ is the upper bound of $k_1$. Algorithm~\ref{alg:findCombinedTrussPatterns} requires $O(|V_T|+|E_T|)$ and $O(k_{max}|E_T|)$ to store $G_T$ and \textsc{nb}, respectively. In the worst-case, all possible combinations of $CCP_{tn(k_1,k_2)}$, $CCP_{no(k_1,k_2)}$ and $CCP_{nn(k_1,k_2)}$, and their respective frequency are stored\linebreak($O(k_{max}\frac{k_{max}-3}{2})$). The worst-case space complexity is\linebreak $O(k_{max}|E_T|+|V_T|)$ since $k_{max}|E_T|+|V_T|\gg k_{max}\frac{k_{max}-3}{2}$ for large graph in practice.

\vspace{1ex}\noindent\textbf{Proof of Lemma~\ref{lem:complexityFindStarLikePatterns} (Sketch).} In the worst-case, finding the stars and asterism patterns requires performing \textsc{bfs} for each vertex in $V_O$. In the worst case, the graph is strongly connected and every other vertex in $V_O$ is visited during the \textsc{bfs}. Hence, the worst-case time complexity is $O(|V_O|^2)$. Algorithm~\ref{alg:findStarPatterns} requires $O(|V_O|+|E_O|)$ space for storing $G_O$. In the worst-case, there are $deg_{max}-\epsilon+1$ and $\frac{deg_{max}-\epsilon+1}{2}(1+(deg_{max}-\epsilon+1))$ possible $S_k$ and $A_S$, respectively. Since $deg_{max}$ occurs when every node $v\in V_O$ is connected to every other nodes in $V_O$, $deg_{max}$ has worst-case complexity $O(|V_O|)$. Hence, storage of $S_k$ and $A_S$ requires $O(|V_O|)$ and $O(|V_O|^2)$, respectively and Algorithm~\ref{alg:findStarPatterns} requires $O(|V_O|+|E_O|)$ space in the worst-case.

\vspace{1ex}\noindent\textbf{Proof of Lemma~\ref{lem:complexityFindSmallPatterns} (Sketch).} In Algorithm~\ref{alg:findSmallPatterns}, the worst-case time complexity is due to the graph isomorphism check ($O(\eta_{max}!\eta_{max})$ \cite{cordella2004}) on Line~\ref{line:findSmallPatternsFindUncommonIsomorphicStart} which is within a for-loop with maximum of $|V_R|$ iterations. Hence, the worst-case time complexity is $O(\eta_{max}|V_R|\eta_{max}!)$. Algorithm~\ref{alg:findSmallPatterns} requires $O(|V_R|+|E_R|)$ space for storing $G_R$. Since every $k$-path ($P_k$), $k$-cycle ($Y_k$) and subgraphs with unique topology ($U$) consists of multiple nodes, the number of possible $P_k$, $Y_k$ and $U$ is less than $|V_R|$ and the storage required will be $O(|V_R|)$. Hence, the worst-case space complexity is $O(|V_R|+|E_R|)$.

\vspace{1ex}\noindent\textbf{Proof of Lemma~\ref{lem:coverageSubmodular}.} Submodular functions satisfies the property of diminishing marginal returns. That is given a set of $n$ elements ($N$), a function $f(.)$ is submodular if for every $A\subseteq B\subseteq N$ and $j\notin B$, $f(A\bigcup \{j\})-f(A)\geq f(B\bigcup \{j\})-f(B)$. Given a graph $G$ and canned pattern sets $\mathcal{P}_A$ and $\mathcal{P}_B$ where $\mathcal{P}_A\subseteq\mathcal{P}_B$, let the coverage of $\mathcal{P}_A$ and $\mathcal{P}_B$ be $f_{cov}(\mathcal{P}_A)$ and $f_{cov}(\mathcal{P}_B)$, respectively. Observe that $\mathcal{P}_B$ consists of $\mathcal{P}_A$ and additional patterns (\ie $\mathcal{P}^{\prime}=\mathcal{P}_B\setminus\mathcal{P}_A$). For each canned pattern $p\in\mathcal{P}^{\prime}$, we let $s=\min(|f_{cov}(p)|,|f_{cov}(\mathcal{P}_A)|)$ and $K$ denotes the overlapping set $f_{cov}(p)\bigcap f_{cov}(\mathcal{P}_A)$. The coverage of $p$ falls under one of four possible scenarios, namely, (1) $K=f_{cov}(p)$ if $s=|f_{cov}(p)|$, (2) $K=f_{cov}(\mathcal{P}_A)$ if $s=|f_{cov}(\mathcal{P}_A)|$, (3) $K$ is an empty set and (4) otherwise (\ie $0<|f_{cov}(p)\bigcap f_{cov}(\mathcal{P}_A)|<s$).

In the case where coverage of every $p$ falls under scenario 1, then $f_{cov}(\mathcal{P}_A)=f_{cov}(\mathcal{P}_B)$. Should any $p$ falls under scenario 2, 3 or 4, then $f_{cov}(\mathcal{P}_A)\subset f_{cov}(\mathcal{P}_B)$. Hence, $f_{cov}(\mathcal{P}_A)\subseteq f_{cov}(\mathcal{P}_B)$. Consider a canned pattern $p^{\prime}\notin\mathcal{P}_B$, let $t=\min(|f_{cov}(p^{\prime})|,|f_{cov}(\mathcal{P}_A)|)$. Suppose $f_{cov}(p^{\prime})\bigcap f_{cov}(\mathcal{P}_A)=f_{cov}(p^{\prime})$ where $|f_{cov}(p^{\prime})|<|f_{cov}(\mathcal{P}_A)|$ (Scenario 1), then $f_{cov}(\mathcal{P}_A\bigcup \{p^{\prime}\})-f_{cov}(\mathcal{P}_A)$ is an empty set. Note that we use the minus and set minus operator interchangeably in this proof. Since $f_{cov}(\mathcal{P}_A)\subseteq f_{cov}(\mathcal{P}_B)$, $f_{cov}(\mathcal{P}_B\bigcup \{p^{\prime}\})=f_{cov}(\mathcal{P}_B)$. Hence, $f_{cov}(\mathcal{P}_A\bigcup \{p^{\prime}\})-f_{cov}(\mathcal{P}_A)= f_{cov}(\mathcal{P}_B\bigcup \{p^{\prime}\})\linebreak-f_{cov}(\mathcal{P}_B)$.

Now, consider $f_{cov}(p^{\prime})\bigcap f_{cov}(\mathcal{P}_A)=f_{cov}(\mathcal{P}_A)$ where\linebreak$|f_{cov}(p^{\prime})|>|f_{cov}(\mathcal{P}_A)|$ (Scenario 2). $f_{cov}(\mathcal{P}_A\bigcup \{p^{\prime}\})-f_{cov}(\mathcal{P}_A)\linebreak=f_{cov}(p^{\prime})-f_{cov}(\mathcal{P}_A)$ where $f_{cov}(\mathcal{P}_A)\subset f_{cov}(p^{\prime})$. Let $L$ and $M$ be $f_{cov}(p^{\prime})\setminus f_{cov}(\mathcal{P}_A)$ and $f_{cov}(\mathcal{P}_B)\setminus f_{cov}(\mathcal{P}_A)$, respectively. Observe that, similar to previous observation, it is possible for (1) $L$ to be fully contained in $M$ if $|L|<|M|$, (2) $M$ to be fully contained in $L$ if $|M|<|L|$, (3) $L\bigcap M$ to be empty or (4) otherwise (\ie $0<|L\bigcap M|<t$ where $t=\min(|L|,|M|)$). Hence, $|L\bigcap M|\in[0,t]$. When $|L\bigcap M|=0$, $f_{cov}(\mathcal{P}_A\bigcup \{p^{\prime}\})-f_{cov}(\mathcal{P}_A)= f_{cov}(\mathcal{P}_B\bigcup \{p^{\prime}\})-f_{cov}(\mathcal{P}_B)$. Otherwise, there are some common graphs covered by $L$ and $M$, resulting in $f_{cov}(\mathcal{P}_B\bigcup \{p^{\prime}\})-f_{cov}(\mathcal{P}_B)=L\setminus (L\bigcap M)$. Hence, $|f_{cov}(\mathcal{P}_A\bigcup \{p^{\prime}\})\linebreak-f_{cov}(\mathcal{P}_A)|>|f_{cov}(\mathcal{P}_B\bigcup \{p^{\prime}\})-f_{cov}(\mathcal{P}_B)|$. Taken together, for scenario 2, $|f_{cov}(\mathcal{P}_A\bigcup \{p^{\prime}\})-f_{cov}(\mathcal{P}_A)|\geq |f_{cov}(\mathcal{P}_B\bigcup \{p^{\prime}\})-f_{cov}(\mathcal{P}_B)|$.

For scenario 3, it is similar to scenario 2 where $L$ is $f_{cov}(p^{\prime})$ instead of $f_{cov}(p^{\prime})\setminus f_{cov}(\mathcal{P}_A)$. $f_{cov}(\mathcal{P}_A\bigcup \{p^{\prime}\})-f_{cov}(\mathcal{P}_A)=L$ and $f_{cov}(\mathcal{P}_B\bigcup \{p^{\prime}\})-f_{cov}(\mathcal{P}_B)=L\setminus (L\bigcap M)$. Since $|L\bigcap M|\in[0,t]$, $|f_{cov}(\mathcal{P}_A\bigcup \{p^{\prime}\})-f_{cov}(\mathcal{P}_A)|\geq |f_{cov}(\mathcal{P}_B\bigcup \{p^{\prime}\})-f_{cov}(\mathcal{P}_B)|$.

For scenario 4, it is the same as scenario 3 except that $L=f_{cov}(p^{\prime})\setminus (f_{cov}(\mathcal{P}_A)\bigcap f_{cov}(p^{\prime}))$. Observe that $|f_{cov}(\mathcal{P}_A\bigcup \{p^{\prime}\})-f_{cov}(\mathcal{P}_A)|\geq |f_{cov}(\mathcal{P}_B\bigcup \{p^{\prime}\})-f_{cov}(\mathcal{P}_B)|$ due to $|L\bigcap M|\in[0,t]$.

Hence, in all cases, $|f_{cov}(\mathcal{P}_A\bigcup \{p^{\prime}\})-f_{cov}(\mathcal{P}_A)|\geq\linebreak|f_{cov}(\mathcal{P}_B\bigcup \{p^{\prime}\})-f_{cov}(\mathcal{P}_B)|$ applies and $f_{cov}(.)$ is submodular.

\vspace{1ex}\noindent\textbf{Proof of Lemma~\ref{lem:similaritySubmodular}.} We begin by stating the \textit{first order difference}. Given a submodular function $f(.)$, for every $\mathcal{P}_A\subseteq\mathcal{P}_B\subseteq D$ and every $p\subset D$ such that $p\notin\mathcal{P}_A,\mathcal{P}_B$, the first order difference states that $f(\mathcal{P}_A\bigcup \{p\})-f(\mathcal{P}_A)\geq f(\mathcal{P}_B\bigcup \{p\})-f(\mathcal{P}_B)$.

Given a graph $G$, a canned pattern $p\notin\mathcal{P}_B$ and canned pattern sets $\mathcal{P}_A$ and $\mathcal{P}_B$ where $\mathcal{P}_A\subseteq\mathcal{P}_B$, let the similarity of $\mathcal{P}_A$ and $\mathcal{P}_B$ be $f_{sim}(\mathcal{P}_A)$ and $f_{sim}(\mathcal{P}_B)$, respectively. $f_{sim}(\mathcal{P}_B\bigcup \{p\})-f_{sim}(\mathcal{P}_B)=\sum_{p_i\in\mathcal{P}_B}sim(p,p_i)$ and $f_{sim}(\mathcal{P}_A\bigcup \{p\})-f_{sim}(\mathcal{P}_A)=\sum_{p_i\in\mathcal{P}_A}sim(p,p_i)$. Since $sim(p_i,p_j)\geq 0$ $\forall p_i,p_j\subset G$, $\mathcal{P}_A\subseteq\mathcal{P}_B$ and by definition of the first order difference, $f_{sim}(.)$ is supermodular. The proof is similar for $f_{cog}(.)$.

\vspace{1ex}\noindent\textbf{Proof of Theorem~\ref{thm:scoreNonNegativeNonMonotone} (Sketch).} Consider a partial pattern set $\mathcal{P}^{\prime}$ and a candidate pattern $p$. Suppose $p$ does not improve the set coverage of $\mathcal{P}^{\prime}$ and adds a high cost in terms of cognitive load and diversity. Then, $s(\mathcal{P}^{\prime})>s(\mathcal{P}^{\prime}\bigcup \{p\})$. Hence, the score function $s(.)$ is \textit{non-monotone}. Since $f_{cov}(\mathcal{P}^{\prime}),f_{sim}(\mathcal{P}^{\prime}),f_{cog}(\mathcal{P}^{\prime})\in[0,|\mathcal{P}^{\prime}|]$, $f_{cov}(\mathcal{P}^{\prime})-f_{sim}(\mathcal{P}^{\prime})-f_{cog}(\mathcal{P}^{\prime})$ is in the range [-2$|\mathcal{P}^{\prime}|$,$|\mathcal{P}^{\prime}|$]. Hence, $\frac{1}{3|\mathcal{P}^{\prime}|}(f_{cov}(\mathcal{P}^{\prime})-f_{sim}(\mathcal{P}^{\prime})-f_{cog}(\mathcal{P}^{\prime})+2|\mathcal{P}^{\prime}|)$ (Definition~\ref{def:patternScore}) is in the range $[0,1]$ and is \textit{non-negative}. Since supermodular functions are negations of submodular functions and that non-negative weighted sum of submodular functions preserve submodular property \cite{fujishige2005}, $s(\mathcal{P}^{\prime})$ is submodular. Note that adding a constant (\ie $\frac{2}{3}$) does not change the submodular property~\cite{bhowmik2014} and ensures that $s(\mathcal{P}^{\prime})$ is non-negative. The scaling factors of $\alpha_{f_{cov}}=\alpha_{f_{sim}}=\alpha_{f_{cog}}=\frac{1}{3|\mathcal{P}^{\prime}|}$ further bounds $s(\mathcal{P}^{\prime})$ within the range $[0,1]$.

\vspace{1ex}\noindent\textbf{Proof of Lemma~\ref{lem:crossing}.} Consider a graph $G=(V,E)$ with $cr$ crossings. Since each crossing can be removed by removing an edge from $G$, a graph with $|E|-cr$ edges and $|V|$ vertices containing no crossings (\ie planar graph). Since $|E|\leq 3|V|-6$ for planar graph (\ie Euler's formula), hence, $|E|-cr\leq 3|V|-6$ for $|V|\geq 3$. Rewriting the inequality, we have $cr\geq|E|-3|V|+6$.

\vspace{1ex}\noindent\textbf{Proof of Theorem~\ref{thm:greedyApprox}.} Let $A_i$ be an event fixing all the random decisions of \textit{Greedy} for every iteration $i$ and $\mathcal{A}_i$ be the set of all possible $A_i$ events. We denote $s(\mathcal{P}_{i-1}\bigcup \{p_i\})-s(\mathcal{P}_{i-1})$ as $s_{p_i}(\mathcal{P}_{i-1})$. Further, let the desired size of $\mathcal{P}$ be $\gamma$, $1\leq i\leq\gamma$ and $A_{i-1}\in\mathcal{A}_{i-1}$. Unless otherwise stated, all the probabilities, expectations and random quantities are implicitly conditioned on $A_{i-1}$. Consider a set $M^{\prime}_i$ containing the patterns of $OPT\setminus\mathcal{P}_{i-1}$ plus enough dummy patterns to make the size of $M^{\prime}_i$ exactly $\gamma$.

Note that $\mathbb{E}[s_{p_i}(\mathcal{P}_{i-1})]=\gamma^{-1}\cdot\sum_{p\in M_i} s_{p}(\mathcal{P}_{i-1})\geq \gamma^{-1}\cdot\sum_{p\in M^{\prime}_i}\linebreak s_{p} (\mathcal{P}_{i-1})=\gamma^{-1}\cdot\sum_{p\in OPT\setminus\mathcal{P}_{i-1}}s_{p}(\mathcal{P}_{i-1})\geq\frac{s(OPT\bigcup\mathcal{P}_{i-1})-s(\mathcal{P}_{i-1})}{\gamma}$ \cite{buchbinder2014}, where the first inequality follows from the definition of $M_i$ (\ie set of ``good'' candidate patterns) and the second from the submodularity of $s(.)$ Unfixing the event $A_{i-1}$ and taking an expectation over all possible such events, $\mathbb{E}[s_{p_i}(\mathcal{P}_{i-1})]\geq\frac{\mathbb{E}[s(OPT\bigcup\mathcal{P}_{i-1})]-\mathbb{E}[s(\mathcal{P}_{i-1})]}{\gamma}
\geq\frac{(1-\frac{1}{\gamma})^{i-1}\cdot s(OPT)-\mathbb{E}[s(\mathcal{P}_{i-1})]}{\gamma}$, where the second inequality is due to observation that for every $0\geq i\geq\gamma$, $\mathbb{E}[s(OPT\bigcup\mathcal{P}_i)]\geq(1-\frac{1}{\gamma})^i\cdot s(OPT)$\cite{buchbinder2014}.

We now prove by induction that $\mathbb{E}[s(\mathcal{P}_i)]\geq\frac{i}{\gamma}\cdot(1-\frac{1}{\gamma})^{i-1}\cdot s(OPT)$. Note that this is true for $i=0$ since $s(\mathcal{P}_0)\geq 0=\frac{0}{\gamma}\cdot(1-\frac{1}{\gamma})^{-1}\cdot s(OPT)$. Further, we assume that the claim holds for every $i^{\prime}<i$. Now, we prove it for $i>0$. $\mathbb{E}[s(\mathcal{P}_i)]=\mathbb{E}[s(\mathcal{P}_{i-1})]+\mathbb{E}[s_{p_i}(\mathcal{P}_{i-1})]
\geq\mathbb{E}[s(\mathcal{P}_{i-1})]+\frac{(1-\frac{1}{\gamma})^{i-1}\cdot s(OPT)-\mathbb{E}[s(\mathcal{P}_{i-1})]}{\gamma}=(1-\frac{1}{\gamma})\cdot\mathbb{E}[s(\mathcal{P}_{i-1})]
+\gamma^{-1}(1-\frac{1}{\gamma})^{i-1}\cdot s(OPT)\geq(1-\frac{1}{\gamma})\cdot[\frac{i-1}{\gamma}\cdot(1-\frac{1}{\gamma})^{i-2}\cdot s(OPT)]+\gamma^{-1}(1-\frac{1}{\gamma})^{i-1}\cdot s(OPT)=[\frac{i}{\gamma}]\cdot(1-\frac{1}{\gamma})^{i-1}\cdot s(OPT)$. Hence, $\mathbb{E}[s(\mathcal{P}_k)]\geq\frac{\gamma}{\gamma}\cdot(1-\frac{1}{\gamma})^{\gamma-1}\cdot s(OPT)\geq e^{-1}\cdot s(OPT)$. That is, Alg.~\ref{alg:greedy} achieves $\frac{1}{e}$-approximation of \textsc{cps}.

\vspace{1ex}\noindent\textbf{Proof of Theorem~\ref{thm:greedyComplexity} (Sketch).} Let $G_{max}=(V_{max},E_{max})$ be the largest candidate pattern in $P_{all}$. In the worst-case, time complexity of Algorithm~\ref{alg:greedy} is $O(|P_{all}|\gamma|V_{max}|!|V_{max}|)$ since there are $|P_{all}|$ candidate patterns and the while-loop in Algorithm~\ref{alg:greedy} iterates at most $\gamma$ times. For each iteration, the score function requires computation of coverage, cognitive load and redundancy which requires $O(|V_{max}|!|V_{max}|)$, $O(|V_{max}|+|E_{max}|)$ and $O(|V_{max}|+|V_{max}|log(|V_{max}|)$~\cite{berlingerio2013}, respectively. Note that $|V_{max}|log(|V_{max}|)\approx|E_{max}|$ in real-world graphs~\cite{berlingerio2013}. The space complexity is due to storage of all candidate patterns. Hence, Algorithm~\ref{alg:greedy} has space complexity of $O(|P_{all}|(|V_{max}|+|E_{max}|))$.

\eat{	\vspace{1ex}\noindent\textbf{Proof of Lemma~\ref{lem:ccpMinSize} (Sketch).} The smallest \textsc{ccp} is formed from merging a 3-\textsc{cp} and a 4-\textsc{cp} since \textsc{ccp} formed from two 3-\textsc{cp}s is actually a 4-\textsc{cp} and not exactly a \textsc{ccp} (Lemma~\ref{lem:combinedSimple3TrussPattern}). Since \textsc{ccp} is formed from merging an edge of two \textsc{cp}s, the size of the smallest \textsc{ccp} is 7 (\textit{i.e.}, $|\textsc{ccp}(3,4)|=|E_{c3}|+|E_{c4}|-1=7$)
}

\end{document}